\newcommand{\ds}{\displaystyle}
\newcommand{\R}{\mathbb{R}}
\newcommand{\C}{\mathbb{C}}
\newcommand{\K}{\mathbb{K}}
\newcommand{\OConf}{\mathrm{Conf}}
\newcommand{\UConf}{\mathrm{C}}
\newcommand{\VB}{\mathcal{VB}}
\newcommand{\bA}{\mathbf{A}}
\newcommand{\bC}{\mathbf{C}}
\newcommand{\bV}{\mathbf{V}}
\newcommand{\bW}{\mathbf{W}}
\newcommand{\bI}{\mathbf{I}}
\newcommand{\bP}{\mathbf{P}}
\newcommand{\bT}{\mathbf{T}}
\newcommand{\bS}{\mathbf{S}}
\newcommand{\bL}{\mathbf{\Lambda}}
\newcommand{\bSigma}{\mathbf{\Sigma}}
\newcommand{\D}{\mathcal{D}}
\renewcommand{\L}{\mathcal{L}}
\newcommand{\F}{\mathrm{F}}
\newcommand{\GL}{\mathrm{GL}}
\newcommand{\id}{\mathrm{id}}
\newcommand{\Hom}{\mathrm{Hom}}
\newcommand{\Mor}{\mathrm{Mor}}
\newcommand{\USh}{\mathrm{USh}}
\newcommand{\Dens}{\mathrm{Dens}}
\newcommand{\sgn}{\mathrm{sgn}}
\newcommand{\exttens}{{\boxtimes^{\mbox{\tiny{ext}}}}}
\newcommand{\boxedsymbol}[2]{%
 \begingroup
 \setlength{\fboxsep}{0pt}%
 \fbox{%
 $\m@th#1\mspace{-1.25mu}#2\mspace{-1.25mu}$%
 }%
 \endgroup
}
\newcommand{\boxwedge}{\mathbin{\mathpalette\boxedsymbol\wedge}}
\newtheorem{theorem}{Theorem}[subsection]
\newtheorem{proposition}[theorem]{Proposition}
\newtheorem{corollary}[theorem]{Corollary}
\newtheorem{lemma}[theorem]{Lemma}
\newtheorem{defin}[theorem]{Definition}
\newenvironment{definition}{\begin{defin} \em}{\end{defin}\par\vspace{.2cm}}
\newtheorem{rem}[theorem]{Remark}
\newenvironment{remark}{\begin{rem} \em}{\hfill$\Box$\end{rem}\par\vspace{.2cm}}
\newtheorem{exa}[theorem]{Example}
\newenvironment{example}{\begin{exa} \em}{\end{exa}\par\vspace{.2cm}}
\newenvironment{proof}{\bigskip \noindent{\bf Proof.\/}} {\hfill$\Box$\par\vspace{.2cm}}
\begin{document}

\title{Poisson bundles over unordered configurations}

\author{Alessandra Frabetti\thanks{Université Lyon 1, Centrale Lyon, INSA Lyon, Université Jean Monnet, CNRS, ICJ UMR5208, 69622 Villeurbanne, France.} ~ and Olga Kravchenko\footnotemark[1] ~and Leonid Ryvkin\footnotemark[1] \thanks{University of Göttingen}}

\date{\today}
\maketitle
\begin{center}
 \textit{In memory of Krzysztof Gaw\k{e}dzki}
\end{center}
\begin{abstract}
In this paper we construct a Poisson algebra bundle whose distributional sections are suitable to represent multilocal observables in classical field theory. 
To do this, we work with vector bundles over the unordered configuration space of a manifold $M$ and consider the structure of a $2$-monoidal category given by the usual (Hadamard) tensor product of bundles and a new (Cauchy) tensor product which provides a symmetrized version of the usual external tensor product of vector bundles on $M$. 
We use the symmetric algebras with respect to both products to obtain a Poisson 2-algebra bundle mimicking the construction of Peierls bracket from the causal propagator in field theory. 
The explicit description of observables from this Poisson algebra bundle will be carried out in a forthcoming paper. 
\end{abstract}

\tableofcontents

%%%%%%%%%%%%%%%%%%%%%%%%%%%%%%%%%%%%%%%%%%%%%%%%%%%%%%%%

\subsection*{Acknowledgements}

The authors would like to thank Stanislas Herscovich, Antonio Miti, Hai Chau Nguyen and Volodya Roubtsov for fruitful discussions related to the project. The authors have worked on the present manuscript during their stay at IHP in the summer of 2023 and thank the grant Accueil EC 2023-2024 of the Conseil Académique de l’UCBL and the LabEx MiLyon. L.R. acknowledges the support of the DFG through the grant Higher Lie Theory. A.F. thanks the RTG2491 for supporting multiple visits to the Mathematical Institute in Göttingen. For the purpose of Open Access, a CC-BY-NC-SA public copyright license has been applied by the authors to the present document and will be applied to all subsequent versions up to the Author Accepted Manuscript arising from this submission.

\section{Introduction}
\label{sec:intro}

In relativistic field theory, it is a longstanding open problem to describe the algebra of observables in a fully covariant way. In classical field theory, this problem motivated the introduction of multisymplectic geometry \cite{Gotay-Isenberg-Marsden-Montgomery-1998,Forger-Romero-2005,Blohmann-2023}; in quantum field theory, it gave rise to perturbative Algebraic Quantum Field Theory \cite{Brunetti-Fredenhagen-Verch-2003, Rejzner-2016}. One of the key open questions is the covariant description of multilocal observables, which are products of local observables. 

For a field theory given by a vector bundle $E\to M$ over a spacetime manifold $M$, local observables can be described by (distributional) sections of the dual vector bundle $(JE)^*\otimes \Dens_M$ of the jet bundle (of a given order) with values in densities. Multilocal observables are usually related to integrals of sections of the repeated external product of $(JE)^*\otimes \Dens_M$ over an arbitrary number of copies of $M$, but this representation is not consistent, since the product of observables is commutative while the external tensor product of vector bundles is not. 

In this article and its companion \cite{Frabetti-Kravchenko-Ryvkin-2024}, we propose a geometric description of multilocal observables in terms of distributional sections of vector bundles over a comprehensive base manifold compatible with the switch of points in $M$. 
Hence in this first part of the program we develop the algebraic / geometric background, whereas in the second part we focus on functional-analytic considerations and show the application in field theory.\\

The key algebraic / geometric ingredients to get a covariant Poisson algebra of observables are:
\begin{itemize}
\item The space $\UConf(M)$ of unordered configurations of distinct points of $M$, equipped with the structure of a manifold (of nonpure dimension). 
Configuration spaces appear in many branches of mathematics since the 1960s \cite{Fox-Neuwirth-1962}, especially in topology \cite{Cohen-Cohen-Kuhn-Neisendorfer-1983} \cite{Eilenberg-MacLane-1953}, and are currently used in physics to describe spaces of indistinguishable particles \cite{Bloore-1980}. Details can be found in the very nice recent survey by S. Kallel \cite{Kallel-2024}. 

\item A Cauchy tensor product $\boxtimes$ in the category $\VB(\UConf(M))$ of vector bundles over $\UConf(M)$, which relies on the possibility to switch base points in $M$ and provides a symmetrized version of the usual external tensor product of vector bundles over $M$. The symmetric algebra with respect to $\boxtimes$ is a good candidate to describe multilocal observables.
An analog Cauchy tensor product exists for graded vector spaces \cite[Chapter 11.5]{Bourbaki-1989}, species, and other natural $\mathbb{S}$-modules \cite[Section 2.1]{Joyal-1986} \cite{Aguiar-Mahajan-2010}, but to our knowledge this version on vector bundles is new. 

\item The structure of a 2-monoidal category on $\VB(\UConf(M))$, where $\boxtimes$ is completed by a Hadamard tensor product $\otimes$ necessary to describe the Poisson algebra of observables induced by a kernel given on generators. For this, we introduce the new notion of a Poisson 2-algebra bundle. In the companion paper \cite{Frabetti-Kravchenko-Ryvkin-2024}, we show that the standard Poisson algebra structure of relativistic fields \cite[Section 4.4]{Rejzner-2016}, given by the Peierls bracket \cite{Peierls-1952}, is of this type. 
Symmetric $2$-monoidal categories have been introduced in \cite{Aguiar-Mahajan-2010} in the context of species. The idea that they are the appropriate categorical framework to describe Quantum Field Theory is due to S. Herscovich \cite{Herscovich-2019}, after R. Borcherds \cite{Borcherds-2011}. The classical Poisson algebra we introduce will lead to the same quantum version, via the quantization by Laplace pairing deformation as in \cite{Brouder-2009} \cite{Brouder-Fauser-Frabetti-Oeckl-2004}. 
A direct link to the species was made very recently in \cite{Norledge-2020}. 
\end{itemize}

The article is structured as follows. In Section \ref{sec:2} we study the category $\VB(\UConf(M))$ of vector bundles over the (unordered) configuration space $\UConf(M)$, starting from the basic definitions. In Subsection \ref{subsec:HC} we construct the new monoidal structure $\boxtimes$ on $\VB(\UConf(M))$ (called the Cauchy tensor product) and conclude that together with the usual tensor product $\otimes$ (which we refer to as the Hadamard tensor product), it forms a symmetric 2-monoidal category.\\

In Section \ref{sec:3} we describe algebra bundles with respect to these two monoidal structures. In Subsection \ref{subsec:Cauchy algebra bundles} we introduce our key examples, the Cauchy tensor algebra bundle $\bT^\boxtimes(V)$ and the Cauchy symmetric algebra bundle $\bS^{\boxtimes}(V)$ of a vector bundle $V\to M$, and show a striking effect of the Cauchy monoidal structure: the fibres of the vector bundle $\bS^\boxtimes(V)\to \UConf(M)$ are isomorphic to those of the exterior tensor product of $V$ over the collection of $k$-fold manifolds $M^k$, as vector spaces, even though the algebra bundle $\bS^\boxtimes(V)$ is commutative (of course with respect to the monoidal structure $\boxtimes$) while the collection of exterior tensor products of $V$ is not. Without changing the value of bundles, then the Cauchy monoidal structures allow us to consider symmetric sections of many points of $M$. 
In Subsection \ref{subsec:density} we show that the densities on $\UConf(M)$ are exactly the $\boxtimes$-symmetric algebra generated by the densities on $M$. 
In Subsection \ref{subsec:Cauchy-Hadamard 2-algebra bundles} we discuss 2-algebras, i.e. algebras with two multiplications with respect to the Cauchy and to the Hadamard tensor product, related by a natural distributional law induced by the interchange map of the $2$-monoidal category.\\

In Section \ref{sec:4} we discuss Poisson algebra bundles over $\UConf(M)$. 
We start by introducing the notion of a Poisson $2$-algebra bundle in Subsection \ref{subsec:Poisson 2-algebra}, which allows us to construct a canonical Poisson structure on the $2$-algebra bundle $\bS^\boxtimes S^\otimes (V)$ for any vector bundle $V$ on $M$ endowed with a skew-symmetric bundle map $k:V\boxtimes V\to \K$. 
In Subsection \ref{subsec:Poisson Cauchy algebra} we prove that this Poisson structure extends to the $\boxtimes$-algebra $\bS^\boxtimes S^\otimes(V) \otimes \Dens_{\UConf(M)}$, and in Subsection \ref{subsec:sections} we prove that it induces a Poisson algebra structure on its space of smooth sections. 
The paper ends with a concluding example coming from field theory which serves as a motivation for the second part \cite{Frabetti-Kravchenko-Ryvkin-2024} of our work: for $V=(JE)^*$, it is the bundle $\bS^\boxtimes S^\otimes (JE)^*\otimes \Dens_{\UConf(M)}$ on $\UConf(M)$ which carries the Poisson structure for observables of the fields $\varphi:M\to E$. The kernel $k$ of the Poisson bracket (or, more precisely, a distributional version of $k$) represents a pre-symplectic form on $\Gamma(M,E)$ usually given by the causal propagator defined by the Lagrangian density of the chosen field theory. In \cite{Frabetti-Kravchenko-Ryvkin-2024} we discuss which is the good space of distributional sections of this Poisson bundle which should be taken to represent off-shell observables of relativistic fields. 

%%%%%%%%%%%%%%%%%%%%%%%%%%

\section{The $2$-monoidal category of vector bundles over configuration spaces}
\label{sec:2}

The external tensor product of vector bundles over a manifold $M$ gives vector bundles over the cartesian product $M\times M$, and therefore it does not preserve the base manifold. 
To have a single comprehensive base manifold to work on, one can consider the disjoint union $\bigsqcup_{k\in\mathbb N_0} M^k$, which is a manifold of non-pure dimension. 
If, moreover, the sections one is interested in are to be symmetric in the different $M$-entries, one should consider the disjoint union $\bigsqcup_{k\in\mathbb N_0} M^k/S_k$ of the orbits under the action of the symmetric groups. 
This space, however, is an orbifold (a manifold with singularities) and its differential calculus requires a specific attention. To avoid complications due to singularities, in this paper we consider the disjoint union of configuration spaces, which are manifolds, as a comprehensive environment to work on. 
In this Section we show that the natural extension of the usual (internal) and of the external tensor products to vector bundles over the configuration space define a category with extremely good properties. 

\subsection{Configuration space of a manifold}
\label{subsec:Conf(M)}

Let $M$ be a smooth manifold. 
A \emph{configuration of $k$ points of $M$}, or \emph{$k$-point configuration}, is a finite set $\underline{x}=\{x_1,...,x_k\}$ of $k$ distinct points of $M$. Denote by $|\underline{x}|=k$ the number of points contained in a configuration $\underline{x}$. 
For $k=0$, the only zero-point configuration is the empty set $\emptyset$ (which is the only subset of $M$ of cardinality $0$). We call this configuration the \emph{vacuum}. 
Denote by $\UConf_k(M)$ the set of $k$-point configurations of $M$ and by 
\begin{align}\label{eq:Conf(M)}
 \UConf(M) = \bigsqcup_{k=0}^\infty \UConf_k(M)
\end{align}
the set of all configurations in $M$, i.e. the set of finite subsets of $M$. 
The space $\UConf(M)$ is called the (unordered) \emph{configuration space} of $M$ \cite[Definition 4.3]{May-1972}.\footnote{The same set furnished with a different topology is sometimes referred to as the \emph{Ran space} of $M$ cf. e.g. \cite{Lurie-2017}.} We quickly recall how it is topologized:

\begin{lemma}[cf. e.g. Definition 4.3 in \cite{May-1972}] \label{lem:Conf(M) manifold}
The configuration space $\UConf(M)$ is a smooth manifold of non-pure dimension, with the $k$-th component locally diffeomorphic to the collection of $k$-fold products $M^k$. 
Moreover, the identity map on $M=\UConf_1(M)$ gives an embedding
\begin{align}\label{eq:inclusion}
 i:M \hookrightarrow \UConf(M). 
\end{align} 
\end{lemma}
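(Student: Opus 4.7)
The plan is to exhibit $\UConf(M)$ as a disjoint union of smooth manifolds obtained as quotients of ordered configuration spaces by free symmetric group actions. For each $k\geq 0$, I first consider the \emph{ordered} configuration space
$\OConf_k(M)=\{(x_1,\ldots,x_k)\in M^k : x_i\neq x_j \text{ for } i\neq j\}$,
which is an open subset of $M^k$ (the complement of the finite union of pairwise diagonals $\Delta_{ij}=\{x_i=x_j\}$) and hence inherits a canonical smooth manifold structure of dimension $k\cdot\dim(M)$. There is a natural set bijection $\UConf_k(M)\cong \OConf_k(M)/S_k$ via $\{x_1,\ldots,x_k\}\leftrightarrow [(x_1,\ldots,x_k)]$, and I would transfer topology and smooth structure through this identification.

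The main technical step is to verify that the permutation action of $S_k$ on $\OConf_k(M)$ is free and properly discontinuous, so that the standard quotient theorem for group actions on manifolds applies. Freeness is immediate, since in any $(x_1,\ldots,x_k)\in\OConf_k(M)$ the coordinates are pairwise distinct. For proper discontinuity, using Hausdorffness of $M$ I choose pairwise disjoint open neighborhoods $U_1,\ldots,U_k$ of $x_1,\ldots,x_k$ in $M$; then $U_1\times\cdots\times U_k\subset\OConf_k(M)$ is an open neighborhood of $(x_1,\ldots,x_k)$ whose $S_k$-translates are pairwise disjoint, because any nontrivial permutation of the $U_i$'s produces a product set disjoint from the original. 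Combined with freeness, this implies that the quotient map $\pi_k:\OConf_k(M)\to \UConf_k(M)$ is a smooth covering of manifolds, in particular a local diffeomorphism; hence $\UConf_k(M)$ is locally diffeomorphic to open subsets of $M^k$, as claimed.

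Finally, I endow the disjoint union (\ref{eq:Conf(M)}) with the unique smooth manifold structure for which each $\UConf_k(M)$ is an open-and-closed component with the structure just constructed. Since $\dim\UConf_k(M)=k\cdot\dim M$ varies with $k$, the resulting manifold has non-pure dimension. The embedding (\ref{eq:inclusion}) is then simply the inclusion of the component with $k=1$, where $\UConf_1(M)=M$ as smooth manifolds, and inclusion of an open-and-closed submanifold is automatically a smooth embedding.

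The step requiring the most care is the verification of proper discontinuity of the $S_k$-action, but since $M$ is Hausdorff and second countable this is a classical argument and not a genuine obstacle; everything else is bookkeeping about quotients and disjoint unions.
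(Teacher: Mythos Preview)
Your proposal is correct and follows essentially the same route as the paper: define $\OConf_k(M)$ as an open subset of $M^k$, identify $\UConf_k(M)$ with the quotient $\OConf_k(M)/S_k$, observe that the $S_k$-action is free so the quotient is a smooth manifold locally diffeomorphic to $M^k$, and then take the disjoint union over $k$. Your verification of proper discontinuity is more explicit than the paper's (which simply invokes freeness, relying implicitly on the fact that a free action of a finite group on a Hausdorff manifold is automatically properly discontinuous), but the underlying argument is the same.
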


\begin{proof}
The set of \emph{$k$-point ordered configurations} of $M$ is the open and dense submanifold of $M^k$
\begin{align*}
 \OConf_k(M) = M^k\setminus \Delta^{(k)}
\end{align*}
where 
\begin{align*}
 \Delta^{(k)} = \{(x_1,...,x_k)\in M^k\ |\ x_i=x_j \ \mbox{for some}\ i\neq j\}
\end{align*}
is the \emph{fat diagonal} of $M^k$. 
The space $\OConf_k(M)$ contains (ordered) $k$-tuples $\vec{x}=(x_1,...,x_k)\in M^k$ of distinct points, in contrast to $\UConf_k(M)$ which contains (unordered) sets $\underline{x}=\{x_1,\dots,x_k\}\subset M$ of $k$ distinct points. 
The symmetric group $S_k$ acts on $\OConf_k(M)$ by permuting the points in the $k$-tuples and 
\begin{align*}
 \UConf_k(M)=\OConf_k(M)/S_k
\end{align*} 
is the quotient space under this action. 
Since the action of $S_k$ on $\OConf_k(M)$ is free, the quotient space $\UConf_k(M)$ is a smooth manifold, locally diffeomorphic to $\OConf_k(M)$ and therefore to $M^k$. 
We topologize $\UConf(M)$ as the countable disjoint union of the $\UConf_k(M)$, i.e. as in Equation \eqref{eq:Conf(M)}. Hence, $\UConf(M)$ is a countable disjoint union of smooth manifolds, hence a manifold itself (of non-pure dimension). 
\end{proof}

For any $k\geq 0$, we denote by $\sigma\cdot \vec{x}=(x_{\sigma^{-1}(1)},...,x_{\sigma^{-1}(k)})$ the left action of $\sigma\in S_k$ on $\vec{x}=(x_1,...,x_k)\in \OConf_k(M)$, and by 
\begin{align*}\label{eq:q}
 q_k:\OConf_k(M)\to \UConf_k(M)=\OConf_k(M)/S_k 
\end{align*}
the canonical quotient map, and by $q:\OConf(M)\to \UConf(M)$ the induced surjective map on the full configuration spaces. One way to construct a chart $(U, \Phi)$ near $\underline x=\{x_1,...,x_k\}\in \UConf(M)$ is by taking disjoint open charts $(U_1,\phi_1),...,(U_k,\phi_k)$ around $x_1,...,x_k$ and set 
\begin{equation} \label{eq:chartconf}
 \Phi = (\phi_1,...,\phi_k)\circ (q_k|_{U})^{-1} 
\end{equation}
with $U=q_k(U_1\times \cdots \times U_k)$. The invertibility of $q_k$ on $U$ follows from the disjointness of $U_1,..., U_k$.

\begin{remark}\label{rem:noncompact}
For any $k\geq 0$, the quotient map $q_k$ is a local diffeomorphism, therefore $\UConf_k(M)$ has the same \emph{local} properties of $\OConf_k(M)$ and therefore also of $M^k$, for instance its dimension is $k \cdot \dim(M)$. 
However, $\UConf_k(M)$ does not have the same \emph{global} properties of $\OConf_k(M)$ or of $M^k$. The two main differences are:
\begin{itemize}
\item 
If a manifold $M$ is orientable, then $M^k$ and $\OConf_k(M)$ are also orientable, however its $k$-point configuration space $\UConf_k(M)$ is not necessarily orientable. For instance, the circle $S^1, \ S^1\times S^1, \ \OConf_2(S^1)\cong S^1\times \R$ - are all orientable, while $\UConf_2(S^1)$ forms the M\"obius strip and hence it is not orientable.

\item If $M$ is compact, then $M^k$ is compact, but $\OConf_k(M)$ and $\UConf_k(M)$ are not, since taking out the diagonals makes them open. 
\end{itemize} 
\end{remark}

\begin{remark} \label{rem:OConf_k(M) cover space}
If $M$ is a connected manifold, the quotient map $q_k$ is a covering space of degree $k!$, since the symmetric group $S_k$ acts freely on each fibre.
\end{remark}

The classical constructions available in the category of manifolds can still be carried out on $\UConf(M)$, by working separately on each $k$-point component.
In particular, in this paper we consider vector bundles on $\UConf(M)$. 
The embedding $i:M\hookrightarrow \UConf(M)$ induces standard restriction and extension maps on vector bundles, on their sections (also on those with compact support) and on distributions, which are key ingredients of the covariant description of the Poisson structure of observables in field theory, which we treat in a subsequent paper. 

Note that there is no natural map $\UConf(M)\to M$: already the zero-point configuration $\UConf_0(M) = \{\emptyset\}$ has no natural counterpart in $M$. Indeed, one of the key properties of a configuration space is to be a \emph{pointed manifold}, with base point given by the vacuum $\emptyset$. 

%%%%%%%%%%%%%%%%%%%%%%%%%%%

\subsection{Vector bundles over configuration spaces}
\label{subsec:VB}

Let $\K$ be the field $\R$ or $\C$. 
A \emph{$\K$-vector bundle over $\UConf(M)$}, denoted $\pi:\bV\to \UConf(M)$, is a collection of $\K$-vector spaces $\bV_{\underline{x}}$ above each configuration of points of $M$, called the \emph{fibres}, verifying the usual requirements of vector bundles (local triviality inducing linear isomorphisms of the fibres). 
Since local triviality makes sense only on connected components of $\UConf(M)$, we can set $\bV_k = \bigsqcup_{\underline{x}\in \UConf_k(M)} \bV_{\underline{x}}$ and regard the vector bundle $\bV$ as a collection of $\K$-vector bundles $\pi_k:\bV_k\to \UConf_k(M)$ with possibly different ranks, called \emph{(homogeneous) components} of $\bV$. Finally we have
\begin{align*} 
 \bV = \bigsqcup_{\underline{x}\in \UConf(M)} \bV_{\underline{x}} = \bigsqcup_{k=0}^\infty \bV_k . 
\end{align*}
Let us call $\bV$ \emph{connected} if $\bV_0=\bV_\emptyset=\K$.
Naturally, a \emph{bundle map $\bV\to \bW$} over $\UConf(M)$ is a map commuting with the projections, therefore it is given by a collection of usual bundle maps $\bV_k\to \bW_k$ over $\UConf_k(M)$ for each $k$. 
Therefore, as usual, a bundle map can be multiplied by scalars, and the $\K$-vector space of bundle maps from $\bV$ to $\bW$ is the product of vector spaces
\begin{align} \label{eq:bundle map}
 \Mor(\bV,\bW) \cong \prod_k \Mor(\bV_k,\bW_k). 
\end{align}
All usual constructions on vector bundles over a manifold can be pushed to vector bundles over its configuration space, which then form a category with the operations and the properties of usual vector bundles. 
Let us denote by $\VB(M)$ the category of $\K$-vector bundles over $M$, and by $\VB(\UConf(M))$ the category of $\K$-vector bundles over $\UConf(M)$. 

In particular, in $\VB(\UConf(M))$ there is an internal \emph{hom-bundle} of fibrewise linear maps 
\begin{align*}
 \Hom(\bV,\bW) = \bigsqcup_{\underline{x}\in \UConf(M)} \Hom(\bV_{\underline{x}},\bW_{\underline{x}}) = \bigsqcup_{k=0}^\infty \Hom(\bV_k,\bW_k), 
\end{align*}
whose smooth sections are in bijection with $\Mor(\bV,\bW)$. 
Given a vector space $V$, the \emph{trivial vector bundle with fibre $V$} is the cartesian product $\UConf(M)\times V$. 
Then, the \emph{trivial line bundle} is $\UConf(M)\times \K$, and the \emph{dual vector bundle} of $\bV$ is 
\begin{align*}
 \bV^* &:= \Hom(\bV,\UConf(M)\times \K) \cong 
 \bigsqcup_{\underline{x}} \bV_{\underline{x}}^* = \bigsqcup_{k} \bV_k^*. 
\end{align*}
Finally, $\bV$ is a \emph{subbundle} of $\bW$ if $\bV_k$ is a subbundle of $\bW_k$ for all $k$. \\

\begin{remark}
All the vector bundles on $\UConf(M)$ we are interested in ultimately come from a given vector bundle $E$ on $M$. Even if $E$ has finite rank, many of the associated objects we will be interested in and treat in the sequel, such as the jet bundle $JE$ and the tensor bundle $T^\otimes(E)$, will have infinite rank (i.e. infinite dimension in each fibre). However, in all cases we will be able to work with them as if they were of finite rank. Following \cite{Blohmann-2023}, we distinguish two cases of infinite-rank vector bundles (dual to each other in the categorical sense):

\begin{itemize}
\item \emph{ind-finite vector bundles}: These are filtered (inductive) colimits of finite-rank vector bundles. For instance, the \emph{tensor bundle} $T^\otimes(E)=\bigoplus_{n=0}^\infty E^{\otimes n}$ is the inductive colimit of the sequence of injective bundle maps 
$$
T^{\otimes \leq 0}(E)\hookrightarrow T^{\otimes \leq 1}(E)\hookrightarrow T^{\otimes \leq 2}(E)\hookrightarrow \cdots
$$
where $T^{\otimes \leq k}(E)=\bigoplus_{n=0}^k E^{\otimes n}$. 
Indeed, any element of this colimit lies in fact in a finite-rank term $T^{\otimes \leq k}(E)$. When we want to add vectors or carry out other operations among tensors of different length, we can just find a tensor power $k$ big enough, so that $T^{\otimes \leq k}(E)$ contains all of them. 

\item \emph{pro-finite vector bundles}: These are cofiltered (projective) limits of finite-rank vector bundles. For instance, the \emph{infinite jet bundle} $JE$ is by definition the projective limit of the sequence of projective bundle maps
$$
J^0E\twoheadleftarrow J^1E\twoheadleftarrow J^2E \twoheadleftarrow \cdots
$$
where $J^kE$ is the bundle over $M$ with fibre at $x\in M$ given by the $k$-jets $j^k_x\varphi$ at $x$ of all smooth local sections $\varphi:U_x\to E$ defined on an open neighborhood of $x$ (that is, the equivalence class of such local sections under the equivalence relation of \emph{contact of order $k$ at $x$}). 
An element in the infinite jet bundle is at once the set of all derivatives of a smooth local section at a given point of $M$, and naturally projects to all $k$-order derivatives, that is, to all $k$-jet bundles, in bundle-compatible ways. 
All operations among jets of different orders in fact take place in the infinite jet bundle $JE$ and are then projected to the suitable finite-order jet bundle $J^kE$ we are interested in. 

Another example of a pro-finite vector bundle is the \emph{completed tensor bundles} $\hat{T}^{\otimes}(E)$, which can be defined as the infinite product $\prod_{n=0}^\infty E^{\otimes n}$, or by bundle duality as $(T(E^*))^*$.
\end{itemize}
The dual bundle of a pro-bundle is an ind-bundle and vice versa.
Iterating several pro- (or ind-) constructions is not a problem, subtleties would only occur if we try mixing pro- and ind- constructions, which we will avoid. 
\end{remark}

The inclusion $i:M\hookrightarrow \UConf(M)$ induces two functors on vector bundles: The usual pullback $i^*:\VB(\UConf(M)) \to \VB(M)$, which brings a vector bundle $\bV\to \UConf(M)$ to its restriction $i^*\bV=\bV_1\to M$ and the pushforward \footnote{This is not properly a pushforward of vector bundles over manifolds, but it is a \emph{diffeological pushforward} in the sense of \cite[Section 3]{Wu-2023}.}
$i_*:\VB(M)\to \VB(\UConf(M)$ given, for any vector bundle $V \to M$, by
\begin{align}\label{eq:pullback pushforward}
i_*V =
\begin{cases}
 V & \text{over $M$} \\ 
 \UConf_k(M) \times \{0\} & \text{over $\UConf_k(M)$ for $k\neq 1$} 
\end{cases} 
\end{align}
The pullback is a left inverse of the pushforward, that is, $i^* i_* V=V$ for any vector bundle $V$ on $M$. Moreover, the pushforward $i_*$ is fully faithful, therefore $\VB(M)$ is a full subcategory of $\VB(\UConf(M))$, and $i_*$ is injective on vector bundles (up to isomorphisms).

\begin{example}
\label{ex:trivial bundle}
The trivial line bundle on $\UConf(M)$ is the usual cartesian product $\UConf(M)\times \K$. 
Its pullback on $M$ by $i$ coincides with the trivial line bundle on $M$, that is, 
\begin{align*}
 i^*(\UConf(M)\times \K) = M\times \K. 
\end{align*}
However, the pushforward of the trivial line bundle on $M$ is not the trivial line bundle on $\UConf(M)$, it is
\begin{align*}
 i_*(M\times \K) = \begin{cases}
 M\times \K & \text{over $M$} \\ 
 \UConf_k(M) \times \{0\} & \text{over $\UConf_k(M)$ for $k\neq 1$} . 
 \end{cases}
\end{align*}
\end{example}

\begin{example}
\label{ex: T(Conf(M))}
Let $TM\to M$ be the tangent bundle of $M$. 
Since $\UConf_k(M)$ is locally diffeomorphic to $\OConf_k(M)$ (and hence to $M^k$), the \emph{tangent bundle} of $\UConf(M)$ is the bundle 
\begin{align*}
 \bT\UConf(M) &= 
 \begin{cases}
 \{\emptyset\} \times \{0\} & \text{over $\UConf_0(M)=\{\emptyset\}$} \\ 
 TM & \text{over $\UConf_1(M)=M$} \\ 
 \displaystyle \bigsqcup_{\{x_1,...,x_k\}} (T_{x_1}M \oplus\cdots\oplus T_{x_k}M) & \text{over $\UConf_k(M)$ for $k\geq 2$} . 
 \end{cases}
\end{align*}
Therefore, the bundle $i_*TM$ over $\UConf(M)$, which is zero on $\UConf_{k\geq 2}(M)$, is a proper subbundle of $\bT(\UConf(M))$. 
\end{example}

\begin{remark} \label{rem:VB on OConf(M)}
The description of vector bundles over $\UConf(M)$ given above can be repeated word by word to describe vector bundles over $\OConf(M)$, simply replacing $k$-configurations $\underline{x}$ by $k$-tuples $\vec{x}$ of distinct points of $M$, and gives rise to the category $\VB(\OConf(M))$ of vector bundles over $\OConf(M)$. 
Vector bundles over $\UConf(M)$ and over $\OConf(M)$ are then related by the pullback functor 
\begin{align*}
 q^*:\VB(\UConf(M))\to \VB(\OConf(M))
\end{align*}
along the quotient map $q:\OConf(M)\to \UConf(M)$: the pullback $q^*(\bV)$ of a vector bundle $\bV$ on $\UConf(M)$ has a fibre $q^*(\bV)_{\vec{x}} = \bV_{q(\vec{x})}$ above any point $\vec{x}\in \OConf(M)$. This pullback restricts to the homogeneous components as a functor $q_k^*:\VB(\UConf_k(M))\to \VB(\OConf_k(M))$ for any $k\geq 0$. For brevity, we will write $q_k^*(\bV)$ for $q_k^*(\bV_k)$. 
\end{remark}

%%%%%%%%%%%%%%%%%%%%%%%%%%%%%%%%%%%%%%%%%%%

\subsection{Hadamard and Cauchy tensor products of vector bundles}
\label{subsec:HC}

We denote by $\oplus$ the direct sum in $\VB(\UConf(M))$ given by the usual direct sum of vector bundles over each component $\UConf_k(M)$. 

\begin{definition}\label{def:hadamard cauchy}
For $\bV,\bW\in \VB(\UConf(M))$, we call \emph{Hadamard tensor product} the usual tensor product $\bV\otimes \bW$ of vector bundles, whose fibre over a configuration $\underline{x}$ is given by
\begin{align*}
 (\bV\otimes \bW)_{\underline x} = \bV_{\underline x}\otimes \bW_{\underline x}. 
\end{align*}
We call \emph{Cauchy tensor product} the collection $\bV\boxtimes \bW = \bigsqcup_{\underline{x}\in \UConf(M)} (\bV\boxtimes \bW)_{\underline x}$ of fibres
\begin{align*}
 (\bV\boxtimes \bW)_{\underline x}=\bigoplus_{\underline x=\underline x'\sqcup \underline x''} (\bV_{\underline x'}\otimes \bW_{\underline x''}), 
\end{align*}
where the sum is over all (ordered) splits of the configuration $\underline x$ in two disjoint configurations, including the zero-point configuration. 
\end{definition}

Note that the \emph{ordered} splits $\underline{x}'\sqcup \underline{x}''$ and $\underline{x}''\sqcup \underline{x}'$ are different and must both be considered in the sum.
For instance:
\begin{align*}
 (\bV\boxtimes \bW)_{\emptyset} 
 &= \bV_{\emptyset} \otimes \bW_{\emptyset} \\ 
 (\bV\boxtimes \bW)_{x} 
 &= (\bV_{x} \otimes \bW_{\emptyset}) \oplus (\bV_{\emptyset} \otimes \bW_{x}) \\ 
 (\bV\boxtimes \bW)_{\{x_1,x_2\}} 
 &= (\bV_{\{x_1,x_2\}}\otimes \bW_{\emptyset}) \oplus (\bV_{x_1} \otimes \bW_{x_2}) \oplus (\bV_{x_2} \otimes \bW_{x_1}) \oplus (\bV_{\emptyset} \otimes \bW_{\{x_1,x_2\}}) \\ 
 (\bV\boxtimes \bW)_{\{x_1,x_2,x_3\}} 
 &= (\bV_{\{x_1,x_2,x_3\}}\otimes \bW_{\emptyset}) \\ 
 & \hspace{0.5cm} \oplus (\bV_{\{x_1,x_2\}} \otimes \bW_{x_3}) \oplus (\bV_{\{x_1,x_3\}} \otimes \bW_{x_2}) \oplus (\bV_{\{x_2,x_3\}} \otimes \bW_{x_1}) \\ 
 & \hspace{0.5cm} \oplus (\bV_{x_1} \otimes \bW_{\{x_2,x_3\}}) \oplus (\bV_{x_2} \otimes \bW_{\{x_1,x_3\}}) \oplus (\bV_{x_3} \otimes \bW_{\{x_1,x_2\}}) \\ 
 & \hspace{0.5 cm} \oplus (\bV_{\emptyset} \otimes \bW_{\{x_1,x_2,x_3\}}). 
\end{align*}

The Hadamard tensor product $\bV \otimes \bW$ preserves the $k$-components and therefore its smooth vector bundle structure is a standard construction in geometry. For $\bV \boxtimes \bW$ the smooth vector bundle structure is not immediate and we will construct it in Theorem \ref{thm:Cauchy vs external} below. To do so, we will lift it to the ordered configuration space $\OConf(M)$, because (unlike $\UConf(M)$) it has natural deconcatenation maps $\OConf_{k+l}(M)\to \OConf_{k}(M)\times \OConf_{l}(M)$.

\begin{lemma} \label{lem:external tensor product on OConf(M)}
The external tensor product of vector bundles is  well-defined as an operation 
\begin{align*}
 \exttens: \VB(\OConf(M)) \times \VB(\OConf(M) \to \VB(\OConf(M)). 
\end{align*}
\end{lemma}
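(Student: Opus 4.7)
The strategy is to construct $\bV \boxtimes' \bW$ component by component on $\OConf(M)=\bigsqcup_n \OConf_n(M)$, using the deconcatenation maps together with the classical external tensor product of finite-rank vector bundles over products of manifolds.

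First I would observe that, for each pair $k,l\geq 0$, the deconcatenation map
$$d_{k,l}:\OConf_{k+l}(M)\longrightarrow \OConf_k(M)\times \OConf_l(M),\quad (x_1,\ldots,x_{k+l})\longmapsto \bigl((x_1,\ldots,x_k),(x_{k+1},\ldots,x_{k+l})\bigr),$$
is well-defined, smooth, and in fact an open embedding: on the ambient manifold $M^{k+l}=M^k\times M^l$ it is the identity, $\OConf_{k+l}(M)$ is open in $M^{k+l}$, and its image is precisely the open subset of $\OConf_k(M)\times \OConf_l(M)$ consisting of pairs of tuples whose supports are disjoint in $M$. Given $\bV,\bW\in\VB(\OConf(M))$ with homogeneous components $\bV_k$ and $\bW_l$, the classical external tensor product $\bV_k\boxtimes_{\mathrm{cl}}\bW_l:=\pi_1^*\bV_k\otimes \pi_2^*\bW_l$ is a standard vector bundle over the product manifold $\OConf_k(M)\times \OConf_l(M)$, with fibre $(\bV_k)_{\vec{x}'}\otimes (\bW_l)_{\vec{x}''}$ over $(\vec{x}',\vec{x}'')$.

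Putting these pieces together, I would define
$$(\bV\boxtimes' \bW)_n:=\bigoplus_{k+l=n}d_{k,l}^*\bigl(\bV_k\boxtimes_{\mathrm{cl}}\bW_l\bigr)\longrightarrow \OConf_n(M),$$
a finite direct sum of pullbacks along smooth maps of vector bundles, hence itself a vector bundle over $\OConf_n(M)$, with fibre $\bigoplus_{k+l=n}(\bV_k)_{(x_1,\ldots,x_k)}\otimes (\bW_l)_{(x_{k+1},\ldots,x_n)}$ over $\vec{x}=(x_1,\ldots,x_n)$. Assembling over $n$ yields the desired object $\bV\boxtimes' \bW:=\bigsqcup_n (\bV\boxtimes'\bW)_n \in \VB(\OConf(M))$. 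Functoriality in both arguments is then immediate: bundle morphisms $f:\bV\to \bV'$ and $g:\bW\to \bW'$ decompose into components, induce morphisms of the classical external products componentwise, pull back along the $d_{k,l}$, and assemble under finite direct sum.

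I would flag no substantive obstacle here: the content of the lemma is essentially bookkeeping, the real point being that the standard vector bundle constructions on ordinary manifolds transport intact to $\OConf(M)$ thanks to the existence of the deconcatenation embeddings $d_{k,l}$ — precisely the structural feature absent on $\UConf(M)$, which is why giving $\boxtimes$ a smooth vector bundle structure on $\VB(\UConf(M))$ will need to be done later by a descent argument from $\OConf(M)$ along the quotient map $q$.
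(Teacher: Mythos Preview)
Your proof is correct and follows the same core idea as the paper: restrict the classical external tensor product along the open embedding $\OConf_{k+l}(M)\hookrightarrow \OConf_k(M)\times \OConf_l(M)$. The paper's proof is terser and does not take the direct sum over $k+l=n$; it simply defines $\bV_i\boxtimes'\bW_j$ as a bundle on $\OConf_{i+j}(M)$ for each pair $(i,j)$ of components and leaves the assembly into a single operation on $\VB(\OConf(M))$ implicit (in practice the paper only ever applies $\boxtimes'$ to single-component bundles, where your direct sum collapses to one term anyway).
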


\begin{proof}
For any integers $i,j\geq 0$, the external tensor product (which we denote by $\exttens$ to distinguish it from the Cauchy tensor product) can be seen as an operation 
\begin{align*}
 \exttens:\VB(\OConf_i(M)) \times \VB(\OConf_j(M)) \to \VB(\OConf_{i}(M)\times \OConf_{j}(M)) 
\end{align*}
which turns two vector bundles $\bV\to \OConf_{i}(M)$ and $\bW\to \OConf_{j}(M)$ to the vector bundle $\bV\exttens \bW\to \OConf_{i}(M)\times \OConf_{j}(M)$ with fibre 
\begin{align*}
 (\bV \exttens \bW)_{(x_1,...,x_{i+j})}= \bV_{(x_1,...,x_i)} \otimes \bW_{(x_{i+1},...,x_{i+j})}. 
\end{align*}
The space $\OConf_{{i+j}}(M)\subset M^{i+j}$ is a subspace of $\OConf_{i}(M)\times \OConf_{j}(M)\subset M^i\times M^j=M^{i+j}$. By abuse of notation the restriction of $\bV \exttens \bW$ to $\OConf_{{i+j}}(M)$ is denoted also by $\bV \exttens \bW$.
\end{proof}

For any permutation $\sigma\in S_k$ and any vector bundle $\bV_k\to \OConf_k(M)$, let $\sigma^* \bV_k$ be the pull back of $\bV_k$ by the action of $\sigma$, whose fibres are
\begin{align*}
 (\sigma^* \bV_k)_{\vec{x}} = (\bV_k)_{\sigma \cdot \vec{x}} = \bV_{(x_{\sigma^{-1}(1)},...,x_{\sigma^{-1}(k)})} 
\end{align*}
For any $i+j=k$, denote by $\USh(i,j)$ the set of $(i,j)$-unshuffle permutations, that is, the $k$-permutations $\sigma$ such that $\sigma^{-1}(1)<...<\sigma^{-1}(i)$ and $\sigma^{-1}(i+1)<...<\sigma^{-1}(i+j)$, cf. e.g. \cite{Eilenberg-MacLane-1953}.

\begin{theorem} \label{thm:Cauchy vs external}
For any vector bundles $\bV$ and $\bW$ on $\UConf(M)$, there is a unique vector bundle structure on the Cauchy tensor product $\bV\boxtimes \bW$ which makes it a symmetric version of the external tensor product on $\OConf(M)$, in the sense that for any $k\geq 0$ there is an isomorphism 
\begin{align*}
 q_k^*(\bV\boxtimes \bW) & \cong \bigoplus_{I\sqcup J=\{1,...,k\}} q_{|I|}^*(\bV)\exttens q_{|J|}^*(\bW) \\ 
 & \cong \underset{\sigma\in \USh(i,j)}{\bigoplus_{i+j=k}} \sigma^*\big(q_i^*(\bV) \exttens q_j^*(\bW)\big) 
\end{align*}
of vector bundles over $\OConf_k(M)$. 
\end{theorem}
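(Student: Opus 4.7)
The plan is to exploit the fact that $q_k:\OConf_k(M)\to\UConf_k(M)$ is a covering (Remark \ref{rem:OConf_k(M) cover space}), hence a local diffeomorphism, so that a smooth vector bundle structure on $\bV\boxtimes\bW$ over $\UConf_k(M)$ is determined uniquely by an $S_k$-equivariant vector bundle structure on its pullback to $\OConf_k(M)$. Accordingly, I would first build a natural candidate on $\OConf_k(M)$ using the right-hand side of the claimed isomorphism, then check that it is $S_k$-equivariant and descends to the fibre $(\bV\boxtimes\bW)_{\underline x}$ prescribed in Definition \ref{def:hadamard cauchy}. Uniqueness then follows automatically from the fact that $q_k$ is a local diffeomorphism: any two smooth structures on $\bV\boxtimes\bW$ whose pullbacks to $\OConf_k(M)$ coincide must themselves coincide.

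For the construction, fix $i+j=k$. By Remark \ref{rem:VB on OConf(M)} the bundles $q_i^*(\bV)\to\OConf_i(M)$ and $q_j^*(\bW)\to\OConf_j(M)$ are smooth, and Lemma \ref{lem:external tensor product on OConf(M)} gives a smooth external product $q_i^*(\bV)\boxtimes' q_j^*(\bW)\to\OConf_{i+j}(M)$. For $\sigma\in S_k$, pulling back along the diffeomorphism $\vec x\mapsto \sigma\cdot\vec x$ of $\OConf_k(M)$ yields the smooth bundle $\sigma\cdot\bigl(q_i^*(\bV)\boxtimes' q_j^*(\bW)\bigr)$. Summing over $\sigma\in\Sh(i,j)$ and $i+j=k$ produces a smooth vector bundle on $\OConf_k(M)$. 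I would then verify fibrewise at $\vec x=(x_1,\dots,x_k)$ that
\[
\Bigl(\sigma\cdot\bigl(q_i^*(\bV)\boxtimes' q_j^*(\bW)\bigr)\Bigr)_{\vec x}
= \bV_{\{x_{\sigma(1)},\dots,x_{\sigma(i)}\}}\otimes \bW_{\{x_{\sigma(i+1)},\dots,x_{\sigma(k)}\}},
\]
and that the bijection between shuffles $\sigma\in\Sh(i,j)$ and ordered partitions $I\sqcup J=\{1,\dots,k\}$ identifies the two formulations of the right-hand side. Summing over all shuffles reproduces exactly the direct sum over ordered splits of $\underline x=q_k(\vec x)$ appearing in Definition \ref{def:hadamard cauchy}.

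Next I would check $S_k$-equivariance and descend. For $\tau\in S_k$, the pullback of $\sigma\cdot(q_i^*(\bV)\boxtimes' q_j^*(\bW))$ by $\tau$ equals $(\tau\sigma)\cdot(q_i^*(\bV)\boxtimes' q_j^*(\bW))$, and the composition $\tau\sigma$ is conjugate via a shuffle of $S_i\times S_j$ to some unique shuffle $\sigma'\in\Sh(i,j)$ after possibly reordering the first $i$ and last $j$ entries. Since $q_i^*(\bV)=q_i^*(\bV)$ is by construction invariant under permutations of its $i$ arguments (the argument lies in $\UConf_i(M)$), the inner reorderings act trivially on the fibres, and the $S_k$-action simply permutes the direct summands. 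This gives the equivariant structure needed for descent along the covering $q_k$, producing a smooth vector bundle on $\UConf_k(M)$ whose fibre at $\underline x$ is canonically $(\bV\boxtimes\bW)_{\underline x}$.

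The main obstacle I anticipate is bookkeeping the $S_k$-equivariance in a way that is canonically compatible with the identifications $\bV_{\underline x'}$, $\bW_{\underline x''}$ (which are indifferent to the order of the points), so that the induced bundle on $\UConf_k(M)$ agrees with the fibre prescribed by Definition \ref{def:hadamard cauchy} without any spurious sign or ordering ambiguity. Once this compatibility is settled, uniqueness is a formal consequence: any other smooth vector bundle structure on $\bV\boxtimes\bW$ with the same pullback to $\OConf_k(M)$ must agree under the local diffeomorphism $q_k$, so the constructed structure is the only one with the stated property.
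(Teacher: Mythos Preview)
Your proposal is correct and follows essentially the same route as the paper: both arguments identify the fibres of $q_k^*(\bV\boxtimes\bW)$ with those of the shuffle sum of external products and then invoke the fact that $q_k$ is a local diffeomorphism to obtain a unique smooth structure on $\bV\boxtimes\bW$ compatible with this identification. The only cosmetic difference is that you phrase the transfer as \emph{descent of an $S_k$-equivariant bundle along the covering $q_k$}, whereas the paper simply says that the local diffeomorphism $q_k$ determines the smooth structure locally; your equivariance check is precisely the well-definedness of the local transfer on overlaps, so the two arguments are equivalent.
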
 

\begin{proof}
For any ordered $k$-point configuration $\vec{x}=(x_1,...,x_k)$ with $q(\vec{x})=\{x_1,...,x_k\}=\underline{x}$, we have 
\begin{align*}
 q_k^*(\bV\boxtimes \bW)_{\vec{x}} &= (\bV\boxtimes \bW)_{\underline{x}} = \bigoplus_{\underline{x}=\underline{x}'\sqcup \underline{x}''} \bV_{\underline{x}'} \otimes \bW_{\underline{x}''}. 
\end{align*}
On the right hand-side, the split of the configuration $\underline{x}$ in two disjoint subconfigurations $\underline{x}'$ and $\underline{x}''$ corresponds to the split of the $k$-tuple $\vec{x}$ in two disjoint tuples $\vec{x}'$ and $\vec{x}''$ of all possible lengths $i$ and $j$ with $i+j=k$, plus all possible choices of $i$ points among $k$, which are counted by the binomial coefficient $\binom{k}{i}$. Since this binomial coefficient is also the cardinality of $(i,j)$-shuffle permutations, there is a bijection of vector spaces 
\begin{align*}
 q_k^*(\bV\boxtimes \bW)_{\vec{x}} 
 &\cong \underset{\sigma\in \USh(i,j)}{\bigoplus_{i+j=k}} \big(q_i^*(\bV_i)\exttens q_j^*(\bW_j)\big)_{\sigma\cdot \vec{x}} \\
 &= \underset{\sigma\in \USh(i,j)}{\bigoplus_{i+j=k}} \Big(\sigma^* \big(q_i^*(\bV_i)\exttens q_j^*(\bW_j)\big)\Big)_{\vec{x}}
\end{align*}
over each $\vec{x}\in \OConf_k(M)$. 
This gives a bijection of vector bundles over $\OConf_k(M)$ 
\begin{align*}
 q_k^*(\bV\boxtimes \bW) &\cong \underset{\sigma\in \USh(i,j)}{\bigoplus_{i+j=k}} \sigma^* \big(q_i^*(\bV)\exttens q_j^*(\bW)\big). 
\end{align*}
The same bijection can be expressed in terms of the choice of the multiindex $I\subset \{1,...,k\}$: if we denote by 
\begin{align*}
 \pi_I: \OConf_{k}(M)\to \OConf_{|I|}(M)
\end{align*}
the projection to the components corresponding to the multiindex $I$ with cardinality $|I|$, over $\OConf_k(M)$ the above bijection of vector bundles becomes 
\begin{align}
q_k^*(\bV\boxtimes \bW) & \cong
\bigoplus_{I\sqcup J=\{1,...,k\}} \pi_I^*q_{|I|}^*(\bV_{|I|}) \otimes \pi_J^*q_{|J|}^*(\bW_{|J|}) \label{eq:liftiso}\\ 
& \cong \bigoplus_{I\sqcup J=\{1,...,k\}} q_{|I|}^*(\bV_{|I|}) \exttens q_{|J|}^*(\bW_{|J|}). \nonumber
\end{align}
Since $q:\OConf_k(M)\to \UConf_k(M)$ is a local diffeomorphism, there is a unique smooth structure on $\bV\boxtimes \bW$ which makes this bijection a smooth vector bundle isomorphism. 
%The local trivialisations of $\bV\boxtimes \bW$ on $\UConf(M)$ are the image by $q$ of 
\end{proof}

To obtain a local trivialization of $\bV\boxtimes \bW$ near $\underline x=\{x_1,...,x_k\}\in \UConf(M)$, we pick a chart $q(U)=q(U_1\times \cdots \times U_k)$ on $\UConf_k(M)$ as in \eqref{eq:chartconf}, with the additional condition that $\bV,\bW$ are trivializable over $q(U_{i_1}\times \cdots \times U_{i_l})$ for all $\{i_1,..., i_l\}\subset \{1,...,k\}$. Then over $U=U_1\times\cdots\times U_k$, the isomorphism \eqref{eq:liftiso} provides a trivialization of $\bV\boxtimes \bW|_{q(U)}\cong q_k^*(\bV\boxtimes \bW)|_U$.

\begin{lemma}\label{lem:monoidal}
The Hadamard and the Cauchy tensor products give two symmetric monoidal structures on the category $\VB(\UConf(M))$ (in the sense of \cite{MacLane-1998}), such that
\begin{itemize}
\item the unit for $\otimes$ is given by the trivial line bundle $\bI_\otimes = \UConf(M)\times \K$ and the braiding $\tau_\otimes:\bV\otimes \bW \overset{\cong}{\to} \bW\otimes \bV$ is given by $\tau_\otimes(v_{\underline{x}}\otimes w_{\underline{x}}) = w_{\underline{x}}\otimes v_{\underline{x}}$ for any $\underline{x}\in \UConf(M)$, 

\item the unit for $\boxtimes$ is given by the bundle 
\begin{align*}
 \bI_\boxtimes &= 
 \begin{cases}
 \{\emptyset\}\times \K & \text{over $\UConf_0(M)$}\\
 \UConf_k(M)\times \{0\} & \text{over $\UConf_k(M)$ for $k\geq 1$}
\end{cases} 
\end{align*}
and the braiding $\tau_\boxtimes:\bV\boxtimes \bW \overset{\cong}{\to} \bW\boxtimes \bV$ is given by $\tau_\boxtimes(v_{\underline{x}'}\boxtimes w_{\underline{x}''}) = w_{\underline{x}''}\boxtimes v_{\underline{x}'}$ for any disjoint configurations $\underline{x}',\underline{x}''\in \UConf(M)$. 
\end{itemize}
\end{lemma}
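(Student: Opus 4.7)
My plan is to treat the two monoidal structures separately, since the Hadamard side is classical and the Cauchy side is where the real work lies. The tensor product $\otimes$ preserves homogeneous components and, restricted to each $\UConf_k(M)$, is the ordinary fibrewise tensor product of vector bundles over a smooth manifold. The symmetric monoidal structure there, with unit the trivial line bundle $\UConf_k(M)\times\K$, with fibrewise associator and unitors inherited from $\K$-vector spaces, and with the swap as braiding, is standard; assembling over all $k$ yields the claimed structure with unit $\bI_\otimes=\UConf(M)\times\K$.

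\textbf{Cauchy structure maps.} For $\boxtimes$ I first define bifunctoriality: for morphisms $f\colon\bV\to\bV'$ and $g\colon\bW\to\bW'$, I let $f\boxtimes g$ act as $f_{\underline x'}\otimes g_{\underline x''}$ on each summand $\bV_{\underline x'}\otimes \bW_{\underline x''}$. For the associator, I observe that both $(\bV\boxtimes \bW)\boxtimes \mathbf{U}$ and $\bV\boxtimes(\bW\boxtimes \mathbf{U})$ have fibre at $\underline x$ canonically identified, via the associator of $\otimes$ on vector spaces, with
\[\bigoplus_{\underline x=\underline x_1\sqcup \underline x_2\sqcup \underline x_3}\bV_{\underline x_1}\otimes \bW_{\underline x_2}\otimes \mathbf{U}_{\underline x_3},\]
and I take this identification as $\alpha_\boxtimes$. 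For the left unitor, the only nonzero summand of $(\bI_\boxtimes\boxtimes\bV)_{\underline x}$ is the one corresponding to $\underline x'=\emptyset$, which yields $\K\otimes \bV_{\underline x}\cong \bV_{\underline x}$; the right unitor is analogous. The braiding $\tau_\boxtimes$ is the stated swap, pairing the summand indexed by $(\underline x',\underline x'')$ in $\bV\boxtimes\bW$ with the summand indexed by $(\underline x'',\underline x')$ in $\bW\boxtimes\bV$ via the transposition of tensor factors.

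\textbf{Coherence.} Pentagon, triangle, and hexagon axioms reduce, one configuration $\underline x$ at a time, to the corresponding axioms for $\otimes$ on $\K$-vector spaces combined with evident set-theoretic identifications of ordered splits of $\underline x$, so no bookkeeping beyond the vector-space case is needed for the algebraic content.

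\textbf{Smoothness and main obstacle.} The substantive task is to confirm that $f\boxtimes g$, $\alpha_\boxtimes$, $\lambda_\boxtimes$, $\rho_\boxtimes$, and $\tau_\boxtimes$ are smooth rather than merely fibrewise linear. My plan is to exploit Theorem \ref{thm:Cauchy vs external} by pulling back along the local diffeomorphism $q_k\colon\OConf_k(M)\to\UConf_k(M)$: under the isomorphism \eqref{eq:liftiso}, each structure morphism becomes a locally constant regrouping or permutation of the summands $\pi_I^*q^*(\bV)\otimes\pi_J^*q^*(\bW)$ on $\OConf_k(M)$, which is manifestly smooth, and descent along $q_k$ transfers smoothness back to $\UConf_k(M)$. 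The only obstacle I anticipate is combinatorial bookkeeping, namely matching shuffles in $\Sh(i,j)$ with the reindexing of ordered splits when writing out the hexagon — straightforward but tedious, and not affecting any of the structural choices above.
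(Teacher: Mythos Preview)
Your proposal is correct and follows the same approach as the paper, only with considerably more detail: the paper's own proof is a two-sentence sketch that says the Hadamard case is the standard tensor product of vector bundles on each $\UConf_k(M)$ and that for the Cauchy case ``the axioms of a symmetric monoidal structure are easily verified.'' Your explicit description of the structure maps, the reduction of coherence to the vector-space axioms over each fibre, and the smoothness check via pullback along $q_k$ (using Theorem~\ref{thm:Cauchy vs external}) are exactly the content that the paper leaves implicit.
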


\begin{proof}
The Hadamard tensor product $\otimes$ restricts to each $k$-point configuration manifold and coincides with the usual tensor product of vector bundles in the category $\VB(\UConf_k(M))$, so it is a symmetric monoidal structure. The Cauchy tensor product $\boxtimes$ distributes the fibres of two vector bundles over the points of each configuration, in all possible ways: the axioms of a symmetric monoidal structure are easily verified. 
\end{proof}

\begin{remark} \label{rem:graded braiding}
Vector bundles on $\UConf(M)$ can be graded. Then, the braidings $\tau_\otimes$ and $\tau_\boxtimes$ naturally extend to graded versions $\tau_\otimes^\sgn$ and $\tau_\boxtimes^\sgn$ with the usual Koszul sign convention: if $\deg(v_{\underline{x}})$ denotes the degree of the vector $v_{\underline{x}}$ belonging to a graded vector bundle, then we set 
\begin{align*}
 & \tau_\otimes: \bV\otimes \bW \to \bW \otimes \bV,\ v_{\underline{x}}\otimes w_{\underline{x}} \mapsto (-1)^{\deg(v_{\underline{x}}) \deg(w_{\underline{x}})} w_{\underline{x}}\otimes v_{\underline{x}} \\ 
 & \tau_\boxtimes: \bV\boxtimes \bW \to \bW \boxtimes \bV,\ v_{\underline{x}'}\otimes w_{\underline{x}''} \mapsto (-1)^{\deg(v_{\underline{x}'}) \deg(w_{\underline{x}''})} w_{\underline{x}''}\otimes v_{\underline{x}'}. 
\end{align*}
These braidings provide a usual trick to define alternating maps on non-graded vector bundles, by putting them in degree $1$. 
\end{remark}

Notice that the unit $\bI_\otimes$ with respect to the Hadamard monoidal structure restricts to the unit $I=M\times \K$ of the usual tensor product of vector bundles on $M$, via the pullback $i^*$ of the embedding $i:M\hookrightarrow \UConf(M)$. 
However the unit $\bI_\boxtimes$ with respect to the Cauchy monoidal structure corresponds to the zero bundle on $M$, via the pullback $i^*$, and is not related to the pushforward of the unit $I$ on $M$, described in Example \ref{ex:trivial bundle}. 

The monoidal structures $\otimes$ and $\boxtimes$ are compatible in the sense that they form a \emph{symmetric $2$-monoidal category} (cf. e.g. \cite{Aguiar-Mahajan-2010, Herscovich-2019}), that is, there exists the following data:
\begin{itemize}
\item 
three morphisms $\mu: I_\otimes\boxtimes I_\otimes\to I_{\otimes}$, $\Delta: I_\boxtimes\to I_\boxtimes\otimes I_\boxtimes$ and $\nu: I_\boxtimes\to I_\otimes$ such that $(I_\otimes,\mu,\nu)$ is a monoid in the monoidal category $(\boxtimes,I_\boxtimes)$ and $(I_\boxtimes,\Delta,\nu)$ is a comonoid in the monoidal category $(\otimes,I_\otimes)$, 
\item 
an interchange map $\zeta: (A\otimes B)\boxtimes (C\otimes D)\to\ (A\boxtimes C)\otimes (B\boxtimes D)$ which distributes appropriately over triple products, 
\end{itemize}
satisfying some compatibility conditions which can be found in \cite[Section 6.1]{Aguiar-Mahajan-2010} or \cite[Definitions 3.2.2]{Herscovich-2019}. 

\begin{theorem}\label{thm: VB(Conf(M)) is 2-monoidal}
The category $\VB(\UConf(M))$ is a symmetric 2-monoidal category when equipped with:
\begin{itemize}
\item map $\mu: \bI_\otimes\boxtimes \bI_\otimes\to \bI_{\otimes}$ given by the multiplications $\K_{\underline {x}'}\otimes \K_{\underline {x}''}\to \K_{\underline {x}'\sqcup \underline {x}''}$, 

\item map $\Delta : \bI_\boxtimes\to \bI_\boxtimes\otimes \bI_\boxtimes$ given by the usual isomorphism $\K_{\emptyset}\otimes \K_{\emptyset}\to \K_{\emptyset}$ over the zero-point configuration and trivial elsewhere, 

\item map $\nu: \bI_\boxtimes\to \bI_\otimes$ given by the inclusion of $\bI_\boxtimes=\K_{\emptyset}$ into $\bI_\otimes$, 

\item interchange map 
$$
\zeta: (\bV'\otimes \bW')\boxtimes (\bV''\otimes \bW'')\to\ (\bV'\boxtimes \bV'')\otimes (\bW'\boxtimes \bW'')
$$
given by the inclusion 
$$
\bigoplus_{\underline x'\sqcup \underline x''=\underline x} \bV'_{\underline x'}\otimes \bW'_{\underline x'}\otimes \bV''_{\underline x''}\otimes \bW''_{\underline x''} 
\hookrightarrow
\bigoplus_{\underline x'\sqcup \underline x''=\underline {\tilde x}'\sqcup \underline {\tilde x}''} \bV'_{\underline {x}'}\otimes \bV''_{\underline {x}''}\otimes \bW'_{\underline {\tilde x}'}\otimes \bW''_{\underline {\tilde x}''}.
$$
\end{itemize}
\end{theorem}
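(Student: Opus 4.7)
The plan is to check first that the four morphisms are well-defined in $\VB(\UConf(M))$ and then to verify the compatibility axioms of a symmetric $2$-monoidal category from \cite[Section 6.1]{Aguiar-Mahajan-2010}.

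The maps $\mu$, $\Delta$, $\nu$ each live on a single component of $\UConf(M)$ where source and target are trivial line bundles, and on that component each is either an identity or the multiplication of $\K$; they are automatically smooth bundle morphisms. For the shuffle map I would combine the explicit fibrewise description given in the statement with the smooth vector bundle structure on a Cauchy product provided by Theorem \ref{thm:Cauchy vs external}. Lifting by $q_k^*$ to $\OConf_k(M)$, the source $(\bV'\otimes \bW')\boxtimes(\bV''\otimes \bW'')$ decomposes as a direct sum indexed by ordered splits $I\sqcup J=\{1,...,k\}$, while the target $(\bV'\boxtimes \bV'')\otimes(\bW'\boxtimes \bW'')$ decomposes as a direct sum indexed by \emph{pairs} of such splits; $sh$ identifies the source with the diagonal pair subsum of the target. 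This is visibly linear, smooth and $S_k$-equivariant, so descends to a morphism over $\UConf_k(M)$.

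Next I would verify the axioms. That $(\bI_\otimes,\mu,\nu)$ is a monoid in $(\boxtimes,\bI_\boxtimes)$ and $(\bI_\boxtimes,\Delta,\nu)$ is a comonoid in $(\otimes,\bI_\otimes)$ reduces, on the single fibre where the corresponding diagram is non-trivial, to associativity (respectively coassociativity) and unitality of $\K$. The axioms coupling $sh$ with $\mu$, $\Delta$, $\nu$ reduce fibrewise to the observation that on a configuration $\underline x$ the only split contributing a factor of $\bI_\boxtimes$ is $(\emptyset,\underline x)$ or $(\underline x,\emptyset)$, so each diagram collapses to the identification of a single summand with itself. The associativity axioms for $sh$ applied to iterated Cauchy and Hadamard products amount, after passing to $\OConf_k(M)$, to the statement that both sides pick out the same diagonal subsum indexed by ordered partitions of $\{1,...,k\}$ into three parts, and on each such summand they agree up to a reordering of tensor factors which is identical along the two paths.

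The principal obstacle is only bookkeeping, since there are several compatibility diagrams to check. However, every structure map is built summand-by-summand from identity maps and the canonical tensor product over $\K$, with no signs or normalizations, so each axiom holds summand-by-summand by inspection. Finally, symmetry of the $2$-monoidal structure, namely the compatibility of $sh$ with the braidings $\tau_\otimes$ and $\tau_\boxtimes$ of Lemma \ref{lem:monoidal}, is immediate from the explicit formula for $sh$ on fibres.
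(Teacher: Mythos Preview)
Your proposal is correct and follows essentially the same approach as the paper: a direct fibrewise verification of the axioms, relying on the fact that all structure maps are built from identities and the multiplication of $\K$. The paper's proof is terser and additionally remarks that the fibrewise verification can alternatively be deduced from \cite[Proposition~8.68]{Aguiar-Mahajan-2010}, where the analogous statement is proved for species, but your more explicit outline of the individual checks is entirely in the same spirit.
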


\begin{proof}
The proof goes by direct verification of the axioms. We note that while all involved operations and morphisms have to be (and are) smooth, the verification of the identities can be done fibre by fibre, since the equations are satisfied, once they are satisfied on each fibre separately. The verification of the axioms fibrewise can be carried out by direct computation, it also follows from \cite[Proposition 8.68]{Aguiar-Mahajan-2010}, where it is formulated in terms of species.
\end{proof}

\begin{remark}
In fact, this category is even \emph{framed monoidal} in the sense of \cite[Definition 3.2.6]{Herscovich-2019}, with framing given by the forgetful functor
$(\VB(\UConf(M)),\boxtimes,\otimes) \to (\VB(\UConf(M)),\boxtimes)$, and coherence morphisms $\psi_0=\mathrm{id}$, $\psi_2=\mathrm{id}$, $\phi_0=\nu$ and $\phi_2:\bV\boxtimes \bW\to \bV\otimes \bW$ given by the isomorphism over the zero-point configuration $\emptyset\in\UConf(M)$ and trivial elsewhere.
\end{remark}

\begin{lemma} \label{lem:dual VB monoidal}
On finite-dimensional vector bundles, taking the dual vector bundle is a monoidal functor on \cite{Aguiar-Mahajan-2010, Herscovich-2019}: $\VB(\UConf(M))$ for both the Hadamard and the Cauchy tensor products, that is, 
\begin{align*}
 (\bV \otimes \bW)^* &\cong \bV^* \otimes \bW^* \\ 
 (\bV \boxtimes \bW)^* &\cong \bV^* \boxtimes \bW^*
\end{align*}
for any (component-wise finite-dimensional) vector bundles $\bV$ and $\bW$.
\end{lemma}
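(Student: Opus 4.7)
The plan is to verify the two isomorphisms fibrewise using finite-dimensional linear algebra, then upgrade to smooth bundle isomorphisms, and finally check the monoidality conditions.

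For the Hadamard case, the bundles $\bV\otimes \bW$ and its factors decompose as direct sums of components indexed by $k\geq 0$, each living over $\UConf_k(M)$, where $\otimes$ reduces to the ordinary tensor product of vector bundles. For finite-rank vector bundles on any smooth manifold, the standard identification $(V\otimes W)^*\cong V^*\otimes W^*$ holds as a smooth bundle isomorphism, so one just assembles this over all $k$.

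For the Cauchy case, I would start fibrewise. Over $\underline{x}\in \UConf(M)$ with $|\underline{x}|=k$, the fibre of $\bV\boxtimes \bW$ is a \emph{finite} direct sum
\[
(\bV\boxtimes \bW)_{\underline{x}}=\bigoplus_{\underline{x}=\underline{x}'\sqcup \underline{x}''} \bV_{\underline{x}'}\otimes \bW_{\underline{x}''},
\]
since there are $2^k$ ordered splits. Taking duals commutes with finite direct sums and, by component-wise finite-dimensionality, with the tensor products appearing inside, yielding a canonical linear isomorphism
\[
(\bV\boxtimes \bW)^*_{\underline{x}}\;\cong\; \bigoplus_{\underline{x}=\underline{x}'\sqcup \underline{x}''} \bV^*_{\underline{x}'}\otimes \bW^*_{\underline{x}''}\;=\;(\bV^*\boxtimes \bW^*)_{\underline{x}}.
\]
To see that this collection of fibrewise isomorphisms assembles into a smooth bundle isomorphism, I would use the explicit local trivializations of $\bV\boxtimes \bW$ constructed just after Theorem \ref{thm:Cauchy vs external}: over a chart $q(U_1\times\cdots\times U_k)$ chosen so that $\bV$ and $\bW$ are trivializable over every sub-product $q(U_{i_1}\times\cdots\times U_{i_l})$, the isomorphism \eqref{eq:liftiso} trivializes $\bV\boxtimes \bW$. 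Alternatively, one can pull back via $q^*$ and invoke the analogous (and routine) statement for the external tensor product $\boxtimes'$ on $\OConf(M)$, namely that $q_k^*(\bV\boxtimes \bW)^*\cong \bigoplus q_{|I|}^*(\bV)^*\boxtimes' q_{|J|}^*(\bW)^*$; since $q_k$ is a local diffeomorphism, smoothness descends.

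To upgrade these isomorphisms from mere natural transformations to a statement of \emph{monoidal} functoriality, I would check compatibility with the associators, unit constraints, and braidings on both sides in the sense of \cite{MacLane-1998}. For the Hadamard structure this is the classical fact that ordinary duality is symmetric monoidal on finite-dimensional vector bundles. For the Cauchy structure, the units are $\bI_\boxtimes$, which is self-dual in a trivial way (a copy of $\K$ at $\emptyset$ and zero elsewhere), and the braiding $\tau_\boxtimes$ just swaps the ordered split $\underline{x}'\sqcup \underline{x}''$, so the duality pairing is compatible with it by inspection; associativity of the splits also makes the pentagon commute. The main (mild) obstacle is purely bookkeeping: tracking signs and indices across the finite direct sum of splits when composing several of these isomorphisms, for which the reduction to $\OConf(M)$ via $q^*$ --- where the Cauchy product becomes an honest direct sum of external tensor products indexed by shuffles, as in Theorem \ref{thm:Cauchy vs external} --- is the cleanest bookkeeping device.
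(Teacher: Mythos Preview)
Your proposal is correct and follows essentially the same approach as the paper: verify the isomorphisms fibrewise by using that duals commute with finite direct sums and with tensor products of finite-dimensional vector spaces. The paper's own proof is in fact much terser---it writes out exactly the one-line fibrewise computation for the Cauchy case and declares the Hadamard case immediate, without explicitly addressing smoothness or the coherence axioms for monoidality; your discussion of local trivializations and of the unit/braiding compatibilities is a welcome elaboration of what the paper leaves implicit.
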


When $\bV$, $\bW$ are infinite-dimensional the tensor products have to be appropriately completed.

\begin{proof}
This follows from the fact that the dual vector bundle is the bundle of dual components, which are usual vector bundles of finite rank. For the Hadamard tensor product the isomorphism is immediate. For the Cauchy tensor product, we have 
\begin{align*}
 (\bV\boxtimes \bW)^* 
 &= \bigsqcup_{\underline{x}\in \UConf(M)} \left(\bigoplus_{\underline{x}=\underline{x}'\sqcup \underline{x}''} (\bV_{\underline{x}'}\otimes \bW_{\underline{x}''})\right)^* 
 \cong \bigsqcup_{\underline{x}\in \UConf(M)} \bigoplus_{\underline{x}=\underline{x}'\sqcup \underline{x}''} (\bV_{\underline{x}'}^*\otimes \bW_{\underline{x}''}^*) 
 = \bV^*\boxtimes \bW^* . 
\end{align*}
\end{proof}

\begin{remark}\label{rk:i* monoidal Hadamard}
Using the pullback and the pushforward along the embedding $i:M\hookrightarrow \UConf(M)$, we can compare the tensor product of vector bundles over $M$ and over $\UConf(M)$: 
\begin{itemize}
\item Both functors, the pullback $i^*$ and the pushforward $i_*$, are monoidal with respect to the tensor product in $\VB(M)$ and to the Hadamard tensor product in $\VB(\UConf(M))$, that is, 
$i^*(\bV\otimes \bW) \cong i^*\bV\otimes i^*\bW$
for any $\bV, \bW\in \VB(\UConf(M))$, and 
$i_*(V\otimes W) \cong i_*V\otimes i_*W$ for any $V, W\in \VB(M)$. 

\item However, there is no tensor product in $\VB(M)$ which makes the pullback $i^*$ and the pushforward $i_*$ monoidal with respect to the Cauchy tensor product $\boxtimes$ in $\VB(\UConf(M))$: The vector bundle $i^*(\bV\boxtimes \bW) = \bV_1\otimes \bW_0 \oplus \bV_0\otimes \bW_1 \to M$ cannot be constructed from the two bundles $i^*\bV = \bV_1\to M$ and $i^*\bW = \bW_1\to M$. Similarly, $i_*V \boxtimes i_*W$ does not lie in the image of $i_*$ unless $V$ or $W$ is trivial.
\end{itemize}
\end{remark}

%%%%%%%%%%%%%%%%%%%%%%%%%%%%%%%%%%%%%%%%%%%%%%%%%%%%%%%%%%%%%%%%%%

\section{Algebra bundles over configuration spaces}
\label{sec:3}
In this section we discuss algebra bundles in the 2-monoidal category $(\VB(\UConf(M)),\boxtimes,\otimes)$. In particular, in Subsection \ref{subsec:Cauchy algebra bundles} we construct the tensor algebra $\bT^\boxtimes$ and free commutative algebra $\bS^{\boxtimes}$ and show their properties in relation to the exterior tensor product $\exttens$ on powers of $M$. As an important example of a $\boxtimes$-algebra, we discuss the bundle of Densities over $\UConf(M)$ in Subsection \ref{subsec:density}. Finally, in Subsection \ref{subsec:Cauchy-Hadamard 2-algebra bundles}, we discuss 2-algebras and the construction of $\bS^\boxtimes S^\otimes V$, which will be instrumental in the subsequent section.

\subsection{Hadamard algebra bundles}
\label{subsec:Hadamard algebra bundles}

Following the notations of \cite[Definition 1.9]{Aguiar-Mahajan-2010}, let us call \emph{$\otimes$-algebra bundle} over $\UConf(M)$ a monoid in the category $(\VB(\UConf(M)),\otimes)$, that is, a vector bundle $\bA\in \VB(\UConf(M))$ endowed with two bundle maps 
\begin{align}\label{def:Hadamard algebra bundle}
\begin{array}{rl}
\text{$\otimes$-multiplication} & m_{\otimes}:\bA \otimes \bA \to \bA \\ 
\text{$\otimes$-unit} & u_{\otimes}:\bI_{\otimes} \to \bA 
\end{array}
\end{align}
such that $m_{\otimes}$ is associative with unit $u_{\otimes}$, in the usual sense. 
This means that there is a structure of associative unital algebra on the fibre $\bA_{\underline{x}}$ above any $k$-point configuration $\underline{x}$. 
The $\otimes$-algebra bundle $\bA$ is said to be \emph{commutative} if $m_{\otimes}\circ \tau_\otimes = m_\otimes$, where $\tau_\otimes$ is the braiding of Lemma \ref{lem:monoidal}. If $\bA$ is a graded vector algebra bundle, it is called \emph{graded-commutative} if $m_{\otimes}\circ \tau_\otimes^\sgn = m_\otimes$, where $\tau_\otimes^\sgn$ is the braiding of Remark \ref{rem:graded braiding}. 

A \emph{morphism of $\otimes$-algebra bundles} is a bundle map $\bA\to \bA'$ which respects to the multiplications and the units. 
\begin{example} \label{ex: H-unit = H-algebra}
The unit bundle $\bI_{\otimes}$ is a commutative $\otimes$-algebra bundle over $\UConf(M)$, with the pointwise multiplication $\K_{\underline {x}}\otimes \K_{\underline {x}}\to \K_{\underline x}$ over each configuration $\underline x$ and the identity $\bI_{\otimes}\to \bI_{\otimes}$ as a unit.
\end{example}

\begin{example} \label{ex:Hadamard tensor bundle}
The usual tensor powers of vector bundles can be extended from $\VB(M)$ to $\VB(\UConf(M))$, with the same universal property. 
For a vector bundle $\bV$ over $\UConf(M)$, we call \emph{Hadamard tensor bundle} of $\bV$ the vector bundle 
\begin{align*}
 \bT^{\otimes}(\bV) := \bigoplus_{n=0}^\infty \bV^{\otimes n} = I_{\otimes} \oplus \bV \oplus \bV^{\otimes 2} \oplus \cdots 
\end{align*}
with $k$-component and fibre above a configuration $\underline x$ assembled from those of each $n$-fold Hadamard tensor $\bV^{\otimes n}$, that is, 
\begin{align*}
\bT^{\otimes}(\bV)_k & = T(\bV_k), \\ 
\bT^{\otimes}(\bV)_{\underline x} &= T(\bV_{\underline x}). 
\end{align*}
Then $\bT^{\otimes}(\bV)$ is a graded $\otimes$-algebra bundle on $\UConf(M)$ with multiplication given by the usual concatenation on each fibre, denoted $\otimes$, and the unit $u:\bI_\otimes\to \bT^{\otimes}(\bV)$ given by the inclusion. 
It is easy to see that $\bT^{\otimes}(\bV)$ is the free $\otimes$-algebra in $\VB(\UConf(M))$ generated by $\bV$. 

As usual, the symmetric group $S_n$ acts on $\bV^{\otimes n}$ by permuting the $n$ factors. 
The quotients by this action give the \emph{Hadamard symmetric bundle} of $\bV$
\begin{align*}
 \bS^{\otimes}(\bV) &:= \bigoplus_{n=0}^\infty \bV^{\otimes n}/S_n 
 \cong \bigsqcup_{k} S(\bV_k) 
 \cong \bigsqcup_{\underline{x}} S(\bV_{\underline{x}}), 
\end{align*}
which inherits the structure of a $\otimes$-algebra bundle with commutative multiplication denoted $\odot$. 

If, instead, we consider the action $\sgn_n$ of $S_n$ on $\bV^{\otimes n}$ given by the signed permutation of the $n$ factors, the quotients give the \emph{Hadamard exterior bundle}
\begin{align*}
 \bL^{\otimes}(\bV) &:= \bigoplus_{n=0}^\infty \bV^{\otimes n}/\sgn_n
 \cong \bigsqcup_{k} \Lambda(\bV_k) 
 \cong \bigsqcup_{\underline{x}} \Lambda(\bV_{\underline{x}}), 
\end{align*}
which is a $\otimes$-algebra bundle with graded-commutative multiplication denoted $\owedge$. 

For $V\in \VB(M)$, the extended bundle $i_*V\in \VB(\UConf(M))$ is connected and we can apply the functors $\bT^\otimes$ and $\bS^\otimes$. 
From Remark \ref{rk:i* monoidal Hadamard}, it then follows that 
\begin{align} \label{eq:Si*=i*S}
\bT^{\otimes}(i_*V) \cong i_*T(V), 
\qquad \bS^{\otimes}(i_*V) \cong i_*S(V)
\qquad\mbox{and}\qquad 
\bL^{\otimes}(i_*V) \cong i_*\Lambda(V). 
\end{align}
\end{example}

\begin{remark} \label{rem:Hadamard Sigma}
The whole construction of this section can be dualized: one can consider \emph{$\otimes$-coalgebra bundles} over $\UConf(M)$ as comonoids in $(\VB(\UConf(M)),\otimes)$, with $\otimes$-comultiplication and $\otimes$-counit duals to the Equation \eqref{def:Hadamard algebra bundle}. 
If $\bC$ is a $\otimes$-coalgebra bundle, with $\otimes$-comultiplication $\Delta:\bC\to \bC\otimes \bC$ and counit $\varepsilon: \bC\to I_\otimes$, then its dual vector bundle $\bC^*=\Hom(\bC,\bI_\otimes)$ is naturally a $\otimes$-algebra bundle, with $\otimes$-multiplication of two bundle maps $a,b:\bC\to \bI_\otimes$ given by the convolution product $a \cdot b = m \circ (a \otimes b) \circ \delta$~:
$$
\begin{tikzcd}
\bC\arrow[r, "a \cdot b"] \arrow[d, "\Delta"'] & \bI_\otimes \\
\bC \otimes \bC \arrow[r, "a \otimes b"'] & \bI_\otimes \otimes \bI_\otimes \arrow[u, "m"']
\end{tikzcd}
$$
where $m$ is the commutative multiplication of Example \ref{ex: H-unit = H-algebra}, and unit $u(1_{\underline{x}}) = \varepsilon_{\underline{x}}$. 
Evidently, if $\bC$ is a cocommutative $\otimes$-coalgebra bundle (in the usual sense, with braiding $\tau_\otimes$ given in Lemma \ref{lem:monoidal}), then $\bC^*$ is a commutative $\otimes$-algebra bundle. 

It is easy to prove that the Hadamard tensor bundle of a connected vector bundle $\bV$, given in Example \ref{ex:Hadamard tensor bundle}, is naturally also a coalgebra bundle with the usual cofree deconcatenation comultiplication
\begin{align} \label{eq:Hadamard coproduct}
 \Delta(v_1\otimes\cdots \otimes v_n) = \sum_{i=0}^n (v_0 \otimes\cdots\otimes v_i) \,\otimes\, (v_{i+1}\otimes\cdots \otimes v_{n+1}) 
\end{align}
where $v_i\in \bV_{\underline{x}}$ and we set $v_0=v_{n+1}=1_{\underline{x}}$, and counit $\varepsilon(1_{\underline{x}})=1_{\underline{x}}$ and $\varepsilon(v_1\otimes\cdots \otimes v_n)=0_{\underline{x}}$ if $n>0$. 
We denote this $\otimes$-coalgebra bundle by $\bT^{c,\otimes}(\bV)$, to be distinguished from the previous $\otimes$-algebra bundle $\bT^\otimes(\bV) = \bT^{a,\otimes}(\bV)$. Then, the \emph{subbundle of symmetric tensors} (i.e. invariant under the action of the symmetric groups by permutations)
\begin{align} \label{eq:def Hadamard Sigma}
 \bSigma^{\otimes}(\bV) := \bigoplus_{n=0}^\infty (\bV^{\otimes n})^{S_n} \cong \bigsqcup_k \Sigma(\bV_k) \cong \bigsqcup_{\underline{x}} \Sigma(\bV_{\underline{x}})
\end{align}
and the \emph{subbundle of graded-symmetric or alternating tensors} (i.e. invariant under the action of the symmetric group by signed permutations) 
\begin{align} \label{eq:def Hadamard Lambda}
 \bL^{c,\otimes}(\bV) := \bigoplus_{n=0}^\infty (\bV^{\otimes n})^{\sgn_n} \cong \bigsqcup_k \Lambda^c(\bV_k) \cong \bigsqcup_{\underline{x}} \Lambda^c(\bV_{\underline{x}})
\end{align}
are sub-coalgebra bundles of $\bT^{c,\otimes}(\bV)$ (and \emph{not} sub-algebra bundles of $\bT^\otimes(\bV))$. 

Finally, it is easy to prove also that the linear duality of bundles intertwines the Hadamard tensor bundle with itself, bringing the $\otimes$-coalgebra structure into the $\otimes$-algebra one, and the bundle of symmetric tensors with the symmetric bundle. The only subtle point is that under duality the ind-finite and the pro-finite constructions are exchanged:
there are isomorphisms of $\otimes$-algebra bundles 
\begin{align} \nonumber
 \left(\hat{\bT}^{c,\otimes}(\bV)\right)^* & \cong \bT^{a,\otimes}(\bV^*) & \left(\bT^{c,\otimes}(\bV)\right)^* & \cong \hat{\bT}^{a,\otimes}(\bV^*) 
 \\ \label{eq:Hadamard Sigma(V)^*=S(V^*)}
 \left(\hat{\bSigma}^{\otimes}(\bV)\right)^* & \cong \bS^{\otimes}(\bV^*) & \left(\bSigma^{\otimes}(\bV)\right)^* & \cong \hat{\bS}^{\otimes}(\bV^*)
 \\ \nonumber
 \left(\hat{\bL}^{c,\otimes}(\bV)\right)^* & \cong \bL^{\otimes}(\bV^*) & \left(\bL^{c,\otimes}(\bV)\right)^* & \cong \hat{\bL}^{\otimes}(\bV^*), 
\end{align}
where the notations $\hat{\bT}^\otimes$, $\hat{\bS}^\otimes$, $\hat{\bL}$ and $\hat{\bSigma}^\otimes$, $\hat{\bL}^c$ denote the \emph{completed vector bundles}, that is, the fibrewise pro-finite collections obtained from the ind-finite original bundles, with simultaneous projection over all components of $\UConf(M)$. 
In particular, for $V\in \VB(M)$ one gets $\bSigma^{\otimes}(i_*V) \cong i_*\Sigma(V)$, and the usual isomorphism $(\hat{\Sigma}(V))^*\cong S(V)$ of algebra bundles on $M$ extends to $\UConf(M)$ using \eqref{eq:Hadamard Sigma(V)^*=S(V^*)} together with \eqref{eq:Si*=i*S}. 
\end{remark}

%%%%%%%%%%%%%%%%%%%%%%%%%%%%%%%%%%%%%%%%%%%%%%%%%%%%%%%%%%%%%%%%%%%

\subsection{Cauchy algebra bundles}
\label{subsec:Cauchy algebra bundles}

Let us now consider algebra bundles with respect to the Cauchy tensor product. Call \emph{$\boxtimes$-algebra bundle} over $\UConf(M)$ a monoid in the category $(\VB(\UConf(M)),\boxtimes)$, that is, a vector bundle $\bA\in \VB(\UConf(M))$ endowed with two bundle maps 
\begin{align}\label{def:Cauchy algebra bundle}
\begin{array}{rl}
\text{$\boxtimes$-multiplication} & m_{\boxtimes}:\bA \boxtimes \bA \to \bA \\ 
\text{$\boxtimes$-unit} & u_{\boxtimes}:\bI_{\boxtimes} \to \bA 
\end{array}
\end{align}
such that $m_{\boxtimes}$ is associative with unit $u_{\boxtimes}$. 
This means that for any $\underline{x}\in \UConf(M)$ there is necessarily a map of the form 
\begin{align*}
 m_{\underline{x}}=\sum_{\underline{x}=\underline{x}'\sqcup \underline{x}''} m_{\underline{x}',\underline{x}''} : (\bA\boxtimes \bA)_{\underline{x}} = \bigoplus_{\underline{x}=\underline{x}'\sqcup \underline{x}''} \bA_{\underline{x}'} \otimes \bA_{\underline{x}''} \to \bA_{\underline{x}}, 
\end{align*}
where $m_{\underline{x}',\underline{x}''}:\bA_{\underline{x}'}\otimes \bA_{\underline{x}''}\to \bA_{\underline{x}}$, is an associative product. Associativity means that the following diagram commutes:
$$
\begin{tikzcd}
\ds \bigoplus_{\underline{x}=\underline{x}'\sqcup \underline{x}''\sqcup \underline{x}'''} \bA_{\underline{x}'} \otimes \bA_{\underline{x}''} \otimes \bA_{\underline{x}'''}
\arrow[rr, "\sum \id\otimes m_{\underline{x}'',\underline{x}'''}"] 
\arrow[dd, "\sum m_{\underline{x}',\underline{x}''}\otimes \id"']
&& \ds \bigoplus_{\underline{x}=\underline{x}'\sqcup \underline{x}''\sqcup \underline{x}'''} \bA_{\underline{x}'} \otimes \bA_{\underline{x}''\sqcup \underline{x}'''} 
\arrow[dd, "\sum m_{\underline{x}',\underline{x}''\sqcup \underline{x}'''}"] 
\\ && \\ 
\ds \bigoplus_{\underline{x}=\underline{x}'\sqcup \underline{x}''\sqcup \underline{x}'''} \bA_{\underline{x}'\sqcup \underline{x}''} \otimes \bA_{\underline{x}'''} 
\arrow[rr, "\sum m_{\underline{x}'\sqcup \underline{x}'', \underline{x}'''}"] 
&& \ds \bA_{\underline{x}}
\end{tikzcd}
$$
If for any \emph{disjoint} configurations $\underline{x}'$ and $\underline{x}''$ we denote $a'_{\underline{x}'}\bullet a''_{\underline{x}''}= m_{\underline{x}',\underline{x}''}(a'_{\underline{x}'},a''_{\underline{x}''})$, we can shorten the associativity as
\begin{align*}
 \sum (a'_{\underline{x}'}\bullet a''_{\underline{x}''}) \bullet a'''_{\underline{x}'''} &= \sum a'_{\underline{x}'}\bullet (a''_{\underline{x}''} \bullet a'''_{\underline{x}'''})
\end{align*}
where the sum is over all splits $\underline{x}=\underline{x}'\sqcup \underline{x}''\sqcup \underline{x}'''$. 
The unit map is then a map $u_{\boxtimes}:\K_{\emptyset}\to \bA_{\emptyset}$, denoted $u_{\boxtimes}(1)=1_{\emptyset}$, such that 
\begin{align*}
 a_{\underline{x}} \bullet 1_{\emptyset} &= 1_{\emptyset} \bullet a_{\underline{x}} = a_{\underline{x}}. 
\end{align*}

The $\boxtimes$-algebra bundle $\bA$ is said to be \emph{commutative} if $m_{\boxtimes}\circ \tau_\boxtimes = m_\boxtimes$, where $\tau_\boxtimes$ is the braiding of Lemma \ref{lem:monoidal}. 
This means that if a configuration $\underline{x}$ splits as $\underline{x}'\sqcup \underline{x}''$, and therefore also as $\underline{x}''\sqcup \underline{x}'$, the sums over the splits of the two multiplications 
$\bA_{\underline{x}'}\otimes \bA_{\underline{x}''}\to \bA_{\underline{x}}$ and $\bA_{\underline{x}''}\otimes \bA_{\underline{x}'}\to \bA_{\underline{x}}$ give the same result, that is, 
\begin{align*}
 \sum a'_{\underline{x}'}\bullet a''_{\underline{x}''} = \sum a''_{\underline{x}''}\bullet a'_{\underline{x}'}. 
\end{align*}
If $\bA$ is a graded $\boxtimes$-algebra bundle (that is, $\bA$ is a graded vector bundle and the $\boxtimes$-multiplication preserves the grading), then $\bA$ is called \emph{graded-commutative} if $m_\boxtimes \circ \tau_\boxtimes^\sgn = m_\boxtimes$, where $\tau_\boxtimes^\sgn$ is the braiding of Remark \ref{rem:graded braiding}. 

A \emph{morphism of $\boxtimes$-algebra bundles} is a bundle map $\bA\to \bA'$ which respect to the multiplications and the units. 

\begin{example} \label{ex: H-unit = C-algebra}
The unit bundle $\bI_{\otimes}$ is a commutative $\boxtimes$-algebra bundle over $\UConf(M)$, with multiplication $\mu$ and unit $\nu$ given in Theorem \ref{thm: VB(Conf(M)) is 2-monoidal}. 
\end{example}

We now give another example of a commutative $\boxtimes$-algebra bundle over $\UConf(M)$, which plays a key role in field theory. 
We start by iterating the construction of Lemma \ref{eq:pullback pushforward} on multiple copies of the same vector bundle, to define the $\boxtimes$-tensor algebra bundle, and then consider the coinvariants under the action of the symmetric group to define the symmetric $\boxtimes$-tensor algebra. 

\begin{definition} \label{def:Cauchy tensor bundle}
Let $V\in\VB(M)$. We call \emph{Cauchy tensor bundle} of $V$ the vector bundle $\bT^{\boxtimes}(V)\in \VB(\UConf(M))$ given by the iterated $\boxtimes$-tensor powers of $i_*V$
\begin{align*}
\bT^{\boxtimes}(V) := \bigoplus_{n=0}^\infty (i_*V)^{\boxtimes n} 
= I_{\boxtimes} \oplus i_*V \oplus (i_*V)^{\boxtimes 2} \oplus \cdots
\end{align*}
with $k$-component coinciding with the $k$-fold Cauchy tensor $(i_*V)^{\boxtimes k}$, that is 
\begin{align*}
\bT^{\boxtimes}(V)_k \equiv \bT^{\boxtimes k}(V) & =
\begin{cases}
 \K \times \UConf_0(M) & \text{if } k=0 \\
 \underbrace{(i_*V\boxtimes \cdots \boxtimes i_*V)}_{\text{$k$ times}}|_{\UConf_k(M)} & \text{if } k\geq 1
\end{cases} 
\end{align*}
\end{definition}

On a configuration $\underline x=\{x_1,...,x_k\}$, the fibre of $\bT^{\boxtimes}(V)$ is the tensor product of fibres of $V$ over single points $x_1,...,x_k\in M$, up to a permutation of the $k$ points, that is, 
\begin{align}\label{eq:TVoverapoint}
\bT^{\boxtimes}(V)_{\underline x} = \bT^{\boxtimes k}(V)_{\underline x} 
&= \bigoplus_{\sigma\in S_k} V_{\sigma(x_1)}\otimes \cdots \otimes V_{\sigma(x_k)} . 
\end{align}
If, for arbitrary $k$ vector spaces $W_1,...,W_k$ we denote 
\begin{align}\label{eq:boxtimes on vector spaces}
 W_1\boxtimes \cdots \boxtimes W_k &:= \bigoplus_{\sigma\in S_k} W_{\sigma(1)}\otimes \cdots \otimes W_{\sigma(k)}, 
\end{align}
we can shortly write $\bT^{\boxtimes k}(V)_{\underline x} = V_{x_1} \boxtimes \cdots \boxtimes V_{x_k}$. 
Note that, with this convention, when we write $v_1\otimes\cdots\otimes v_k \in V_{x_1} \boxtimes \cdots \boxtimes V_{x_k}$ we mean that there is a permutation $\sigma\in S_k$ such that $v_i\in V_{x_{\sigma^{-1}(i)}}$ for any $i=1,\dots,k$. 

\begin{theorem}\label{thm:Cauchy tensor bundle}
The Cauchy tensor bundle $\bT^{\boxtimes}(V)$ is a graded $\boxtimes$-algebra bundle over $\UConf(M)$, with multiplication 
$$
m:\bT^{\boxtimes}(V)\boxtimes \bT^{\boxtimes}(V)\to \bT^{\boxtimes}(V) 
$$
given by the concatenation 
$$
m(v_1\otimes\cdots\otimes v_i,w_1\otimes\cdots\otimes w_j) = v_1\otimes\cdots\otimes v_i \otimes w_1\otimes\cdots\otimes w_j, 
$$
and unit $u:\bI_\boxtimes\to \bT^{\boxtimes}(V)$ given by the inclusion. 
Moreover, $\bT^{\boxtimes}(V)$ is the free $\boxtimes$-algebra in $\VB(\UConf(M))$ generated by $V\in \VB(M)$. 
\end{theorem}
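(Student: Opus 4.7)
The plan is to verify in sequence that concatenation endows $\bT^\boxtimes(V)$ with the structure of a $\boxtimes$-algebra bundle, that this structure is graded, and that it satisfies the universal property of a free $\boxtimes$-algebra. The key observation throughout is the fibrewise description \eqref{eq:TVoverapoint}: the fibre $\bT^{\boxtimes n}(V)_{\underline x}$ is nonzero only when $|\underline x|=n$, in which case it is the direct sum over permutations $\sigma\in S_n$ of the ordered tensor products $V_{x_{\sigma(1)}}\otimes\cdots\otimes V_{x_{\sigma(n)}}$. Hence $\bT^{\boxtimes n}(V)$ lives over $\UConf_n(M)$, and the decomposition $\bT^\boxtimes(V)=\bigoplus_n \bT^{\boxtimes n}(V)$ provides the grading for free.

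For the algebra structure, I first unpack the Cauchy self-product via Definition \ref{def:hadamard cauchy}: the fibre of $\bT^\boxtimes(V)\boxtimes\bT^\boxtimes(V)$ over a configuration $\underline x$ is the direct sum over ordered splits $\underline x=\underline x'\sqcup\underline x''$ of $\bT^\boxtimes(V)_{\underline x'}\otimes \bT^\boxtimes(V)_{\underline x''}$. On each such summand, concatenation identifies $(v_1\otimes\cdots\otimes v_i)\otimes(w_1\otimes\cdots\otimes w_j)$ with a tensor lying in the appropriate permutation summand of $\bT^{\boxtimes(i+j)}(V)_{\underline x}$. Associativity then reduces to the associativity of the underlying tensor products, and the unit axiom follows because the degree-zero summand $\bI_\boxtimes$ contributes only through those splits in which one half is the empty configuration. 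Smoothness of the multiplication is obtained by invoking Theorem \ref{thm:Cauchy vs external}: after lifting to $\OConf(M)$, concatenation corresponds under \eqref{eq:liftiso} to a natural inclusion among external tensor product summands, which is manifestly smooth.

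For the universal property, given a $\boxtimes$-algebra bundle $(\bA,m_\bA,u_\bA)$ and a bundle map $\phi:V\to \bA_1$ in $\VB(M)$ (equivalently, by full faithfulness of $i_*$, a bundle map $i_*V\to \bA$ in $\VB(\UConf(M))$), I would define $\tilde\phi:\bT^\boxtimes(V)\to\bA$ to coincide with $u_\bA$ in degree $0$, with $\phi$ in degree $1$, and in degree $n\geq 2$ by the iterated Cauchy multiplication $\tilde\phi(v_1\otimes\cdots\otimes v_n)=\phi(v_1)\bullet\cdots\bullet\phi(v_n)$, which is unambiguous by associativity of $m_\bA$. Checking that $\tilde\phi$ is a morphism of $\boxtimes$-algebra bundles again reduces to associativity via concatenation, and uniqueness is forced because the degree-$1$ summand $i_*V$ generates $\bT^\boxtimes(V)$ under iterated $\boxtimes$-multiplication. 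The one subtle point is the passage from fibrewise statements to smooth bundle maps, which is again handled by the local trivializations built from disjoint charts around the points of each configuration, as described after Theorem \ref{thm:Cauchy vs external}; everything else is routine bookkeeping.
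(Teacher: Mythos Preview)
Your proposal is correct and follows essentially the same approach as the paper: define concatenation on split summands, deduce associativity and unitality from the underlying tensor product, and prove freeness by extending a generator map via iterated $\boxtimes$-multiplication. The one notable difference is that the paper carries out an explicit shuffle-counting computation to identify the fibre $(\bT^\boxtimes(V)\boxtimes\bT^\boxtimes(V))_{\underline x}$ with $k{+}1$ copies of $\bT^{\boxtimes k}(V)_{\underline x}$ and then describes $m$ as the natural projection from each copy; this identity is not strictly needed for the theorem as stated, but the paper reuses it later (Remark~\ref{rem:Cauchy Sigma}) to define the deconcatenation coproduct on $\bT^{c,\boxtimes}(V)$. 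Your treatment of smoothness via Theorem~\ref{thm:Cauchy vs external} and local trivializations is actually more explicit than the paper's on that point.
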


\begin{proof}
To show that $\bT^{\boxtimes}(V)$ is a unital associative algebra, it is enough to show that $m$ is  well-defined (the rest follows immediately). 
For a $k$-point configuration $\underline{x}$, we have 
\begin{align}\label{eq:TboxT}
\left( \bT^{\boxtimes}(V)\boxtimes \bT^{\boxtimes}(V)\right)\!_{\underline x} 
& = \bigoplus_{\underline x =\underline x'\sqcup \underline x''} \bT^{\boxtimes}(V)_{\underline x'}\otimes \bT^{\boxtimes}(V)_{\underline x''}. 
\end{align}
Notice, in particular, that the two splits $\underline{x}=\underline{x}\sqcup \emptyset$ and $\underline{x}=\emptyset \sqcup \underline{x}$ are included in the sum. Call $i=|\underline{x}'|=0,\dots,k$, so that $j=|\underline{x}''|=k-i$, and consider the set of $(i,j)$-shuffle permutations $\USh(i,j)$. 
%A permutation $\sigma \in S_{i+j}$ is in $\USh(i,j)$ if $\sigma^{-1}(1)<\cdots <\sigma^{-1}(i)$ and $\sigma^{-1}(i+1)<\cdots <\sigma^{-1}(k)$, cf. e.g. \cite{Eilenberg-MacLane-1953}. 
If $i=0$ or $j=0$, the only $(0,k)$ and $(k,0)$ unshuffles are the unit permutation, and the corresponding terms in eq. \eqref{eq:TboxT} are two copies of $\bT^{\boxtimes k}(V)_{\underline x}$.
For $\underline{x}=\{x_1,\dots,x_k\}$, the sum in eq. \eqref{eq:TboxT} can be written as
\begin{align*}
& \underset{\sigma\in \USh(i,j)}{\underset{i,j\geq 0}{\bigoplus_{i+j=k}}} \bT^{\boxtimes i}(V)_{\{x_{\sigma^{-1}(1)},\dots,x_{\sigma^{-1}(i)}\}}\otimes \bT^{\boxtimes j}(V)_{\{x_{\sigma^{-1}(i+1)},\dots,x_{\sigma^{-1}(k)}\}} \\
&= \underset{\sigma\in \USh(i,j)}{\underset{i,j\geq 0}{\bigoplus_{i+j=k}}} \!\! \left(\bigoplus_{\tau_1\in S_i}V_{x_{\tau^{-1}(\sigma^{-1}(1))}}\otimes \cdots\otimes V_{x_{\tau^{-1}(\sigma^{-1}(i))}}\right) \otimes \left( \bigoplus_{\tau_2^\in S_j}V_{x_{\tau_2^{-1}(\sigma^{-1}(i+1))}}\otimes \cdots \otimes V_{x_{\tau_2^{-1}(\sigma^{-1}(k))}}\right).
\end{align*}
For each $i=0,...,k$ and $j=k-i$, there are $\binom{k}{i}=\frac{k!}{i!j!}$ $(i,j)$-unshuffles. Then, the sum over all $(i,j)$-unshuffles, combined with the sum over all permutations on $i$ and on $j$ elements, produces 
$$
\sum_{i=0}^k \binom{k}{i} i! j! = \sum_{i=0}^k k! = (k+1)k!
$$
terms, corresponding to $k+1$ sums over all permutations on $k$ elements, and we finally have
\begin{align} \nonumber
\left(\bT^{\boxtimes}(V)\boxtimes \bT^{\boxtimes}(V)\right)_{\underline x} 
&= \bigoplus_{i=1}^{k+1}\, \bigoplus_{\tilde{\sigma}\in S_k} V_{x_{\tilde{\sigma}^{-1}(1)}}\otimes \cdots \otimes V_{x_{\tilde{\sigma}^{-1}(k)}} 
\\ \label{eq:Tbox boxtimes Tbox = Tbox}
&= \bigoplus_{i=1}^{k+1}\, \bT^{\boxtimes k}(V)_{\underline{x}}. 
\end{align}
The multiplication $m$ is then just the natural projection from each copy of $\bT^{\boxtimes k}(V)_{\underline x}$ to itself. 

This implies that, for any $k\geq 2$, any $k$-fold tensor $v=v_1\otimes\cdots\otimes v_k\in \bT^{\boxtimes k}(V)_{x_1,...,x_k}$ can be seen as the product of its $1$-fold components $v_i\in V_{x_i}$. Adopting a standard abuse of notations, we then denote by $\boxtimes$ the concatenation $m$ on $\bT^{\boxtimes}$, and write
\begin{align}\label{eq:concatenation}
 m(v,w) &= v \boxtimes w 
 \\ \nonumber 
 v_1 \otimes\cdots\otimes v_k &= v_1 \boxtimes\cdots\boxtimes v_k. 
\end{align}

To see that $\bT^{\boxtimes}(V)$ is a free $\boxtimes$-algebra in $\VB(\UConf(M))$ generated by $V$, it suffices to note that $\bT^{\boxtimes}(V)$ comes with a natural inclusion $i_*V \hookrightarrow \bT^{\boxtimes}(V)$.

Then, for any $\boxtimes$-algebra bundle $\bA$ over $\UConf(M)$, with multiplication denoted by $m_{\bA}(a,a')=a\bullet_{\bA} a'$, and for any vector bundle morphism $f:i_*V\to \bA$ over $\UConf(M)$, there exists a unique algebra bundle morphism $\tilde{f}: \bT^{\boxtimes}(V) \to \bA$ which makes the usual diagram 
$$
\begin{tikzcd}
i_*V \arrow[r, hook] \arrow[dr, "f"']
& \bT^{\boxtimes}(V) \arrow[d, "\tilde{f}"] \\
& \bA
\end{tikzcd}
$$
commutative. For any $v_1\boxtimes\cdots\boxtimes v_k\in \bT^{\boxtimes k}(V)$, it is given by 
\begin{align*}
 \tilde{f}(v_1\boxtimes\cdots\boxtimes v_k) &= 
 \begin{cases}
 u_{\bA}(1) & \text{if } k=0 \\ 
 f(v_1)\bullet_{\bA} f(v_2)\bullet_{\bA} \cdots \bullet_{\bA} f(v_k) & \text{if } k\geq 1. 
 \end{cases}
\end{align*}
\end{proof}

A symmetric and a graded-symmetric version of $\bT^{\boxtimes}$ can be constructed by identifying the different summands of \eqref{eq:TVoverapoint}, which are all isomorphic. 
For any $n\geq 1$, consider the canonical action of the symmetric group $S_n$ on $\bT^{\boxtimes n}(V)=(i_*V)^{\boxtimes n}$ given, for any $\sigma\in S_n$ and any $v = v_{1}\otimes v_{2}\otimes\cdots\otimes v_{n} \in \bT^{\boxtimes n}(V)$, by
$$
\sigma(v) = v_{\sigma(1)}\otimes v_{\sigma({2})}\otimes\cdots\otimes v_{\sigma({n})}, 
$$
and the signature action $\sgn(\sigma)$ given by 
$$
\sgn(\sigma)(v) = (-1)^{|\sigma|} v_{\sigma(1)}\otimes v_{\sigma({2})}\otimes\cdots\otimes v_{\sigma({n})}. 
$$

\begin{definition}
For $V\in \VB(M)$, we call \emph{Cauchy symmetric bundle} $\bS^{\boxtimes}(V)$ the fibrewise quotient of $\bT^{\boxtimes}(V)$ by the action of the appropriate symmetric group, that is, the bundle 
\begin{align*}
\bS^{\boxtimes}(V) := \bigoplus_{n=0}^\infty (i_*V)^{\boxtimes n}/S_n 
= I_{\boxtimes} \oplus i_*V \oplus (i_*V)^{\boxtimes 2}/S_2 \oplus \cdots
\end{align*}
with $k$-component
\begin{align*}
\bS^{\boxtimes}(V)_k \equiv \bS^{\boxtimes k}(V) := \bT^{\boxtimes k} (V)/S_k. 
\end{align*}
Similarly, we call \emph{Cauchy exterior bundle} $\bL^{\boxtimes}(V)$ the fibrewise quotient of $\bT^{\boxtimes}(V)$ by the signature action, that is, the bundle 
\begin{align*}
\bL^{\boxtimes}(V) := \bigoplus_{n=0}^\infty (i_*V)^{\boxtimes n}/\sgn_n 
= I_{\boxtimes} \oplus i_*V \oplus (i_*V)^{\boxtimes 2}/\sgn_2 \oplus \cdots
\end{align*}
with $k$-component
\begin{align*}
\bL^{\boxtimes}(V)_k \equiv \bL^{\boxtimes k}(V) := \bT^{\boxtimes k} (V)/\sgn_k. 
\end{align*}
\end{definition}

Explicitly, on a configuration $\underline x\in \UConf_k(M)$, the fibre of $\bS^{\boxtimes k}(V)$ and of $\bL^{\boxtimes k}(V)$ is the set of equivalence classes $[v]$ of Cauchy tensors $v\in \bT^{\boxtimes k}(V)_{\underline x}$ under the equivalence relation generated by $v \sim \sigma(v)$ and by $v \sim (-1)^{\sgn(\sigma)} \sigma(v)$ for any $\sigma \in S_k$, namely
\begin{align*}
\bS^{\boxtimes k}(V)_{\underline x} & = {\bT^{\boxtimes k}(V)_{\underline x}} \ \big{/} \langle v-\sigma(v)~|~\sigma \in S_k\rangle \\ 
\bL^{\boxtimes k}(V)_{\underline x} & = {\bT^{\boxtimes k}(V)_{\underline x}} \ \big{/} \langle v-(-1)^{\sgn(\sigma)}\sigma(v)~|~\sigma \in S_k\rangle. 
\end{align*}

\begin{theorem}\label{thm:Cauchy symmetric bundle}
The Cauchy symmetric bundle $\bS^{\boxtimes}(V)$ is a commutative $\boxtimes$-algebra bundle over $\UConf(M)$, with multiplication and unit induced by those of $\bT^{\boxtimes}(V)$. 
Moreover, $\bS^{\boxtimes}(V)$ is the free commutative $\boxtimes$-algebra in $\VB(\UConf(M))$ generated by $V\in \VB(M)$. 

Similarly, the Cauchy exterior bundle $\bL^{\boxtimes}(V)$ is a graded-commutative $\boxtimes$-algebra bundle over $\UConf(M)$, and it is the free graded-commutative $\boxtimes$-algebra in $\VB(\UConf(M))$ generated by $V\in \VB(M)$. 
\end{theorem}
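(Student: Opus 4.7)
The plan is to deduce both statements from the corresponding results for the Cauchy tensor bundle established in Theorem \ref{thm:Cauchy tensor bundle}, by checking that the symmetric group action is compatible with the $\boxtimes$-multiplication and then using the universal property of the quotient.

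\textbf{Step 1: The multiplication descends.} First I would verify that the concatenation $m:\bT^{\boxtimes}(V)\boxtimes \bT^{\boxtimes}(V)\to \bT^{\boxtimes}(V)$ of Theorem \ref{thm:Cauchy tensor bundle} respects the equivalence relation defining $\bS^{\boxtimes}(V)$, i.e.\ that the ideal generated by $\{v-\sigma(v)\mid \sigma\in S_k,\ k\geq 0\}$ is a $\boxtimes$-ideal. On a split configuration $\underline{x}=\underline{x}'\sqcup \underline{x}''$ with $|\underline{x}'|=i$, $|\underline{x}''|=j$, the concatenation of $v_1\otimes\cdots\otimes v_i$ and $w_1\otimes\cdots\otimes w_j$ lies in $\bT^{\boxtimes(i+j)}(V)_{\underline{x}}$, and permuting the $v$'s (resp.\ the $w$'s) by $\sigma'\in S_i$ (resp.\ $\sigma''\in S_j$) produces a permutation of the concatenation by $\sigma'\times \sigma''\in S_{i+j}$. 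Hence $m$ descends to a well-defined map $\bS^{\boxtimes}(V)\boxtimes \bS^{\boxtimes}(V)\to \bS^{\boxtimes}(V)$, and the associativity and unitality are inherited directly.

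\textbf{Step 2: Commutativity.} To show $m\circ \tau_\boxtimes = m$ on $\bS^{\boxtimes}(V)$, I take a split $\underline{x}=\underline{x}'\sqcup \underline{x}''$ and two classes $[v]\in \bS^{\boxtimes}(V)_{\underline{x}'}$, $[w]\in \bS^{\boxtimes}(V)_{\underline{x}''}$. By Step 1, $m([v]\boxtimes [w])=[v_1\otimes\cdots\otimes v_i\otimes w_1\otimes\cdots\otimes w_j]$ and $m([w]\boxtimes [v])=[w_1\otimes\cdots\otimes w_j\otimes v_1\otimes\cdots\otimes v_i]$. Since the block-swap is itself a permutation in $S_{i+j}$, the two representatives lie in the same $S_{i+j}$-orbit, hence define the same class in $\bS^{\boxtimes(i+j)}(V)_{\underline{x}}$.

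\textbf{Step 3: Universal property.} Let $\bA$ be a commutative $\boxtimes$-algebra bundle on $\UConf(M)$ and $f:i_*V\to \bA$ a bundle morphism. By Theorem \ref{thm:Cauchy tensor bundle}, $f$ extends uniquely to a $\boxtimes$-algebra morphism $\tilde{f}:\bT^{\boxtimes}(V)\to \bA$ given by $\tilde{f}(v_1\boxtimes\cdots\boxtimes v_k)=f(v_1)\bullet_{\bA}\cdots\bullet_{\bA}f(v_k)$. The commutativity of $\bA$ (applied successively to transpositions, which generate $S_k$) shows that $\tilde{f}(\sigma(v))=\tilde{f}(v)$ for every $\sigma\in S_k$, so $\tilde{f}$ factors through the quotient $\bS^{\boxtimes}(V)$, yielding the required $\boxtimes$-algebra morphism $\bS^{\boxtimes}(V)\to \bA$. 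Uniqueness follows from the fact that $\bS^{\boxtimes}(V)$ is generated as a $\boxtimes$-algebra by the image of $i_*V$.

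\textbf{Step 4: The graded case.} For $\bL^{\boxtimes}(V)$, the entire argument is repeated after placing the generators in odd degree and replacing $\tau_\boxtimes$ by the signed braiding $\tau_\boxtimes^{\sgn}$ of Remark \ref{rem:graded braiding}; concretely, the $S_k$-action gets twisted by the signature, so permuting adjacent factors acquires a sign $-1$, matching the graded-commutativity of $\bA$. I expect Step 3 to be the only point requiring care: the subtlety is to confirm that the quotient $\boxtimes$-algebra $\bS^{\boxtimes}(V)$ really is generated by $i_*V$ as a $\boxtimes$-algebra, which is clear from the decomposition of $(\bS^{\boxtimes}(V)\boxtimes\bS^{\boxtimes}(V))_{\underline{x}}$ analogous to \eqref{eq:Tbox boxtimes Tbox = Tbox}, so uniqueness of the extension is automatic.
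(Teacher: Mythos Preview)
Your argument is correct and follows essentially the same route as the paper: pass to the quotient of $\bT^{\boxtimes}(V)$, check that the concatenation descends and becomes commutative, and deduce freeness from the universal property of $\bT^{\boxtimes}(V)$ (Theorem~\ref{thm:Cauchy tensor bundle}). Your Steps~1--3 spell out in more detail what the paper compresses into a couple of sentences.

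One point you omit, which the paper does address (briefly), is the verification that the fibrewise quotient $\bS^{\boxtimes}(V)$ is actually a smooth vector bundle over $\UConf(M)$ --- this is part of the statement, since a ``$\boxtimes$-algebra bundle'' is in particular a vector bundle. The paper handles this by invoking the strategy of Theorem~\ref{thm:Cauchy vs external}: pull back to a small open in $\OConf_k(M)$ where everything trivializes, and observe that the $S_k$-quotient of $\bT^{\boxtimes k}(V)$ is locally a direct summand. You should include a sentence to this effect; without it, the algebraic manipulations in Steps~1--3 take place on a collection of vector spaces rather than on a bona fide bundle, and the smoothness of the induced multiplication is not justified.
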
 

\begin{proof}
To see that the quotient space $\bS^{\boxtimes}(V)$ is indeed a smooth vector bundle over $\UConf(M)$, one can proceed by the same strategy as in the proof of Theorem \ref{thm:Cauchy vs external}, i.e. pull back the problem to a small neighborhood in $\OConf(M)$ where the involved vector bundles are trivial.
%because any $k$-point component $\UConf_k(M)$ is locally diffeomorphic to the ordered configuration space $\OConf_k(M)$ and the quotient of $\bT^{\boxtimes k}(V)$ by the action of $S_k$ is a smooth manifold on $\OConf_k(M)$. 
The multiplication $m$ and the unit $u$ of $\bT^{\boxtimes}(V)$ are graded, and pass to the quotient by the graded action of $S_k$. The induced multiplication $\bar m$ on $\bS^{\boxtimes}(V)$ is clearly commutative, and therefore the Cauchy symmetric space is indeed a commutative $\boxtimes$-algebra bundle on $\UConf(M)$. The fact that it is a free commutative $\boxtimes$-algebra bundle generated by $V\in \VB(M)$ follows from the fact that $\bT^{\boxtimes}(V)$ is the free $\boxtimes$-algebra.

The same arguments apply to the Cauchy exterior bundle $\bL^\boxtimes(V)$.
\end{proof}
\medskip 

To simplify the notations, an equivalence class $[v]\in \bS^{\boxtimes k}(V)$ is denoted by a representative $k$-fold tensor $v\in \bT^{\boxtimes k}(V)$. To avoid ambiguity, in analogy with the notation adopted for $\bT^{\boxtimes}(V)$ in the Equation \eqref{eq:concatenation}, let us introduce a notation $\boxdot$ for the product $\boxtimes$ factored out by the symmetric group action, and therefore write
\begin{align*}
 \bar m([v],[w]) &= v \boxdot w \\ 
 [v_1 \boxtimes\cdots\boxtimes v_k] &= v_1 \boxdot\cdots\boxdot v_k, 
\end{align*}
and finally also 
\begin{align*}
\bS^{\boxtimes}(V)_k \equiv \bS^{\boxtimes k}(V) & :=
\begin{cases}
 \K \times \UConf_0(M) & \text{if } k=0 \\
 \underbrace{i_*V\boxdot \cdots \boxdot i_*V}_{\text{k times}} & \text{if } k\geq 1
\end{cases} 
\end{align*}
If, applying eq. \eqref{eq:boxtimes on vector spaces}, for $k$ vector spaces $W_1,...,W_k$ we denote 
\begin{align} \label{eq:boxdot on vector spaces}
 W_1\boxdot \cdots \boxdot W_k &:= (W_1\boxtimes \cdots \boxtimes W_k)/S_k, 
\end{align}
we can also shortly write $S^{\boxtimes k}(V)_{\underline{x}} = V_{x_1}\boxdot\cdots \boxdot V_{x_k}$. 

Exactly in the same way, we can denote by $\boxwedge$ the product $\boxtimes$ factored out by the signature action of the symmetric groups and write $\bL^{\boxtimes k}(V)_{\underline{x}}=V_{x_1}\boxwedge \cdots \boxwedge V_{x_k}$. 
\\

Finally, each fibre $\bT^{\boxtimes k}(V)_{\underline{x}} = V_{x_1}\boxtimes \cdots \boxtimes V_{x_k}$ contains $k!$ isomorphic copies of the same vector space $V_{x_1}\otimes \cdots \otimes V_{x_k}$, which differ in the order of the base points. 
Instead, the fibre $\bS^{\boxtimes k}(V)_{\underline{x}}= V_{x_1}\boxdot \cdots \boxdot V_{x_k}$ is exactly isomorphic to $V_{x_1}\otimes \cdots \otimes V_{x_k}$, even though it is a \emph{symmetric space} (with respect to the Cauchy monoidal structure $\boxtimes$) and the order of the base points does not matter. 
This is the key result which allows us to use the symmetric Cauchy tensor algebra bundle over $\UConf(M)$ to represent the commutative product of multilocal functionals in field theory. In fact, this remark can be made more precise and gives a direct comparison between the external vector bundle $V\exttens V\to M\times M$ outside the diagonal (cf. Lemma \ref{lem:external tensor product on OConf(M)}) and the symmetric Cauchy vector bundle $V\boxdot V\to \UConf_2(M)$. 
The extension to the diagonal requires orbifolds is addressed in the following article \cite{nguyen2026}.

\begin{proposition} \label{prop:comparison external product with Cauchy product}
For any vector bundle $V\to M$ and for any $k\geq 0$ the vector bundles $V^{\exttens k}\to \OConf_k(M)$ and $q^*_k\bS^{\boxtimes}(V)\to \OConf_k(M)$ are isomorphic.
\end{proposition}

\begin{remark} The above proposition could also be taken as the definition of $\bS^{\boxtimes}(V)$, as was done in \cite[Definition 2.3]{tevelev-torres-2023} in the context of algebraic geometry. One interpretation of the $\boxtimes$-tensor product is: It is the tensor product with respect to which $\bS^{\boxtimes}(V)$ is the symmetric algebra object.
\end{remark}

\begin{proof}
The natural projection $p:\bT^{\boxtimes}(V)\to \bS^{\boxtimes}(V)$ induces a projection $q^*_kp:q_k^*\bT^{\boxtimes}(V)\to q_k^*\bS^{\boxtimes}(V)$. By Theorem \ref{thm:Cauchy vs external}, we have
$$
q_k^*\bT^{\boxtimes}(V)= \bigoplus_{\sigma\in S_k} \sigma^* V^{\exttens k}.
$$ 
In particular there is a natural inclusion $j:V^{\exttens k}\to q_k^*\bT^{\boxtimes}(V)$. The composition $q^*_kp\circ j$ can be seen to be an isomorphism by a fibrewise computation.
\end{proof}

The Cauchy tensors $\bT^{\boxtimes}$ and $\bS^{\boxtimes}$ are functors on $\VB(M)$ with values respectively in the category $\mathrm{Alg}(\VB(\UConf(M)),\boxtimes)$ of algebra bundles and in the category $\mathrm{CAlg}(\VB(\UConf(M)),\boxtimes)$ of commutative algebra bundles over $\UConf(M)$. So we can summarize the situation in the following (non-commutative) diagram, where the vertical arrows are forgetful functors:
$$
\begin{tikzcd}
&& \mathrm{CAlg}(\VB(\UConf(M)),\boxtimes) \arrow[d] \\
&&\mathrm{Alg}( \VB(\UConf(M)),\boxtimes)\arrow[d] \\
\VB(M) \arrow[uurr, "\bS^\boxtimes"] \arrow[urr, "\bT^{\boxtimes}"] \arrow[rr, "i_*"] && \VB(\UConf(M))
\end{tikzcd}
$$

A direct computation (using the compatibility of $i_*$ with the Hadamard tensor product, c.f. Remark \ref{rk:i* monoidal Hadamard}) shows that $\bS^{\boxtimes}$ is also compatible with the Hadamard tensor product: 

\begin{lemma}\label{lem: S^boxtimes strongly monoidal}
The functor $\bS^{\boxtimes}$ is strongly monoidal with respect to the Hadamard tensor product, in the sense of \cite{MacLane-1998}, i.e. there are natural isomorphisms 
\begin{align*}
 \bS^{\boxtimes}(V \otimes W) &\cong \bS^{\boxtimes}(V) \otimes \bS^{\boxtimes}(W) \\ 
 \bS^{\boxtimes}(I_{\otimes}) &\cong \bI_{\otimes}. 
\end{align*}
\end{lemma}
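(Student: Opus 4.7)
The plan is to construct both isomorphisms fibrewise, using the fibre formula noted just before Proposition \ref{prop:compairison external product with Cauchy product}, that $\bS^{\boxtimes k}(V)_{\underline x}\cong V_{x_1}\otimes\cdots\otimes V_{x_k}$ for $\underline x=\{x_1,\dots,x_k\}\in\UConf_k(M)$, and then check smoothness by pulling back to disjoint product charts on $\OConf(M)$ via $q_k$, exactly as in the proofs of Theorems \ref{thm:Cauchy vs external} and \ref{thm:Cauchy symmetric bundle}.

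For the unit, with $I_\otimes=M\times\K$, the fibre formula gives $\bS^\boxtimes(M\times\K)_{\underline x}\cong \K\otimes\cdots\otimes\K\cong \K$ on every configuration $\underline x$, which canonically matches the fibre of $\bI_\otimes=\UConf(M)\times\K$; smoothness is immediate. For the main isomorphism, given $V,W\in\VB(M)$ and $\underline x$ of length $k$, I would define
\[
\phi_{\underline x}\colon\bS^\boxtimes(V\otimes W)_{\underline x}\longrightarrow \bigl(\bS^\boxtimes(V)\otimes\bS^\boxtimes(W)\bigr)_{\underline x}
\]
by combining the identifications $\bS^\boxtimes(V\otimes W)_{\underline x}\cong \bigotimes_{i=1}^{k}(V_{x_i}\otimes W_{x_i})$ and $(\bS^\boxtimes(V)\otimes\bS^\boxtimes(W))_{\underline x}\cong\bigl(\bigotimes_{i=1}^{k}V_{x_i}\bigr)\otimes\bigl(\bigotimes_{i=1}^{k}W_{x_i}\bigr)$ via the canonical reordering of tensor factors. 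Each $\phi_{\underline x}$ is a linear bijection with the obvious inverse, and both sides are manifestly functorial in $V$ and $W$, so naturality is automatic.

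To upgrade $\phi=\bigsqcup_{\underline x}\phi_{\underline x}$ to a smooth bundle isomorphism, I would build the standard local trivializations around $\underline x$ as in \eqref{eq:chartconf}: pick disjoint charts $U_1,\dots,U_k$ around $x_1,\dots,x_k$ on which both $V$ and $W$ trivialize, pull back via $q_k$ to the open $U_1\times\cdots\times U_k\subset\OConf_k(M)$, and use the isomorphism \eqref{eq:liftiso} (together with Remark \ref{rk:i* monoidal Hadamard} to identify $i_*(V\otimes W)$ with $i_*V\otimes i_*W$). There $\phi$ becomes a constant permutation of tensor factors between trivial bundles, which is smooth; pushing back through the local diffeomorphism $q_k$ gives a smooth bundle isomorphism on $U=q_k(U_1\times\cdots\times U_k)$. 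To obtain a strong monoidal functor in the sense of \cite{MacLane-1998}, one finally has to verify compatibility of $\phi$ with the associator and the unit isomorphisms, but after applying the fibrewise identification this reduces to the corresponding coherence for the ordinary tensor product of $\K$-vector spaces.

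The main subtlety worth flagging, rather than a genuine obstacle, is that the quotient by the symmetric group actions is essential: the fibre $\bT^{\boxtimes k}(V\otimes W)_{\underline x}$ contains $k!$ copies of $\bigotimes_i(V_{x_i}\otimes W_{x_i})$ while $(\bT^\boxtimes(V)\otimes \bT^\boxtimes(W))_{\underline x}$ contains $(k!)^2$ such copies, so no analogous isomorphism can hold at the level of $\bT^\boxtimes$. It is precisely the collapse to a single copy in $\bS^\boxtimes$ that makes $\phi_{\underline x}$ well-defined and bijective; everything else is bookkeeping.
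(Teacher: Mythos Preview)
Your proposal is correct and follows essentially the same approach as the paper, which offers only a one-line justification (``a direct computation using the compatibility of $i_*$ with the Hadamard tensor product, cf.\ Remark~\ref{rk:i* monoidal Hadamard}''). You have simply spelled out that direct computation fibrewise and added the smoothness check via pullback to $\OConf_k(M)$, together with the helpful observation that the analogous statement fails for $\bT^\boxtimes$.
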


\begin{remark} \label{rem:Cauchy Sigma}
The whole construction of this section can also be dualized: one can consider \emph{$\boxtimes$-coalgebra bundles} over $\UConf(M)$ as comonoids in $(\VB(\UConf(M)),\boxtimes)$, with $\boxtimes$-comultiplication $\Delta:\bC\to \bC\boxtimes \bC$ and $\boxtimes$-counit $\varepsilon: \bC\to I_\boxtimes$ duals to eq. \eqref{def:Cauchy algebra bundle}, satisfying the dual coassociativity and counitality axioms. 
For the comultiplication of $c\in \bC$ (which necessarily lives in a finite-dimensional fibre $\bC_{\underline{x}}$), we can adopt the usual Sweedler notation 
\begin{align*}
 \Delta(c) &= \sum c_{(1)} \boxtimes c_{(2)}, 
\end{align*}
where the sum is finite, which in fact means 
\begin{align*}
 \Delta(c_{\underline{x}}) &= \sum_{\underline{x}=\underline{x}'\sqcup \underline{x}''} \sum c_{(1),\underline{x}'} \otimes c_{(2),\underline{x}''}. 
\end{align*}

If $\bC$ is a $\boxtimes$-coalgebra bundle, then its dual vector bundle $\bC^*=\Hom(\bC,\bI_\otimes)$ is naturally a $\boxtimes$-algebra bundle: the $\boxtimes$-multiplication $\bullet$ of two bundle maps $a,b:\bC\to \bI_\otimes$ is given by the convolution 
$$
\begin{tikzcd}
\bC\arrow[r, "a \bullet b"] \arrow[d, "\Delta"'] & \bI_\otimes \\
\bC \boxtimes \bC \arrow[r, "a \boxtimes b"'] & \bI_\otimes \boxtimes \bI_\otimes \arrow[u, "\mu"']
\end{tikzcd}
$$
where $\mu:\bI_\otimes \boxtimes \bI_\otimes \to \bI_\otimes$ is the commutative multiplication of Theorem \ref{thm: VB(Conf(M)) is 2-monoidal}, and the unit $u:\bI_\boxtimes \to \bC^*$ is given by $u(1_{\emptyset}) = \varepsilon_{\emptyset}$ on the zero-point configuration and equal to the zero map on any other configuration $\underline{x}$ (that is, formally $u$ is the composition of the map $\nu:\bI_\boxtimes \to \bI_\otimes$ of Theorem \ref{thm: VB(Conf(M)) is 2-monoidal} with the dual of the counit $\varepsilon$). 
Clearly, if $\bC$ is a cocommutative $\boxtimes$-coalgebra bundle (in the usual sense, with braiding $\tau_\boxtimes$ given in Lemma \ref{lem:monoidal}), then $\bC^*$ is a commutative $\boxtimes$-algebra bundle. 

If $\langle\ ,\ \rangle:\bC^*\times \bC\to \bI_\otimes$ denotes the fibrewise pairing of dual vector bundles, then the multiplication on $\bC^*$ and the comultiplication on $\bC$ are related over each configuration $\underline{x}$ by the usual identity 
\begin{align}
 \langle a\bullet b,c\rangle_{\underline{x}} &= \langle a \boxtimes b,\Delta(c) \rangle_{\underline{x}} 
 = \sum_{\underline{x}=\underline{x}'\sqcup \underline{x}''} \langle a_{\underline{x}'},c_{(1),\underline{x}'}\rangle \langle b_{\underline{x}''},c_{(2),\underline{x}''}\rangle . 
\end{align}

With a bit of work, one can prove that the Cauchy tensor bundle of a vector bundle $V$ over $M$, given in Definition \ref{def:Cauchy tensor bundle}, is naturally also a coalgebra bundle that we shall denote by $\bT^{c,\boxtimes}(V)$, with the cofree deconcatenation comultiplication $\Delta: \bT^{c,\boxtimes}(V) \to \bT^{c,\boxtimes}(V) \boxtimes \bT^{c,\boxtimes}(V)$ given on any $k$-point configuration $\underline{x}=\{x_1,\ldots,x_k\}$, and on any vector $v=v_1\otimes\cdots \otimes v_k \in \bT^{c,\boxtimes k}(V)_{\underline{x}}$, with $v_i \in V_{x_{\sigma^{-1}(i)}}$ for some permutation $\sigma\in S_k$, by 
\begin{align} \label{eq:Cauchy tensor coproduct}
 \Delta(v) &= \sum_{\underline{x}=\underline{x}'\sqcup \underline{x}''} v|_{\underline{x}'} \otimes v|_{\underline{x}''}, 
\end{align}
where the restrictions of the tensor $v$ to any split of the form $\underline{x}'=\{x_{i_1},\ldots,x_{i_{k'}}\}$ and $\underline{x}''=\{x_{j_1},\ldots,x_{j_{k''}}\}$, with $\{i_1,\ldots,i_{k'}\} \sqcup \{j_1,\ldots,j_{k''}\} = \{1,\ldots,k\}$, are 
\begin{align*}
 v|_{\underline{x}'} & = v_{\sigma^{-1}(i_1)} \otimes\cdots \otimes v_{\sigma^{-1}(i_{k'})} \in V_{x_{i_1}} \otimes \cdots \otimes V_{x_{i_{k'}}} \\ 
 v|_{\underline{x}''} & = v_{\sigma^{-1}(j_1)} \otimes\cdots \otimes v_{\sigma^{-1}(j_{k''})} \in V_{x_{j_1}} \otimes \cdots \otimes V_{x_{j_{k''}}}. 
\end{align*}
The key ingredients to prove that $\Delta$ is a  well-defined coassociative comultiplication is the equality of eq. \eqref{eq:Tbox boxtimes Tbox = Tbox} in the proof of Theorem \ref{thm:Cauchy tensor bundle}. 
The counit $\varepsilon: \bT^{c,\boxtimes}(V) \to I_\boxtimes$ is of course the identity on the zero-point component $\bT^{c,\boxtimes 0}(V)=I_\boxtimes$ and the zero map elsewhere. 

Then we can define the \emph{subbundle of symmetric tensors}, i.e. invariant under the action of the symmetric groups, as
\begin{align*}
 \bSigma^{\boxtimes}(V) := \bigoplus_{n=0}^\infty (i_*V^{\boxtimes n})^{S_n} \cong \bigsqcup_k \bT^{\boxtimes k}(V)^{S_k}, 
\end{align*}
with fibre over $\underline{x}\in \UConf(M)$ given by the symmetric Cauchy tensors
\begin{align*}
 \bSigma^{\boxtimes k}(V)_{\underline{x}} = \{ v\in \bT^{\boxtimes k}(V)\ |\ \mbox{$\sigma(v)=v$ for any $\sigma\in S_k$}\}. 
\end{align*}
For instance, for any vectors $a_x\in V_x$ and $b_y\in V_y$, the Cauchy tensor $a_x\otimes b_y + b_y \otimes a_x\in \bT^{\boxtimes 2}(V)_{\{x,y\}}$ belongs to $\bSigma^{\boxtimes 2}(V)_{\{x,y\}}$, while each separate term $a_x\otimes b_y$ and $b_y\otimes a_x$ does not. 

It is easy to see that the coproduct \eqref{eq:Cauchy tensor coproduct} is closed on symmetric tensors, therefore $\bSigma^{\boxtimes}(V)$ is a sub-coalgebra bundle of $\bT^{c,\boxtimes}(V)$. 
Similarly, it is easy to see that the concatenation product of Theorem \ref{thm:Cauchy tensor bundle} is \emph{not} closed on symmetric tensors, so $\bSigma^{\boxtimes}(V)$ is not a sub-algebra bundle of $\bT^{\boxtimes}(V)= \bT^{a,\boxtimes}(V)$. 

Finally, one can prove that the linear duality of bundles intertwines the Cauchy tensor bundle with itself, bringing the $\boxtimes$-coalgebra structure into the $\boxtimes$-algebra one, and the bundle of symmetric tensors with the symmetric bundle. The subtle point occurring for the Hadamard tensor product and mentioned in Remark \ref{rem:Hadamard Sigma}, that ind-finite bundles dualize into pro-finite ones, does not occur for the Cauchy tensor bundles constructed on a vector bundle $V\to M$, because Cauchy powers add fibres above new base points and do not affect the size of each fibre. The output is that there are isomorphisms of $\boxtimes$-algebra bundles
\begin{align*} \nonumber
 \left(\bT^{c,\boxtimes}(V)\right)^* & \cong \bT^{a,\boxtimes}(V^*) 
 \\ 
 \left(\bSigma^{\boxtimes}(V)\right)^* & \cong \bS^{\boxtimes}(V^*). 
\end{align*}
In particular, this result can be applied to the vector bundle $\Sigma^\otimes(V)\to M$ and we have canonical isomorphisms of Cauchy algebra bundles 
\begin{align} \label{eq:Cauchy Sigma Sigma(V)^*=SS(V^*)}
 & \left(\bSigma^\boxtimes(\hat{\Sigma}^\otimes(V))\right)^* \cong \bS^\boxtimes\left((\hat{\Sigma}^\otimes(V))^*\right) \cong \bS^\boxtimes(S^\otimes(V^*)) 
 \\ \nonumber
 & \left(\bSigma^\boxtimes(\Sigma^\otimes(V))\right)^* \cong \bS^\boxtimes\left((\Sigma^\otimes(V))^*\right) \cong \bS^\boxtimes(\hat{S}^\otimes(V^*)) . 
\end{align}
\end{remark}

%%%%%%%%%%%%%%%%%%%%%%%%%%%%%%%%%%%%%%%%%%%%%%%%%%%%%%%%%%%%%%%%%%

\subsection{Density bundle}
\label{subsec:density}

Integrals over an orientable manifold are usually carried out using a volume form, a differential form in the degree of the manifold. However, when integrating over (potentially) non-orientable manifolds, one has to work with densities, since a global volume form does not exist. In this section we recall the definition of densities and show that on $\UConf(M)$ they form a commutative $\boxtimes$-algebra bundle (a Cauchy algebra bundle). 

Let us recall, see for example \cite{Grosser-Kunzinger-Oberguggenberger-Steinbauer-2001}, that a \emph{density} on a vector space $V$ of dimension $d$ is a continuous, real-valued function $\nu$ of $d$ linearly independent vectors $v=(v_1,\dots,v_d)\in V$ such that if $v'=(v'_1,\dots,v'_d)\in V$ is another set of $d$ linearly independent vectors in $V$, and $v'=Av$ with $A \in \GL(d,\R)$, then
\[
\nu(v') = |\det A| \ \nu(v).
\]
In other words, a density is a \emph{semi-linear} map $\nu: \Lambda^d V \to \R$. 
Then, densities on $V$ form a one-dimensional vector space $\Lambda^d V^\ast$ which is seen as a semi-linear representation of $\GL(d,\R)$ under the map
\[
\begin{array}{rcl} 
 GL(d,\R)& \longrightarrow & \R^+ \\
 A &\mapsto & |\det A|^{-1} . 
 \end{array}
\]

If $M$ is a smooth manifold of dimension $d$, the \emph{density bundle} of $M$ is the bundle $\Dens_M$ whose fibre at a point $x \in M$ is given by the densities on the tangent space $T_xM$. It can also be seen as the bundle $\Dens_M = \F M \times_{\GL(d,\R)} \R$ associated to the frame bundle $\F M$, with typical fibre $\R$ isomorphic to the fibres of $\Lambda^d(T^\ast M)$ and the action of $A\in \GL(d,\R)$ on $\lambda\in \R$ given by $|\det A|^{-1}\lambda$. 
A \emph{density} on $M$ is a smooth section of the density bundle.

On the manifold $\UConf(M)=\bigsqcup_{k=0}^\infty \UConf_k(M)$, the density bundle is the collection of density bundles on each $k$-component:
\begin{align} \label{def:density bundle}
 \Dens_{\UConf(M)} = \bigsqcup_{k=0}^\infty \Dens_{\UConf_k(M)}. 
\end{align}
For any $k$, the bundle $\Dens_{\UConf_k(M)}$ is a usual vector bundle of rank $1$. 
There is however another canonical vector bundle of rank $1$ which can be defined on $\UConf_k(M)$, the symmetric Cauchy $k$-tensor powers of $\Dens_M$, i.e. $\bS^{\boxtimes k}(\Dens_M)$. Luckily, these two bundles coincide.

\begin{theorem} \label{thm:Dens_Conf(M)=S(Dens_M)}
There is an isomorphism of vector bundles over $\UConf(M)$
$$
\Dens_{\UConf(M)}\cong \bS^\boxtimes(\Dens_M).
$$
Therefore $\Dens_{\UConf(M)}$ is a commutative $\boxtimes$-algebra bundle on $\UConf(M)$, with an algebra structure inherited from the free commutative $\boxdot$ product of $S^{\boxtimes}(\Dens_{M})$.
\end{theorem}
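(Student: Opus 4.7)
The plan is to reduce to a $k$-by-$k$ statement: since both bundles decompose as $\bigsqcup_{k\geq 0}$ of their $k$-components, it suffices to produce a natural isomorphism $\Dens_{\UConf_k(M)}\cong \bS^{\boxtimes k}(\Dens_M)$ for each $k$. The case $k=0$ is trivial since both sides reduce to $\K$ over the one-point manifold $\{\emptyset\}$, and for $k=1$ both sides are $\Dens_M$ by construction. For $k\geq 2$ I would work upstairs on $\OConf_k(M)$, where everything is standard, and then descend along the covering map $q_k$.

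Concretely, I would proceed in three steps. First, use Proposition~\ref{prop:compairison external product with Cauchy product} applied to $V=\Dens_M$ to obtain
$$q_k^*\bS^{\boxtimes k}(\Dens_M)\ \cong\ (\Dens_M)^{\boxtimes' k}\big|_{\OConf_k(M)}.$$
Second, use the classical identification $(\Dens_M)^{\boxtimes' k}\cong \Dens_{M^k}$ (coming from $T_{(x_1,\dots,x_k)}M^k\cong \bigoplus_i T_{x_i}M$ and the top-exterior-power calculation), then restrict to the open subset $\OConf_k(M)\subset M^k$. Third, since $q_k$ is a local diffeomorphism, its differential identifies $q_k^*\Dens_{\UConf_k(M)}\cong \Dens_{\OConf_k(M)}$. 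Chaining these three natural isomorphisms yields an isomorphism of vector bundles on $\OConf_k(M)$ between $q_k^*\Dens_{\UConf_k(M)}$ and $q_k^*\bS^{\boxtimes k}(\Dens_M)$.

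The main obstacle, and the only nontrivial point, is to descend this isomorphism from $\OConf_k(M)$ back to $\UConf_k(M)$, which requires that the chain above be $S_k$-equivariant with respect to the natural $S_k$-actions on each term. The actions on $q_k^*\Dens_{\UConf_k(M)}$ and $q_k^*\bS^{\boxtimes k}(\Dens_M)$ are tautologically trivial on fibres (because both downstairs bundles live on $\UConf_k(M)$). On $(\Dens_M)^{\boxtimes' k}$ the action of $\sigma\in S_k$ permutes the external factors; under the identification with $\Dens_{M^k}$ this corresponds to pullback by a permutation of the base, whose Jacobian has absolute determinant $1$, so the action on densities is also trivial on fibres. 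Equivariance thus reduces to checking that the block permutation of a basis $(e_1^{(1)},\dots,e_d^{(1)},\dots,e_1^{(k)},\dots,e_d^{(k)})$ of $\bigoplus_i T_{x_i}M$ yields a determinant of sign $\pm 1$, which is immediate. Smoothness of the resulting descended morphism can be checked in a chart of the form \eqref{eq:chartconf}, where both bundles trivialize identically.

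Once the bundle isomorphism is established, the algebra structure on $\Dens_{\UConf(M)}$ is defined to be the transport of the commutative $\boxdot$-multiplication on $\bS^\boxtimes(\Dens_M)$ provided by Theorem~\ref{thm:Cauchy symmetric bundle}, which finishes the proof without further verification.
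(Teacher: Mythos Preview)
Your proposal is correct and follows essentially the same route as the paper: both arguments lift to $\OConf_k(M)$, chain the three natural isomorphisms $q_k^*\Dens_{\UConf_k(M)}\cong \Dens_{\OConf_k(M)}\cong (\Dens_M)^{\boxtimes' k}\cong q_k^*\bS^{\boxtimes k}(\Dens_M)$ (the paper calls these $\alpha$, $\beta$, $\gamma$), and then descend along $q_k$ by checking $S_k$-equivariance, with smoothness verified locally. Your observation that the block permutation has absolute determinant $1$ is exactly the point that makes the density (as opposed to volume-form) version go through.
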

 \begin{proof} 
To prove the assertion we consider the following diagram of vector bundles
% https://q.uiver.app/#q=WzAsOSxbMCwwLCJcXERlbnNfe1xcVUNvbmZfayhNKX0iXSxbMCwxLCJ7XFxVQ29uZihNKV9rfSJdLFsxLDAsIlxcRGVuc197XFxPQ29uZl9rKE0pfSJdLFsyLDAsIlxcRGVuc19NXntcXGJveHRpbWVzJ2t9Il0sWzIsMSwie1xcT0NvbmZfayhNKX0iXSxbNCwxLCJ7XFxVQ29uZl9rKE0pfSJdLFszLDAsInFeKl9rXFxiU157XFxib3h0aW1lcyBrfVxcRGVuc19NIl0sWzQsMCwiXFxiU157XFxib3h0aW1lcyBrfVxcRGVuc19NIl0sWzIsMl0sWzAsMV0sWzIsMCwiXFxhbHBoYSIsMl0sWzIsMywiXFxiZXRhIiwwLHsic3R5bGUiOnsidGFpbCI6eyJuYW1lIjoiYXJyb3doZWFkIn0sImhlYWQiOnsibmFtZSI6Im5vbmUifX19XSxbMyw0XSxbMyw2LCJcXGdhbW1hIl0sWzYsNywiXFxkZWx0YSJdLFs3LDVdLFs0LDEsInFfayIsMl0sWzQsNSwicV9rIl0sWzYsNF0sWzIsNF1d
\[\begin{tikzcd}
	{\Dens_{\UConf_k(M)}} & {\Dens_{\OConf_k(M)}} & {\Dens_M^{\exttens k}} & {q^*_k\bS^{\boxtimes k}\Dens_M} & {\bS^{\boxtimes k}\Dens_M} \\
	{{\UConf(M)_k}} && {{\OConf_k(M)}} && {{\UConf_k(M)}} 
	\arrow[from=1-1, to=2-1]
	\arrow["\alpha"', from=1-2, to=1-1]
	\arrow["\beta", tail reversed, no head, from=1-2, to=1-3]
	\arrow[from=1-2, to=2-3]
	\arrow["\gamma", from=1-3, to=1-4]
	\arrow[from=1-3, to=2-3]
	\arrow["\delta", from=1-4, to=1-5]
	\arrow[from=1-4, to=2-3]
	\arrow[from=1-5, to=2-5]
	\arrow["{q_k}"', from=2-3, to=2-1]
	\arrow["{q_k}", from=2-3, to=2-5]
\end{tikzcd}\]
Let us have a look at the individual maps:
\begin{itemize}
 \item The arrow $\alpha$ is a local isomorphism of vector bundles, since $q_k$ is a local diffeomorphism, and $\alpha$ is induced by its differential.
 \item The arrow $\gamma$ is an isomorphism following Theorem \ref{thm:Cauchy vs external}.
 \item The arrow $\delta$ is a local isomorphism by the natural properties of the pullback (see Remark \ref{rem:VB on OConf(M)}).
\end{itemize}
Let us define the remaining map $\beta$ and show that it is an isomorphism of vector bundles.
Recall that the tangent bundle of $\OConf_k(M)$ is
\[
\bT\OConf_k(M) =\pi_1^*TM\oplus \pi_2^*TM\oplus \cdot \oplus \pi_k^*TM = \displaystyle{\bigsqcup_{(x_1,x_2, \cdots, x_k)}T_{x_1}M\oplus T_{x_2}M \oplus \cdots \oplus T_{x_k}M},
\]
cf. Example \ref{ex: T(Conf(M))}, therefore
\begin{align*}
 \Lambda^{k \cdot d}T_{(x_1,x_2, \cdots, x_k)}\OConf_k(M) & = \Lambda^{k \cdot d} (T_{x_1}M \oplus \cdots \oplus T_{x_k}M)\\ 
 & \cong \bigoplus_{p_1 + \cdots + p_k = k \cdot d} \Lambda^{p_1} T_{x_1}M \otimes \cdots \otimes \Lambda^{p_k} T_{x_k} M \\ 
 & = \Lambda^d T_{x_1}M \otimes \cdots \otimes \Lambda^d T_{x_k}M. 
\end{align*}
The last equality follows from the fact that $\Lambda^pT_{x} M = 0$ for all $p > d$ and for all $x \in M$. Or, in global writing:
\begin{align*}
 \Lambda^{k \cdot d}\bT\OConf_k(M) 
 & = \Lambda^d \pi_1^*TM \otimes \cdots \otimes \Lambda^d \pi_k^*TM. 
\end{align*}
Now, given $\nu_1\otimes \cdots\otimes \nu_k\in \Dens_{T_{x_1}M}\otimes \cdots\otimes \Dens_{T_{x_k}M}$, we set $\beta(\nu_1\otimes \cdots\otimes \nu_k):=\nu$, where 
\begin{align*}
\begin{array}{rcl}
 \nu: \ \Lambda^d T_{x_1}M \otimes \Lambda^d T_{x_2}M \otimes \cdots \otimes \Lambda^d T_{x_k}M & \longrightarrow & \R \\ 
 \omega_{x_1} \otimes \omega_{x_2} \otimes \cdots \otimes \omega_{x_k} & \mapsto& \nu_{x_1}(\omega_{x_1}) \nu_{x_2}(\omega_{x_2}) \cdots \nu_{x_k} (\omega_{x_k}).
 \end{array}
\end{align*} 
This $\beta$ is a fibrewise isomorphism. A local coordinate computation shows that $\beta$ varies smoothly with $(x_1,...,x_k)$, showing that $\beta$ is an isomorphism of vector bundles.

Finally, above a configuration $\underline{x}=\{x_1,...,x_k\}$, the composite map 
\begin{align}
 \delta\circ \gamma\circ \beta^{-1}\circ \alpha|_{(x_1,...,x_k)}^{-1}.\label{eq:longiso}
\end{align}
gives an isomorphism between the fibres $(\Dens_{\UConf(M)})_{\underline{x}}$ and $(S^{\boxtimes}\Dens_{M})_{\underline{x}}$. 
This map is  well-defined, because the isomorphisms $\beta$ and $\gamma$ are $S_k$-equivariant. Moreover the fibrewise isomorphisms assemble to a smooth bundle map, as can be seen by extending the isomorphism \eqref{eq:longiso} to a small open neighborhood of $\underline x$. Altogether we obtain the desired isomorphism 
\[
\Dens_{\UConf(M)} \cong S^{\boxtimes}\Dens_{M}.
\]
\end{proof}

%%%%%%%%%%%%%%%%%%%%%%%%%%%%%%%%%%%%%%%%%%%%%%%%%%

\subsection{Cauchy-Hadamard $2$-algebra bundles}
\label{subsec:Cauchy-Hadamard 2-algebra bundles}

Following \cite[Definition 6.28]{Aguiar-Mahajan-2010}, let us call \emph{$2$-algebra bundle} over $\UConf(M)$ a double monoid in the $2$-monoidal category $(\VB(\UConf(M)),\boxtimes,\otimes)$, that is, a vector bundle $\bA\in \VB(\UConf(M))$ endowed with the following bundle maps 
\begin{align}\label{def:2-algebra bundle}
\begin{array}{rlcrl}
\text{$\otimes$-multiplication} & m_{\otimes}:\bA \otimes \bA \to \bA && \text{$\boxtimes$-multiplication} & m_{\boxtimes}:\bA \boxtimes \bA \to \bA \\ 
\text{$\otimes$-unit} & u_{\otimes}:\bI_{\otimes} \to \bA && \text{$\boxtimes$-unit} & u_{\boxtimes}:\bI_{\boxtimes} \to \bA 
\end{array}
\end{align}
such that $(\bA,m_{\otimes},u_{\otimes})$ and $(\bA,m_{\boxtimes},u_{\boxtimes})$ are unital associative algebras, in the usual sense, and the following diagrams commute: 
\begin{eqnarray}\label{eq:2-algebra}
& \hspace{.5cm}
\begin{tikzcd}
(\bA \otimes \bA) \boxtimes (\bA\otimes \bA) \arrow[rr, "\zeta"] \arrow[d, "m_{\otimes}\boxtimes m_{\otimes}"'] && 
(\bA \boxtimes \bA) \otimes (\bA \boxtimes \bA) \arrow[d, "m_{\boxtimes}\otimes m_{\boxtimes}"]
 \\ 
\bA \boxtimes \bA \arrow[dr, "m_{\boxtimes}"'] && 
\bA \otimes \bA \arrow[dl, "m_{\otimes}"] \\ 
& \bA & 
\end{tikzcd} 
\\ \nonumber
&
\begin{tikzcd}
\bI_{\otimes} \boxtimes \bI_{\otimes} \arrow[r, "u_{\otimes} \boxtimes u_{\otimes}"] \arrow[d, "\mu"] & 
\bA \boxtimes \bA \arrow[d, "m_{\boxtimes}"] \\ 
\bI_{\otimes} \arrow[r, "u_{\otimes}"'] & \bA
\end{tikzcd}
\qquad 
\begin{tikzcd}
\bI_{\boxtimes} \otimes \bI_{\boxtimes} \arrow[r, "u_{\boxtimes} \otimes u_{\boxtimes}"] & 
\bA \otimes \bA \arrow[d, "m_{\otimes}"]\\ 
\bI_{\boxtimes} \arrow[u, "\Delta"]\arrow[r, "u_{\boxtimes}"'] & \bA
\end{tikzcd} 
\qquad 
\begin{tikzcd}
\bI_{\boxtimes} \arrow[rr, "\nu"] \arrow[dr, "u_{\boxtimes}"'] && \bI_{\otimes} \arrow[dl, "u_{\otimes}"] \\
& \bA & 
\end{tikzcd} 
\end{eqnarray}
The $2$-algebra bundle $\bA$ is said to be \emph{commutative} if both multiplications are commutative. 
A \emph{morphism of $2$-algebra bundles} is a bundle map $\bA\to \bA'$ which respect to the multiplications and the units. 

\begin{example}
After Examples \ref{ex: H-unit = H-algebra} and \ref{ex: H-unit = C-algebra}, the unit bundle $\bI_{\otimes}$ is both a commutative $\otimes$-algebra and a commutative $\boxtimes$-algebra bundle over $\UConf(M)$. 
The above four diagrams commute, so finally $\bI_{\otimes}$ is a commutative $2$-algebra bundle over $\UConf(M)$. 
\end{example}

\begin{definition}
Let $V$ be a vector bundle on $M$. We call \emph{Cauchy-Hadamard bundle} on $V$ the Cauchy symmetric bundle $\bS^{\boxtimes} S^{\otimes}(V)$ of the usual symmetric bundle of $V$ on $M$. 
\end{definition}

The defining relations in eq. \eqref{eq:2-algebra} are easily verified, and give the following result. 

\begin{theorem}
The Cauchy-Hadamard bundle $\bS^{\boxtimes} S^{\otimes}(V)$ is a commutative $2$-algebra bundle on $\UConf(M)$, with the following operations: 
\begin{itemize}
\item $\boxtimes$-multiplication given by $\boxdot: \bS^{\boxtimes} S^{\otimes}(V) \boxtimes \bS^{\boxtimes} S^{\otimes}(V) \to \bS^{\boxtimes} S^{\otimes}(V)$,
\item $\boxtimes$-unit $\tilde{u}:\bI_{\boxtimes} \to \bS^{\boxtimes} S^{\otimes}(V)$,
\item $\otimes$-multiplication $\odot:\bS^{\boxtimes} S^{\otimes}(V) \otimes \bS^{\boxtimes} S^{\otimes}(V) \to \bS^{\boxtimes} S^{\otimes}(V)$ induced by the fibrewise free commutative product of the symmetric algebra bundle $S^{\otimes}(V)$: if for a configuration $\underline x=\{x_1,...,x_k\}$ an element $a_{\underline x}\in \bS^{\boxtimes} S^{\otimes}(V)$ is of the form $a_{\underline x}= a^1_{x_1}\boxdot \cdots \boxdot a^k_{x_k}$, with $a^i_{x_i}\in S^{\otimes}(V_{x_i})$, and similarly for another element $b_{\underline x}\in \bS^{\boxtimes} S^{\otimes}(V)$, then 
$$
a_{\underline x} \odot b_{\underline x} = (a^1_{x_1}\odot b^1_{x_1})\boxdot \cdots \boxdot (a^k_{x_k}\odot b^k_{x_k}), 
$$
\item $\otimes$-unit $\bI_{\otimes} \to \bS^{\boxtimes} S^{\otimes}(V)$ which identifies $1_{\emptyset}\in (I_{\otimes})_{\emptyset}$ with $1_{\emptyset}\in \bS^{\boxtimes 0} S^{\otimes}(V)_{\emptyset}$ and $1_{\underline x}\in (I_{\otimes})_{\underline x}$ with $1_{x_1}\odot \cdots \odot 1_{x_k}\in \bS^{\boxtimes k} S^{\otimes}(V)_{\underline x}$, where $1_{x_i}\in S^{\otimes}(V_{x_i})$. 
\end{itemize}
\end{theorem}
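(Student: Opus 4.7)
The plan is to establish each algebra structure independently using earlier results, then check the four compatibility diagrams of \eqref{eq:2-algebra}, with the shuffle pentagon being the main substantive point.

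First, the commutative $\boxtimes$-algebra structure $(\boxdot, \tilde u)$ on $\bS^{\boxtimes} S^{\otimes}(V)$ is immediate from Theorem~\ref{thm:Cauchy symmetric bundle} applied to $S^{\otimes}(V)\in \VB(M)$: the Cauchy symmetric bundle produces a commutative $\boxtimes$-algebra on $\UConf(M)$ whatever the input vector bundle on $M$ is. For the commutative $\otimes$-algebra structure, I would invoke Lemma~\ref{lem: S^boxtimes strongly monoidal}: since $\bS^{\boxtimes}\colon (\VB(M),\otimes)\to (\VB(\UConf(M)),\otimes)$ is strongly monoidal, it transports the commutative $\otimes$-algebra $(S^{\otimes}(V),\odot,1)$ in $\VB(M)$ to a commutative $\otimes$-algebra in $\VB(\UConf(M))$ with multiplication
\begin{equation*}
\bS^{\boxtimes}S^{\otimes}(V)\otimes \bS^{\boxtimes}S^{\otimes}(V)\;\xrightarrow{\;\cong\;}\;\bS^{\boxtimes}\bigl(S^{\otimes}(V)\otimes S^{\otimes}(V)\bigr)\;\xrightarrow{\bS^{\boxtimes}(m_\odot)}\;\bS^{\boxtimes}S^{\otimes}(V)
\end{equation*}
and analogously transported unit. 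Unpacking the canonical isomorphism $\bS^{\boxtimes}(A\otimes B)_{\underline x}\cong \bS^{\boxtimes}(A)_{\underline x}\otimes \bS^{\boxtimes}(B)_{\underline x}$ fibrewise over $\underline x=\{x_1,\dots,x_k\}$ identifies this product with the pointwise $\odot$ of the statement, and the transported unit with $1_{x_1}\boxdot\cdots\boxdot 1_{x_k}\in \bS^{\boxtimes k}S^{\otimes}(V)_{\underline x}$.

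The three unit-compatibility diagrams of \eqref{eq:2-algebra} reduce to verifying that the $\otimes$-unit at $\underline x$ is the $\boxdot$-concatenation of fibrewise $\otimes$-units, which is exactly how $u_\otimes$ was constructed, together with the explicit descriptions of $\mu$, $\Delta$, $\nu$ from Theorem~\ref{thm: VB(Conf(M)) is 2-monoidal}; they amount to direct fibrewise checks. The substantive step is the shuffle pentagon, which I would verify fibrewise at $\underline x$. On a basic element $(a'_{\underline x'}\otimes a''_{\underline x''})\otimes (b'_{\underline x'}\otimes b''_{\underline x''})$ in the $(\underline x',\underline x'')$-summand, the $m_\boxtimes\otimes m_\boxtimes$-route first $\boxdot$-concatenates the factors within each tensor slot and then applies $\odot$ pointwise over each $x\in \underline x$; the $sh$-first route instead separates each tensor slot into pieces at $\underline x'$ and $\underline x''$, applies $\odot$ componentwise at each side, and finally reassembles with $\boxdot$. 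Since the $\odot$-multiplication lives independently in each factor $S^{\otimes}(V_x)$, both routes return the element $\boxdot_{x\in\underline x'}(a'_x\odot b'_x)\,\boxdot\,\boxdot_{x\in \underline x''}(a''_x\odot b''_x)$ of $\bS^{\boxtimes}S^{\otimes}(V)_{\underline x}$.

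The main obstacle is precisely this shuffle identity, because the two multiplications act along transverse decompositions of $\underline x$: $\odot$ is diagonal (multiplying within the same base point), while $\boxdot$ concatenates slots at different base points. What makes the verification succeed is the factorisation $\bS^{\boxtimes}S^{\otimes}(V)_{\underline x}\cong S^{\otimes}(V_{x_1})\boxdot \cdots \boxdot S^{\otimes}(V_{x_k})$ established in Theorem~\ref{thm:Cauchy symmetric bundle}, which makes $\odot$ pointwise over $x\in \underline x$ and hence automatically commuting with $\boxdot$. Smoothness of the induced bundle maps follows from the local-triviality arguments of Theorem~\ref{thm:Cauchy vs external}; alternatively, the entire 2-algebra compatibility is a special case of the general principle that a strongly monoidal functor between $2$-monoidal categories sends $2$-monoids to $2$-monoids, in the spirit of \cite[Chapter~6]{Aguiar-Mahajan-2010}.
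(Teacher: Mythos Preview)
Your proposal is correct and in fact considerably more detailed than the paper's own proof, which consists of the single sentence ``The defining relations in eq.~\eqref{eq:2-algebra} are easily verified.'' Your route to the $\otimes$-algebra structure via Lemma~\ref{lem: S^boxtimes strongly monoidal} (transporting the commutative algebra $S^{\otimes}(V)$ along the strongly monoidal functor $\bS^{\boxtimes}$) is a slightly more conceptual packaging than the paper's bare fibrewise definition, and your explicit tracking of the shuffle pentagon makes transparent the point the paper leaves implicit: that $\odot$ acts within each $S^{\otimes}(V_x)$ while $\boxdot$ concatenates across base points, so the two commute. Both approaches reduce to the same fibrewise check; yours simply carries it out.
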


%%%%%%%%%%%%%%%%%%%%%%%%%%%%%%%%%%%%%%%%%%%%%%%%%%%%%%%%%%%%%%%%%%

\section{Poisson bundles over configuration spaces}
\label{sec:4}

The goal of this section is to construct a Poisson algebra structure on the space of sections of the bundle $\bS^\boxtimes S^\otimes (V)\otimes \Dens_{\UConf(M)}$.
The Poisson bracket will be induced by a skew-symmetric kernel $k:V\boxtimes V\to I_\otimes$. Its extension to $\bS^\boxtimes S^\otimes (V)$ requires a Leibniz rule for both the $\boxtimes$-multiplication and the $\otimes$-multiplication. The resulting object is not just a Cauchy-Poisson algebra bundle, but also has an additional compatibility with the Hadamard structure. In Subsection \ref{subsec:Poisson Cauchy algebra} we adapt this construction to accommodate the tensor product with $\Dens_{\UConf(M)}$, and finally discuss the induced structure on the section space in Subsection \ref{subsec:sections}.

%%%%%%%%%%%%%%%%%%%%%%%%%%%%%%%%%%%%%%%%%%%%

\subsection{Poisson $2$-algebra bundles}
\label{subsec:Poisson 2-algebra}

\begin{definition} \label{def:Poisson 2-algebra}
In the spirit of the previous definitions, we call \emph{Poisson $2$-algebra bundle} over $\UConf(M)$ a commutative $2$-algebra bundle $(\bP,m_{\otimes},u_{\otimes},m_{\boxtimes},u_{\boxtimes})$ endowed with a
\begin{align} \label{eq:Poisson bracket}
\text{Poisson bracket}\quad \{\ ,\ \}:\bP \boxtimes \bP \to \bP
\end{align}
satisfying the usual conditions with respect to both multiplications, namely: 
\begin{align*}
\text{antisymmetry: }\quad & \sum \{a_{\underline{x}},b_{\underline{y}}\} = - \sum \{b_{\underline{y}},a_{\underline{x}}\} \\ 
\text{Jacobi identity: }\quad & \sum \{\{a_{\underline{x}},b_{\underline{y}}\},c_{\underline{z}}\} +\sum \{\{b_{\underline{y}},c_{\underline{z}}\},a_{\underline{x}}\} +\sum \{\{c_{\underline{z}},a_{\underline{x}}\},b_{\underline{y}}\}=0 \\ 
\text{$m_{\boxtimes}$-Leibniz rule: }\quad & \sum \{a_{\underline{x}},m_{\boxtimes}(b_{\underline{y}},c_{\underline{z}})\} 
= \sum m_{\boxtimes}\left(\{a_{\underline{x}},b_{\underline{y}}\},c_{\underline{z}}\right) 
+ \sum m_{\boxtimes}\left(b_{\underline{y}},\{a_{\underline{x}},c_{\underline{z}}\}\right) \\ 
\text{$m_{\otimes}$-Leibniz rule: }\quad & \sum \{a_{\underline{x}},m_{\otimes}(b_{\underline{y}},c_{\underline{y}})\} 
= \sum m_{\otimes}\left(\{a_{\underline{x}},b_{\underline{y}}\},m_{\boxtimes} (1_{\underline{x}},c_{\underline{y}})\right) 
+ \sum m_{\otimes}\left(m_{\boxtimes}(1_{\underline{x}},b_{\underline{y}}),\{a_{\underline{x}},c_{\underline{y}}\}\right)
\end{align*}
for any $a,b,c\in \bP$ on the suitable (disjoint) configurations written in the formulas. 
To be precise, the antisymmetry condition lives in the fibre of $\bP\boxtimes \bP$ above the configuration $\underline{x}\sqcup \underline{y}$ and means that $\{\ ,\ \}\circ \tau_\boxtimes = -\{\ ,\ \}$ where $\tau_\boxtimes$ is the braiding of Lemma \ref{lem:monoidal}. 
The $m_\otimes$-Leibniz rule live in the fibre of $\bP \boxtimes (\bP\otimes \bP)$ above the same configuration $\underline{x}\sqcup \underline{y}$. 
The Jacobi identity and the $m_\boxtimes$-Leibniz rule live in the fibre of $\bP\boxtimes \bP\boxtimes \bP$ above the configuration $\underline{x}\sqcup \underline{y}\sqcup \underline{z}$. 
The identities are to be understood as holding for the sum over splitting in two or three (disjoint) configurations of any given configuration. 
\end{definition}

Our main example of a Poisson $2$-algebra is $S^{\boxtimes} S^{\otimes}(V)$, with $V\in \VB(M)$. This $2$-algebra has two gradings, one with respect to the $\boxtimes$-powers, and one with respect to the $\otimes$-powers. The total degree of a generic element is incremented by multiplying it by another element different from a unit, either with the commutative $\boxdot$-product or with the commutative $\odot$-product. The generators are then single vectors in $V$ over $1$-point configurations, i.e. points of $M$. 

In the sequel, let us adopt the short notation $V\boxtimes V$ for the vector bundle $i_*V \boxtimes i_*V$ over $\UConf_2(M)$. 

\begin{theorem} \label{thm:Cauchy-Hadamard-Poisson}
Let $V$ be a vector bundle on $M$. Any antisymmetric bundle map $k:V\boxtimes V\to \UConf_2(M)\times \K = \bI_{\otimes,2}$ over $\UConf_2(M)$ determines on the Cauchy-Hadamard $2$-algebra bundle $\bS^{\boxtimes} S^{\otimes}(V)$ the structure of a Poisson $2$-algebra bundle on $\UConf(M)$, in the following recursive way. 
For any (disjoint) configurations $\underline x$, $\underline y$ and $\underline z$: 
\begin{enumerate}
\item 
The bracket is zero whenever it is computed on a unit: 
\begin{align*}
\sum \{1_{\underline x},1_{\underline y}\} 
= \sum \{a_{\underline x},1_{\underline y}\} 
= \sum \{1_{\underline y},a_{\underline x}\} = 0
\end{align*}
for any $a_{\underline x}\in \bS^{\boxtimes} S^{\otimes}(V)_{\underline x}$. 

\item 
On the generators of $\bS^{\boxtimes} S^{\otimes}(V)$, that is, single vectors $u_x\in V_x$ and $v_y\in V_y$ above $1$-point configurations, the bracket is given by a function $k$:
\begin{align*}
 \{u_x,v_y\} = k(u_x,v_y) 1_x\boxdot 1_y \in I_x\boxdot I_y \subset \bS^{\boxtimes 2}S^\otimes(V)_{\{x,y\}}. 
\end{align*}

\item 
Finally, the bracket is extended to higher degree elements in $\bS^{\boxtimes} S^{\otimes}(V)$ by the Leibniz rule with respect to commutative Hadamard products $\odot$: 
\begin{align*}
\sum \{a_{\underline x}\odot b_{\underline x},c_{\underline y}\} 
& = \sum \{a_{\underline x},c_{\underline y}\} \odot (b_{\underline x}\boxdot 1_{\underline y}) 
+ \sum (a_{\underline x} \boxdot 1_{\underline y}) \odot \{b_{\underline x},c_{\underline y}\} \\ 
\sum \{a_{\underline x},b_{\underline y} \odot c_{\underline y}\} 
& = \sum \{a_{\underline x},b_{\underline y}\} \odot (1_{\underline x} \boxdot c_{\underline y}) 
+ \sum (1_{\underline x} \boxdot b_{\underline y}) \odot \{a_{\underline x},c_{\underline y}\} 
\end{align*}
and with respect to commutative Cauchy products $\boxdot$: 
\begin{align*}
\sum \{a_{\underline x} \boxdot b_{\underline y},c_{\underline z}\} 
& = \sum \{a_{\underline x},c_{\underline z}\} \boxdot b_{\underline y} 
+ \sum a_{\underline x} \boxdot \{b_{\underline y},c_{\underline z}\} \\ 
\sum \{a_{\underline x},b_{\underline y} \boxdot c_{\underline z}\} 
& = \sum \{a_{\underline x},b_{\underline y}\} \boxdot c_{\underline z} 
+ \sum b_{\underline y} \boxdot \{a_{\underline x},c_{\underline z}\}. 
\end{align*}
\end{enumerate}
\end{theorem}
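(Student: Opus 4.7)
The strategy is to first establish well-definedness of the recursive formula, and then to verify separately antisymmetry, the Leibniz rules, and the Jacobi identity. For well-definedness, I would exploit the fact that every element of $\bS^{\boxtimes} S^{\otimes}(V)$ over a configuration $\underline{x}=\{x_1,\ldots,x_k\}$ admits a canonical expression as a $\K$-linear combination of basic elements of the form $a^{1}_{x_1}\boxdot\cdots\boxdot a^{k}_{x_k}$ with $a^{i}_{x_i}\in S^{\otimes}(V_{x_i})$, and each $a^{i}_{x_i}$ is in turn a polynomial in the generators $V_{x_i}$. Iterated application of the $\boxdot$-Leibniz rule reduces $\{a^{1}_{x_1}\boxdot\cdots\boxdot a^{k}_{x_k},\ b^{1}_{y_1}\boxdot\cdots\boxdot b^{l}_{y_l}\}$ to a sum of terms in which exactly one atomic bracket $\{a^{i}_{x_i},b^{j}_{y_j}\}$ appears, and iterated application of the $\odot$-Leibniz rule further reduces each such atomic bracket to a linear combination of generator brackets $\{u_{x_i},v_{y_j}\}=k(u_{x_i},v_{y_j})\,1_{x_i}\boxdot 1_{y_j}$. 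The resulting formula depends only on the canonical form and on the values of $k$, hence is unambiguous; equivalently, one can invoke the universal property: since both $S^{\otimes}$ and $\bS^{\boxtimes}$ are free commutative functors (Theorem \ref{thm:Cauchy symmetric bundle} and the analogous Hadamard statement), there is at most one biderivation extending a given bracket on generators, and the recursion produces it.

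Antisymmetry is essentially automatic: on generators it follows from the antisymmetry of $k$ together with the commutativity $1_x\boxdot 1_y = 1_y\boxdot 1_x$, and then the $\boxdot$- and $\odot$-Leibniz rules (being symmetric in their pair of arguments up to the braidings $\tau_{\boxtimes}$ and $\tau_{\otimes}$) preserve antisymmetry inductively. The $\boxdot$- and $\odot$-Leibniz rules themselves are built into the definition; what must be checked is that they are mutually consistent, i.e.\ applying them in different orders to a mixed product such as $a_{\underline{x}}\odot(b_{\underline{y}}\boxdot c_{\underline{z}})$ yields the same result. This compatibility traces back to the shuffle morphism $\mathrm{sh}$ of the $2$-monoidal structure (Theorem \ref{thm: VB(Conf(M)) is 2-monoidal}) and to the commutativity of both multiplications: both orders of reduction land in the same atomic expression after a finite number of steps.

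The main technical point is the Jacobi identity, and here the structure of the bracket makes things strikingly simple at the level of generators. Since $\{u_x,v_y\}=k(u_x,v_y)\,1_x\boxdot 1_y$ is a scalar multiple of a $\boxdot$-product of Hadamard units, and since by rule (1) the bracket vanishes on any unit, the $\boxdot$-Leibniz rule gives
\[
\{1_x\boxdot 1_y,\,w_z\}\;=\;\{1_x,w_z\}\boxdot 1_y \;+\; 1_x\boxdot\{1_y,w_z\}\;=\;0,
\]
so $\{\{u_x,v_y\},w_z\}=0$ on any three generators. Consequently the Jacobiator $J(a,b,c):=\{\{a,b\},c\}+\{\{b,c\},a\}+\{\{c,a\},b\}$ vanishes identically on generators. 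The remaining task is to propagate this vanishing to all of $\bS^{\boxtimes} S^{\otimes}(V)$ by showing that $J$ is a triderivation with respect to both $\boxdot$ and $\odot$; this is a standard cancellation argument for any biderivation, in which the Leibniz rules are applied inside each of the three nested brackets and the twelve resulting terms cancel in pairs by antisymmetry.

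The main obstacle I anticipate is precisely this triderivation check, because in the $2$-algebra framework the $\otimes$-Leibniz rule carries the twist $b\mapsto b\boxdot 1_{\underline{x}}$ required to match configurations, so the cancellation of mixed terms must be done carefully, keeping track of how the sums over splittings $\underline{x}=\underline{x}'\sqcup\underline{x}''$ interact with the shuffle morphism. Once this bookkeeping is carried out, Jacobi follows, and combining it with the already verified antisymmetry and the two Leibniz rules completes the proof.
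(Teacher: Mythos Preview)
Your proposal is correct and follows essentially the same approach as the paper: well-definedness of the recursive extension, antisymmetry by induction from the antisymmetry of $k$, and the Jacobi identity by observing that each nested bracket $\{\{u_x,v_y\},w_z\}$ vanishes on generators and then propagating via the derivation property of the Jacobiator with respect to both $\boxdot$ and $\odot$. The paper carries out explicitly the two Jacobiator computations $\mathrm{Jac}(a_{\underline{x}'}\boxdot b_{\underline{x}''},c_{\underline{y}},d_{\underline{z}})$ and $\mathrm{Jac}(a_{\underline{x}}\odot b_{\underline{x}},c_{\underline{y}},d_{\underline{z}})$ that you flag as the main obstacle, confirming that the twelve terms cancel in pairs exactly as you anticipate.
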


We call the bundle map $k\in \Mor_{\UConf_2(M)}(V\boxtimes V,\bI_{\otimes,2})$ the \emph{kernel of the Poisson bracket} $\{\ ,\ \}$ on $\bS^{\boxtimes} S^{\otimes}(V)$. In the sequel, this particular Poisson bracket will be denoted $\{\ ,\ \}_k$. 

\begin{remark}
Note that in order to define the bracket on generators (point 2.), we have to guarantee that scalar values above $2$-point configurations $\{x,y\}$ belong to the algebra. Such scalar values are contained in $\bI_{\otimes,2}$ which is a subbundle of $\bS^\boxtimes S^\otimes(V)$ but not a subbundle of $\bS^\boxtimes(V)$. Indeed, $\bS^\boxtimes(V)$ only contains $\bI_\boxtimes$ which is a subbundle of $\bI_\otimes$ concentrated on $\UConf_0(M)$. This is the reason why we cannot define the Poisson bracket on the $\boxtimes$-algebra bundle $\bS^\boxtimes(V)$ and need to consider the $2$-algebra bundle $\bS^\boxtimes S^\otimes(V)$. 
\end{remark}

Before giving the proof, note that conditions 1. and 2. imply that all iterated brackets between single generators on different points of $M$ vanish. For instance
\begin{align*}
 \{\{u_x,v_y\},w_z\} &= k(u_x,v_y) \{1_{\{x,y\}},w_z\} = 0. 
\end{align*}

\begin{proof}
The Leibniz rules holds by definition of the bracket, once we checked that the recursive definition of the bracket is compatible with iterated products. We check the identity
\begin{align*}
 \sum \{(a_{\underline x} \odot b_{\underline x}) \odot c_{\underline x}, d_{\underline y}\} 
 &= \sum \{a_{\underline x} \odot (b_{\underline x} \odot c_{\underline x}), d_{\underline y}\}
\end{align*} 
by the following calculations:
\begin{align*}
 \{(a_{\underline x} \odot b_{\underline x}) \odot c_{\underline x}, d_{\underline y}\} &= \{a_{\underline x} \odot b_{\underline x}, d_{\underline y}\} \odot (c_{\underline x}\boxdot 1_{\underline y})
 + \big((a_{\underline x} \odot b_{\underline x}\big) \boxdot 1_{\underline y}) \odot \{ c_{\underline x},d_{\underline y}\}\\
 &= \{a_{\underline x}, d_{\underline y}\} \odot (b_{\underline x}\boxdot 1_{\underline y}) \odot (c_{\underline x}\boxdot 1_{\underline y}) \\
 &+ \{b_{\underline x}, d_{\underline y}\} \odot (a_{\underline x}\boxdot 1_{\underline y}) \odot (c_{\underline x}\boxdot 1_{\underline y}) 
+ \big((a_{\underline x} \odot b_{\underline x}\big) \boxdot 1_{\underline y}) \odot \{ c_{\underline x},d_{\underline y}\}\\
&= \{a_{\underline x}, d_{\underline y}\} \odot \big((b_{\underline x} \odot c_{\underline x})\boxdot 1_{\underline y}\big) \\
 &+ \{b_{\underline x}, d_{\underline y}\} \odot \big((a_{\underline x} \odot c_{\underline x})\boxdot 1_{\underline y}\big) 
+ \{ c_{\underline x},d_{\underline y}\} \odot \big((a_{\underline x} \odot b_{\underline x}) \boxdot 1_{\underline y}\big), 
\end{align*}
where we use the pentagon axiom of eq. \eqref{eq:2-algebra}.
By similar calculations we get 
\begin{align*}
 \sum \{(a_{\underline x} \boxdot a_{\underline y}) \odot (b_{\underline x} \boxdot b_{\underline y}), c_{\underline z}\} 
 &= \sum \{(a_{\underline x} \odot b_{\underline x}) \boxdot (a_{\underline y} \odot b_{\underline y}), c_{\underline z}\}.
\end{align*} 

We now prove the antisymmetry and the Jacobi identity by recursion.
The antisymmetry holds on the generators $u_x, v_y\in V$ because the function $k$ is antisymmetric, therefore
\begin{align*}
 \{u_x,v_y\} &= k(u_x,v_y) 1_x\boxdot 1_y 
 = -k(v_y,u_x) 1_y\boxdot 1_x 
 = \{v_y,u_x\}. 
\end{align*}
Suppose that it holds for all elements up to a given degree, let us check that it holds for higher degree elements: 
\begin{align*}
 \{a_{{\underline x}'}\boxdot b_{{\underline x}''},c_{\underline y}\} 
 &= \{a_{{\underline x}'},c_{\underline y}\} \boxdot b_{{\underline x}''} + a_{{\underline x}'} \boxdot \{b_{{\underline x}''},c_{\underline y}\} \\ 
 &= -\{c_{\underline y},a_{{\underline x}'}\} \boxdot b_{{\underline x}''} - a_{{\underline x}'} \boxdot \{c_{\underline y},b_{{\underline x}''}\} \\ 
 &= - \{c_{\underline y},a_{{\underline x}'}\boxdot b_{{\underline x}''}\} 
 \\ 
 \{a_{\underline x}\odot b_{\underline x},c_{\underline y}\} &= \{a_{\underline x},c_{\underline y}\} \boxdot (b_{\underline x}\boxdot 1_{\underline y}) + (a_{\underline x} \boxdot 1_{\underline y}) \odot \{b_{\underline x},c_{\underline y}\} \\ 
 &= -\{c_{\underline y},a_{\underline x}\} \odot (1_{\underline y} \boxdot b_{\underline x}) - (a_{\underline x} \boxdot 1_{\underline y}) \odot \{c_{\underline y},b_{\underline x}\} \\ 
 &= - \{c_{\underline y},a_{\underline x}\odot b_{\underline x}\}. 
\end{align*}

The Jacobi identity holds on generators, because in the Jacobian
\begin{align*}
 \mathrm{Jac}(u_x,v_y,w_z) &= \{u_x,\{v_y,w_z\}\} + \{v_y,\{w_z,u_x\}\} + \{w_z,\{u_x,v_y\}\}
\end{align*}
each of the three terms vanishes: $\{u_x,\{v_y,w_z\}\}=k(v_y,w_z) \{u_x,1_{\{y,z\}}\} = 0$. 
Suppose that the Jacobian vanishes for all elements up to a given degree, let us check that it vanishes for higher degree elements: in the Jacobian 
\begin{align*}
\mathrm{Jac}(a_{{\underline x}'}\boxdot b_{{\underline x}''},c_{\underline y},d_{\underline z}) &= 
\{a_{{\underline x}'}\boxdot b_{{\underline x}''},\{c_{\underline y},d_{\underline z}\}\} + \{c_{\underline y},\{d_{\underline z},a_{{\underline x}'}\boxdot b_{{\underline x}''}\}\} + \{d_{\underline z},\{a_{{\underline x}'}\boxdot b_{{\underline x}''},c_{\underline y}\}\}, 
\end{align*}
the first term is 
\begin{align*}
 & \{a_{{\underline x}'},\{c_{\underline y},d_{\underline z}\}\} \boxdot b_{{\underline x}''} + a_{{\underline x}'}\boxdot \{b_{{\underline x}''},\{c_{\underline y},d_{\underline z}\}\}, 
\end{align*}
the second term is 
\begin{align*}
 & \hspace{-1cm} \{c_{\underline y},\{d_{\underline z},a_{{\underline x}'}\} \boxdot b_{{\underline x}''} \} 
 + \{c_{\underline y},\{d_{\underline z},b_{{\underline x}''}\} \boxdot a_{{\underline x}'} \} \\ 
 & = \{c_{\underline y},\{d_{\underline z},a_{{\underline x}'}\} \} \boxdot b_{{\underline x}''} 
 + \{d_{\underline z},a_{{\underline x}'}\} \boxdot \{c_{\underline y},b_{{\underline x}''} \} \\ 
 & \hspace{1cm} + \{c_{\underline y},\{d_{\underline z},b_{{\underline x}''}\} \} \boxdot a_{{\underline x}'} 
 + \{d_{\underline z},b_{{\underline x}''}\} \boxdot \{c_{\underline y},a_{{\underline x}'} \} , 
\end{align*}
and the third term is 
\begin{align*}
 & \hspace{-1cm} \{d_{\underline z},\{a_{{\underline x}'},c_{\underline y}\}\boxdot b_{{\underline x}''}\} 
 + \{d_{\underline z},a_{{\underline x}'}\boxdot \{b_{{\underline x}''},c_{\underline y}\}\} \\ 
 & = \{d_{\underline z},\{a_{{\underline x}'},c_{\underline y}\}\} \boxdot b_{{\underline x}''} 
 + \{a_{{\underline x}'},c_{\underline y}\}\boxdot \{d_{\underline z},b_{{\underline x}''}\} \\ 
 & \hspace{1cm} + \{d_{\underline z},a_{{\underline x}'}\} \boxdot \{b_{{\underline x}''},c_{\underline y}\}
 + a_{{\underline x}'}\boxdot \{d_{\underline z},\{b_{{\underline x}''},c_{\underline y}\}\} . 
\end{align*}
Therefore
\begin{align*}
 \mathrm{Jac}(a_{{\underline x}'}\boxdot b_{{\underline x}''},c_{\underline y},d_{\underline z}) &= 
 \mathrm{Jac}(a_{{\underline x}'},c_{\underline y},d_{\underline z}) \boxdot b_{{\underline x}''} 
 + a_{{\underline x}'}\boxdot \mathrm{Jac}(b_{{\underline x}''},c_{\underline y},d_{\underline z}) \\ 
 & \hspace{1cm} + \{d_{\underline z},a_{{\underline x}'}\} \boxdot \Big[\{c_{\underline y},b_{{\underline x}''}\}+\{b_{{\underline x}''},c_{\underline y}\}\Big] \\ 
 & \hspace{1cm} + \{d_{\underline z},b_{{\underline x}''}\} \boxdot \Big[\{c_{\underline y},a_{{\underline x}'}\} + \{a_{{\underline x}'},c_{\underline y}\}\Big] \\ 
 & = 0.
\end{align*}
Similarly, in the Jacobian
\begin{align*}
\mathrm{Jac}(a_{\underline x}\odot b_{\underline x},c_{\underline y},d_{\underline z}) &= 
\{a_{\underline x}\odot b_{\underline x},\{c_{\underline y},d_{\underline z}\}\} + \{c_{\underline y},\{d_{\underline z},a_{\underline x}\odot b_{\underline x}\}\} + \{d_{\underline z},\{a_{\underline x}\odot b_{\underline x},c_{\underline y}\}\}, 
\end{align*}
the first term is 
\begin{align*}
 & \{a_{\underline x},\{c_{\underline y},d_{\underline z}\}\} \odot (b_{\underline x}\boxdot 1_{{\underline y} \sqcup {\underline z}}) + (a_{\underline x}\boxdot 1_{{\underline y} \sqcup {\underline z}}) \odot \{b_{\underline x},\{c_{\underline y},d_{\underline z}\}\}, 
\end{align*}
the second term is 
\begin{align*}
 & \hspace{-1cm} \{c_{\underline y},\{d_{\underline z},a_{\underline x}\} \odot (b_{\underline x}\boxdot 1_{\underline z})\} 
 + \{c_{\underline y},(a_{\underline x} \boxdot 1_{\underline z}) \odot \{d_{\underline z},b_{\underline x}\}\} \\ 
 & = \{c_{\underline y},\{d_{\underline z},a_{\underline x}\}\} \odot (b_{\underline x} \boxdot 1_{{\underline y} \sqcup {\underline z}})
 + \{c_{\underline y},b_{\underline x}\boxdot 1_{\underline z}\} \odot (1_{\underline y} \boxdot \{d_{\underline z},a_{\underline x}\}) \\ 
 & \hspace{1cm} + \{c_{\underline y},a_{\underline x} \boxdot 1_{\underline z}\} \odot (1_{\underline y} \boxdot \{d_{\underline z},b_{\underline x}\})
 + (a_{\underline x} \boxdot 1_{{\underline y} \sqcup {\underline z}}) \odot \{c_{\underline y},\{d_{\underline z},b_{\underline x}\} \} \\ 
 & = \{c_{\underline y},\{d_{\underline z},a_{\underline x}\}\} \odot (b_{\underline x} \boxdot 1_{{\underline y} \sqcup {\underline z}}) 
 + (\{c_{\underline y},b_{\underline x}\} \boxdot 1_{\underline z}) \odot (1_{\underline y} \boxdot \{d_{\underline z},a_{\underline x}\}) \\ 
 & \hspace{1cm} + (\{c_{\underline y},a_{\underline x}\} \boxdot 1_{\underline z}) \odot (1_{\underline y} \boxdot \{d_{\underline z},b_{\underline x}\}) 
 + (a_{\underline x} \boxdot 1_{{\underline y} \sqcup {\underline z}}) \odot \{c_{\underline y},\{d_{\underline z},b_{\underline x}\} \} , 
\end{align*}
because $\{c_{\underline y},b_{\underline x} \boxdot 1_{\underline z}\} = \{c_{\underline y},b_{\underline x}\} \boxdot 1_z + b_{\underline x} \boxdot \{c_{\underline y},1_{\underline z}\}$ with $\{c_{\underline y},1_{\underline z}\}=0$, 
and the third term is 
\begin{align*}
 & \hspace{-1cm} \{d_{\underline z},\{a_{\underline x},c_{\underline y}\}\odot (b_{\underline x}\boxdot 1_{\underline y})\} 
 + \{d_{\underline z},(a_{\underline x}\boxdot 1_{\underline y}) \odot \{b_{\underline x},c_{\underline y}\}\} \\ 
 & = \{d_{\underline z},\{a_{\underline x},c_{\underline y}\}\} \odot (b_{\underline x}\boxdot 1_{{\underline y} \sqcup {\underline z}}) 
 + (1_{\underline z} \boxdot \{a_{\underline x},c_{\underline y}\}) \odot \{d_{\underline z},b_{\underline x}\boxdot 1_{\underline y}\} \\
 & \hspace{1cm} + \{d_{\underline z},a_{\underline x} \boxdot 1_{\underline y}\} \odot (\{b_{\underline x},c_{\underline y}\} \boxdot 1_{\underline z}) 
 + (a_{\underline x}\boxdot 1_{{\underline y} \sqcup {\underline z}}) \odot \{d_{\underline z},\{b_{\underline x},c_{\underline y}\}\} \\ 
 & = \{d_{\underline z},\{a_{\underline x},c_{\underline y}\}\} \odot (b_{\underline x}\boxdot 1_{{\underline y} \sqcup {\underline z}}) 
 + (1_{\underline z} \boxdot \{a_{\underline x},c_{\underline y}\}) \odot (\{d_{\underline z},b_{\underline x}\} \boxdot 1_{\underline y}) \\
& \hspace{1cm} + (\{d_{\underline z},a_{\underline x}\} \boxdot 1_{\underline y}) \odot (\{b_{\underline x},c_{\underline y}\} \boxdot 1_{\underline z}) 
 + (a_{\underline x}\boxdot 1_{{\underline y} \sqcup {\underline z}}) \odot \{d_{\underline z},\{b_{\underline x},c_{\underline y}\}\}. 
\end{align*}
Therefore
\begin{align*}
 \mathrm{Jac}(a_{\underline x} \odot b_{\underline x},c_{\underline y},d_{\underline z}) &= 
 \mathrm{Jac}(a_{\underline x},c_{\underline y},d_{\underline z}) \odot (b_{\underline x} \boxdot 1_{{\underline y} \sqcup {\underline z}})
 + (a_{\underline x}\boxdot 1_{{\underline y} \sqcup {\underline z}}) \odot \mathrm{Jac}(b_{\underline x},c_{\underline y},d_{\underline z}) \\ 
 & \hspace{1cm} + (1_{\underline y} \boxdot \{d_{\underline z},a_{\underline x}\}) \odot 
 \left(\Big[\{c_{\underline y},b_{\underline x}\} + \{b_{\underline x},c_{\underline y}\}\Big] \boxdot 1_{\underline z}\right) \\
 & \hspace{1cm} + (1_{\underline y} \boxdot \{d_{\underline z},b_{\underline x}\}) \odot 
 \left(\Big[\{c_{\underline y},a_{\underline x}\} + \{a_{\underline x},c_{\underline y}\}\Big] \boxdot 1_{\underline z}\right) \\ 
 & = 0.
\end{align*}
\end{proof}

\begin{example}
Let $V\to M$ be a vector bundle. For any $a\in V_x$ and any $n>0$, denote by $a^n = a^{\otimes n} = a\odot \cdots \odot a$ the symmetric $\otimes$-power of $a$ of degree $n$, which is an element of $S^{\otimes n}(V_x)$. Also set $a^0 = 1_x \in S^{\otimes 0}(V)=\K$. Then, for $a\in V_x$ and $b\in V_y$ we have 
\begin{align*}
 \{ a^n,b^m \} &= nm\, k(a,b)\, a^{n-1}\boxdot b^{m-1}, 
\end{align*}
which is an element of $S^{\otimes n-1}(V_x) \boxdot S^{\otimes m-1}(V_y) \subset \bS^{\boxtimes 2}(S^\otimes(V))_{\{x,y\}}$. 

Now consider $k=k'+k''$ elements $a_i\in V_{x_i}$ and $b_j\in V_{y_j}$ for $i=1,...,k'$ and $j=1,...,k''$, and the corresponding symmetric $\boxtimes$- and $\otimes$-power of multi-degree $(n_1,...,n_{k'})$ and $(m_1,...,m_{k''})$. Then we have 
\begin{align*}
 \{ a_1^{n_1}\boxdot \cdots \boxdot a_{k'}^{n_{k'}}, b_1^{m_1}\boxdot \cdots \boxdot b_{k''}^{m_{k''}} \} &= \\ 
 & \hspace{-3cm}
 \underset{j=1,...,k''}{\sum_{i=1,...,k'}} n_i m_j\, k(a_i,b_j)\, a_1^{n_1}\boxdot \cdots \boxdot a_i^{n_i-1}\boxdot \cdots \boxdot a_{k'}^{n_{k'}} \boxdot b_1^{m_1}\boxdot \cdots \boxdot b_j^{m_j-1}\boxdot \cdots \boxdot b_{k''}^{m_{k''}}. 
\end{align*}
This is an element of 
\begin{align*}
 \bS^{\boxtimes k'} S^{\otimes}(V)_{\underline{x}}\boxdot \bS^{\boxtimes k''} S^{\otimes}(V)_{\underline{y}} \subset \bS^{\boxtimes k} S^{\otimes}(V)_{\underline{x}\sqcup \underline{y}}
\end{align*}
with $\underline{x}=\{x_1,...,x_{k'}\}$ and $\underline{y}=\{y_1,...,y_{k''}\}$. 

This example shows that the two Leibniz rules (with respect to the $\boxtimes$- and to the $\otimes$-structures) are a proper model of a biderivation, as should be a Poisson bracket. This will be clearer on the induced Poisson bracket of sections. 
\end{example}

%%%%%%%%%%%%%%%%%%%%%%%%%%%%%%%%%%%%%%%%

\subsection{Poisson-Cauchy algebra bundles}
\label{subsec:Poisson Cauchy algebra}

\begin{definition}
A \emph{Poisson $\boxtimes$-algebra bundle} over $\UConf(M)$ is a commutative $\boxtimes$-algebra bundle $(\bP,m_{\boxtimes},u_{\boxtimes})$ endowed with a Poisson bracket bundle map $\{\ ,\ \}$ as in Definition \ref{def:Poisson 2-algebra} satisfying only the $m_{\boxtimes}$-Leibniz rule. 
\end{definition}

A Poisson $2$-algebra bundle is clearly a Poisson $\boxtimes$-algebra bundle if we forget the $\otimes$-structure. 
In order to describe the Poisson structure of multilocal polynomial functionals in field theory (which will be the main object of study in \cite{Frabetti-Kravchenko-Ryvkin-2024}), we are interested in the Poisson $\boxtimes$-algebra induced by a $2$-point function which will be a \emph{generalized} function, i.e. a distribution: this will require an underlying $2$-algebra structure which is relevant only for the Poisson structure, and also an extension of the values of the bundle by densities on the base manifold. This extension can be done with respect to any $\boxtimes$-algebra bundle, by means of the Hadamard tensor product.

\begin{proposition} \label{prop:P otimes A is Poisson}
Let $\bP$ be a Poisson $\boxtimes$-algebra bundle on $\UConf(M)$ with multiplication $\bullet_P$ and bracket $\{\ ,\ \}$, and let $\bA$ be a commutative $\boxtimes$-algebra bundle on $\UConf(M)$ with multiplication $\bullet_A$. Then the Hadamard tensor product $\bP\otimes \bA$ is a Poisson $\boxtimes$-algebra bundle with bracket 
\begin{align*}
 \{p_{\underline x} \otimes a_{\underline x},q_{\underline y}\otimes b_{\underline y}\} 
 := \{p_{\underline x},q_{\underline y}\} \otimes a_{\underline x} \bullet_A b_{\underline y}
\end{align*}
for any (disjoint) configurations $\underline x$, $\underline y$, any $p_{\underline x},q_{\underline y}\in \bP$ and any $a_{\underline x},b_{\underline y}\in \bA$. 
\end{proposition}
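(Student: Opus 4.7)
The statement is a bundle-version of the classical fact that the tensor product of a Poisson algebra with a commutative algebra is again a Poisson algebra. The plan is to transpose that reasoning to our setting by handling the $\bP$-slot and the $\bA$-slot essentially independently, exploiting the fact that the Hadamard tensor product is fibrewise while the bracket and both $\boxtimes$-multiplications split a configuration $\underline z$ as $\underline x \sqcup \underline y$.

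First I will check well-definedness of the proposed bracket as a bundle map $(\bP\otimes\bA)\boxtimes(\bP\otimes\bA)\to\bP\otimes\bA$. On a configuration $\underline z$, the domain fibre is
\[
\bigoplus_{\underline z=\underline x\sqcup\underline y}\bP_{\underline x}\otimes\bA_{\underline x}\otimes\bP_{\underline y}\otimes\bA_{\underline y},
\]
and on each summand the formula composes $\{\ ,\ \}_P\colon\bP_{\underline x}\otimes\bP_{\underline y}\to\bP_{\underline z}$ with $\bullet_A\colon\bA_{\underline x}\otimes\bA_{\underline y}\to\bA_{\underline z}$, landing in $\bP_{\underline z}\otimes\bA_{\underline z}=(\bP\otimes\bA)_{\underline z}$. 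Smoothness is inherited from that of $\{\ ,\ \}_P$ and $\bullet_A$.

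Next I will verify antisymmetry, the Jacobi identity, and the $m_\boxtimes$-Leibniz rule by direct fibrewise computation. For antisymmetry, swapping the two arguments turns $\{p_{\underline x},q_{\underline y}\}$ into $\{q_{\underline y},p_{\underline x}\}=-\{p_{\underline x},q_{\underline y}\}$ in $\bP$, while the sign is compensated by commutativity of $\bullet_A$. For the $m_\boxtimes^{P\otimes A}$-Leibniz rule, recall that the induced Cauchy multiplication on $\bP\otimes\bA$ is
\[
(p_{\underline x'}\otimes a_{\underline x'})\bullet_{P\otimes A}(p_{\underline x''}\otimes a_{\underline x''}) \;=\; (p_{\underline x'}\bullet_P p_{\underline x''})\otimes(a_{\underline x'}\bullet_A a_{\underline x''}).
\]
Applying the proposed bracket with a third argument $r_{\underline y}\otimes c_{\underline y}$ and expanding using the $m_\boxtimes$-Leibniz rule in $\bP$ yields exactly the sum of the two terms on the right-hand side; the $\bA$-factors are reassembled by associativity and commutativity of $\bullet_A$. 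The Jacobi identity is the main point: each double bracket of Hadamard tensors factors as Jacobi in the $\bP$-slot against a triple $\bullet_A$-product in the $\bA$-slot, and the cyclic sum collapses thanks to the Jacobi identity for $\{\ ,\ \}_P$ together with commutativity of $\bullet_A$.

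The only mildly delicate point is bookkeeping: each identity is to be read as a sum over splits of a single configuration (of two, resp.\ three, pieces), and care must be taken that the splits used for the $\bP$- and $\bA$-factors coincide, which is exactly what the Hadamard tensor product enforces. Once this is kept in mind, no other obstacle arises, and the proof reduces to an essentially formal translation of the classical argument for Poisson algebras tensored with commutative algebras. I expect the write-up to be short: one display for the bracket on a typical summand, and three short computations for antisymmetry, Jacobi, and Leibniz.
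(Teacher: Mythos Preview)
Your proposal is correct and follows essentially the same route as the paper: both recognize the statement as the bundle version of the classical fact that tensoring a Poisson (or Lie) algebra with a commutative algebra yields a Poisson (or Lie) algebra, and both verify the $m_\boxtimes$-Leibniz rule by a direct computation using associativity and commutativity of $\bullet_A$. The paper is slightly terser---it declares antisymmetry and Jacobi ``well known'' from the Lie-algebra-tensor-commutative-algebra case and only writes out Leibniz---whereas you also spell out well-definedness and the Jacobi argument, but this is a difference in level of detail rather than strategy.
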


\begin{proof}
The Hadamard product of the two commutative $\boxtimes$-algebras is easily verified to be a commutative $\boxtimes$-algebra, with multiplication 
$$
(p_{\underline x} \otimes a_{\underline x}) \bullet (q_{\underline y} \otimes b_{\underline y}) = (p_{\underline x} \bullet_P q_{\underline y}) \otimes (a_{\underline x} \bullet_A b_{\underline y}). 
$$
Then the result on the antisymmetry and the Jacobi identity of the Poisson bracket is well known and corresponds to the extension of a Lie algebra structure to the tensor product of a Lie algebra by a commutative algebra. 
It remains to verify that the Leibniz rule holds with respect to the product $\bullet$: 
\begin{align*}
 & \hspace{-1cm}
 \{(p'_{{\underline x}'} \otimes a'_{{\underline x}'}) \bullet (p''_{{\underline x}''} \otimes a''_{{\underline x}''}),q_{\underline y} \otimes b_{\underline y}\} \\ 
 &= \{(p'_{{\underline x}'} \bullet_P p''_{{\underline x}''}) \otimes (a'_{{\underline x}'} \bullet_A a''_{{\underline x}''}),q_{\underline y} \otimes b_{\underline y}\} \\
 &= \{p'_{{\underline x}'} \bullet_P p''_{{\underline x}''},q_{\underline y}\} \otimes (a'_{{\underline x}'} \bullet_A a''_{{\underline x}''} \bullet_A b_{\underline y}) \\ 
 &= \Big(\{p'_{{\underline x}'},q_{\underline y}\} \bullet_P p''_{{\underline x}''}\Big) \otimes (a'_{{\underline x}'} \bullet_A b_{\underline y} \bullet_A a''_{{\underline x}''}) 
 + \Big(p'_{{\underline x}'} \bullet_P \{p''_{{\underline x}''},q_{\underline y}\}\Big) \otimes (a'_{{\underline x}'} \bullet_A a''_{{\underline x}''} \bullet_A b_{\underline y}) \\
 & = \Big(\{p'_{{\underline x}'},q_{\underline y}\} \otimes (a'_{{\underline x}'} \bullet_A b_{\underline y}) \Big) \bullet (p''_{{\underline x}''}\otimes a''_{{\underline x}''}) 
 + (p'_{{\underline x}'} \otimes a'_{{\underline x}'}) \bullet \Big(\{p''_{{\underline x}''},q_{\underline y}\} \otimes (a''_{{\underline x}''} \bullet_A b_{\underline y}) \Big) \\ 
 & = \{p'_{{\underline x}'}\otimes a'_{{\underline x}'},q_{\underline y}\otimes b_{\underline y}\} \bullet (p''_{{\underline x}''}\otimes a''_{{\underline x}''}) 
 + (p'_{{\underline x}'} \otimes a'_{{\underline x}'}) \bullet \{p''_{{\underline x}''} \otimes a''_{{\underline x}''},q_{\underline y} \otimes b_{\underline y}\}. 
\end{align*}
\end{proof}

\begin{corollary} \label{cor:Poisson-Cauchy algebra bundle}
Let $V\to M$ and $W\to M$ be two vector bundles on $M$. 
Then, for any choice of an antisymmetric bundle map $k:V\boxtimes V \to \UConf_2(M)\times \K$, the vector bundle 
\begin{align*}
 \bS^{\boxtimes}\big(S^{\otimes}(V) \otimes W\big) \cong \bS^{\boxtimes} S^{\otimes}(V) \otimes \bS^{\boxtimes}(W) \to \UConf(M)
\end{align*}
is a Poisson $\boxtimes$-algebra bundle, with structure maps given on any (disjoint) configurations $\underline{x},\underline{y}\in \UConf(M)$, any elements $a_{\underline x},b_{\underline y}\in S^{\boxtimes} S^{\otimes}(V)$ and any elements $\alpha_{\underline x},\beta_{\underline y}\in S^{\boxtimes}(W)$, by 
\begin{align*}
\text{commutative multiplication: }\quad & 
(a_{\underline x} \otimes \alpha_{\underline x}) \boxdot (b_{\underline y} \otimes \beta_{\underline y}) 
= (a_{\underline x} \boxdot b_{\underline y}) \otimes (\alpha_{\underline x} \boxdot \beta_{\underline y}) \\ 
\text{Poisson bracket: }\quad & 
\{a_{\underline x} \otimes \alpha_{\underline x},b_{\underline y} \otimes \beta_{\underline y}\} 
= \{a_{\underline x},b_{\underline y}\}_k \otimes (\alpha_{\underline x} \boxdot \beta_{\underline y}), 
\end{align*}
where $\{\ ,\ \}_k$ is the Poisson bracket on $\bS^{\boxtimes}S^{\otimes}(V)$ given in Theorem \ref{thm:Cauchy-Hadamard-Poisson}.
\end{corollary}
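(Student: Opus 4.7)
The statement is a direct assembly of previously established results, so the plan is essentially to verify that the hypotheses of Proposition \ref{prop:P otimes A is Poisson} are in place and to read off the formulas.

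First I would establish the isomorphism
\[
\bS^{\boxtimes}\big(S^{\otimes}(V) \otimes W\big) \cong \bS^{\boxtimes} S^{\otimes}(V) \otimes \bS^{\boxtimes}(W)
\]
by applying Lemma \ref{lem: S^boxtimes strongly monoidal}, the strong monoidality of the functor $\bS^{\boxtimes}$ with respect to the Hadamard tensor product, to the vector bundles $S^{\otimes}(V)$ and $W$ on $M$. This identifies both sides canonically, and transports the $\boxtimes$-multiplication on the left (given by the free commutative $\boxdot$-product of the Cauchy symmetric algebra of $S^\otimes(V)\otimes W$) to the componentwise $\boxdot \otimes \boxdot$ multiplication on the right, which is exactly the formula for the commutative multiplication stated in the corollary.

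Next I would set $\bP := \bS^{\boxtimes} S^{\otimes}(V)$ and $\bA := \bS^{\boxtimes}(W)$. By Theorem \ref{thm:Cauchy-Hadamard-Poisson}, the antisymmetric kernel $k: V\boxtimes V\to \bI_{\otimes,2}$ induces on $\bP$ the structure of a Poisson $2$-algebra bundle, with bracket $\{\ ,\ \}_k$; forgetting the $\otimes$-structure, $\bP$ becomes a Poisson $\boxtimes$-algebra bundle. By Theorem \ref{thm:Cauchy symmetric bundle}, $\bA$ is a commutative $\boxtimes$-algebra bundle with multiplication $\boxdot$. Thus the hypotheses of Proposition \ref{prop:P otimes A is Poisson} are satisfied.

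Finally, applying that proposition directly gives on $\bP \otimes \bA$ the Poisson $\boxtimes$-algebra bracket
\[
\{a_{\underline{x}} \otimes \alpha_{\underline{x}},\, b_{\underline{y}} \otimes \beta_{\underline{y}}\} = \{a_{\underline{x}}, b_{\underline{y}}\}_k \otimes (\alpha_{\underline{x}} \boxdot \beta_{\underline{y}}),
\]
which, combined with the identification from the first step, is exactly the claimed Poisson bracket. There is no real obstacle here: since both the $2$-monoidal structure theorems and Proposition \ref{prop:P otimes A is Poisson} have already been proved, the only thing to check carefully is that the monoidal isomorphism of Lemma \ref{lem: S^boxtimes strongly monoidal} intertwines the bracket defined via the tensor-product construction with the one obtained by viewing $\bS^{\boxtimes}(S^{\otimes}(V)\otimes W)$ directly as a single Cauchy symmetric algebra. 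This is automatic at the level of generators (elements of the form $v \otimes w$ with $v \in V_x$, $w \in W_x$) from the definitions, and then extends to all of $\bP\otimes \bA$ by the Leibniz rules, giving the corollary.
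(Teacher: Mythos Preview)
Your proposal is correct and follows exactly the approach implicit in the paper: the corollary is stated without proof because it is an immediate application of Proposition~\ref{prop:P otimes A is Poisson} to $\bP=\bS^{\boxtimes}S^{\otimes}(V)$ (a Poisson $\boxtimes$-algebra by Theorem~\ref{thm:Cauchy-Hadamard-Poisson}) and $\bA=\bS^{\boxtimes}(W)$ (a commutative $\boxtimes$-algebra by Theorem~\ref{thm:Cauchy symmetric bundle}), together with the isomorphism from Lemma~\ref{lem: S^boxtimes strongly monoidal}. Your final remark about compatibility of the bracket under the monoidal isomorphism is more care than the paper itself takes, but it is harmless and correct.
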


When $W$ is the density bundle $\Dens_M$ on $M$ (see Section \ref{subsec:density}), we call $\bS^{\boxtimes} \big(S^{\otimes}(V)\otimes \Dens_M\big)\cong \bS^{\boxtimes} S^\otimes(V) \otimes \Dens_{\UConf(M)}$ the \emph{Poisson-Cauchy algebra bundle} on $V$. This bundle will be essential for applying our constructions to field theory in the forthcoming article \cite{Frabetti-Kravchenko-Ryvkin-2024}, cf. also Section \ref{subsec:sections} below.

%%%%%%%%%%%%%%%%%%%%%%%%%%%%%%%%%%%%%%%%%%%%%%%%%%%

\subsection{Sections of Poisson algebra bundles}
\label{subsec:sections}

For physical applications, the main interest of Poisson algebra bundles over $\UConf(M)$ is the induced structure on their spaces of sections.\\

A \emph{smooth section} of a vector bundle $\pi:\bV\to \UConf(M)$ is a smooth map $\phi:\UConf(M)\to \bV$ (where smoothness is seen component-wise) such that $\pi\circ\phi=\id_{\UConf(M)}$. 
Since $\UConf(M)=\bigsqcup_k \UConf_k(M)$, the space of smooth sections of $\bV$ is
\[
\Gamma(\UConf(M),\bV)= \prod_k \Gamma(\UConf_k(M),\bV_k).
\]
As usual, smooth sections of $\bV$ can also be seen as bundle maps from the trivial line bundle $\bI_\otimes = \UConf(M)\times \K$ to $\bV$ and more generally the morphisms $\Mor(\bV,\bW)$ are in bijection with $\Gamma(\UConf(M),\Hom(\bV,\bW))$. 
Sections of the trivial line bundle give \emph{smooth functions} on $\UConf(M)$, and their set
\[
C^\infty(\UConf(M))=\Gamma(\UConf(M),\bI_\otimes) 
\] 
forms a unital commutative algebra with the usual multiplication $(f\, g)(\underline{x}) = f(\underline{x}) g(\underline{x})$, where $f$ and $g$ are functions and $\underline{x}\in \UConf(M)$. 
More generally, the section space $\Gamma(\UConf(M),\bA)$ of any $\otimes$-algebra bundle $(\bA,m_\otimes)$ inherits the structure of an algebra with multiplication $\phi \cdot\psi = m_\otimes\circ (\phi\otimes\psi)$ and unit $1(\underline{x})=1$ for any $\underline{x}\in \UConf(M)$. 
A similar result holds for $\boxtimes$-algebra bundles. 

\begin{lemma} \label{lem:sections algebra} 
If $\bA$ is a $\boxtimes$-algebra bundle over $\UConf(M)$ with multiplication $\bullet_\bA$, then its space of smooth sections $\Gamma(\UConf(M),\bA)$ is an algebra with multiplication 
\begin{align} \label{eq:sections multiplication}
 (\phi_1\bullet \phi_2)(\underline{x}) &= \sum_{\underline{x}= \underline{x}'\sqcup \underline{x}''} \phi_1(\underline{x}') \bullet_{\bA} \phi_2(\underline{x}'') 
\end{align}
for any sections $\phi_1$, $\phi_2$ and any configuration $\underline{x}$ and its unit is given by the section
\begin{align} \label{eq:sections unit}
 1(\underline{x}) &= \begin{cases}
 1 & \text{if $\underline{x}=\emptyset$} \\
 0 & \text{otherwise}. 
 \end{cases}
\end{align}
If $\bA$ is commutative, so is $\Gamma(\UConf(M),\bA)$. 
\end{lemma}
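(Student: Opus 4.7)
The plan is to realize the multiplication \eqref{eq:sections multiplication} as the composition $m_\boxtimes\circ (\phi_1\boxtimes \phi_2)$, where $\phi_1\boxtimes \phi_2$ is a well-defined smooth section of $\bA\boxtimes \bA$ over $\UConf(M)$. Since the fibre of $\bA\boxtimes \bA$ over any configuration $\underline x$ is by Definition \ref{def:hadamard cauchy} the finite direct sum $\bigoplus_{\underline x=\underline x'\sqcup \underline x''}\bA_{\underline x'}\otimes \bA_{\underline x''}$, the formula $(\phi_1\boxtimes \phi_2)(\underline x):=\sum_{\underline x=\underline x'\sqcup \underline x''}\phi_1(\underline x')\otimes \phi_2(\underline x'')$ gives a well-defined element of this fibre. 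Once this section is shown to be smooth, associativity, unitality and (graded) commutativity of $\bullet$ follow formally from the corresponding properties of $m_\boxtimes$ and $u_\boxtimes$ at the bundle level.

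First, I would verify associativity. By iterating the construction above, the triple section $\phi_1\boxtimes\phi_2\boxtimes\phi_3$ is a section of $\bA\boxtimes\bA\boxtimes\bA$ whose value at $\underline x$ is the sum over triple splits $\underline x=\underline x'\sqcup \underline x''\sqcup \underline x'''$. Both $(\phi_1\bullet \phi_2)\bullet \phi_3$ and $\phi_1\bullet (\phi_2\bullet \phi_3)$ are then seen to equal the composition of this triple section with either iterated application of $m_\boxtimes$; they coincide by the associativity of $m_\boxtimes$ as a $\boxtimes$-algebra bundle.

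Next, for the unit, observe that the section $1$ of \eqref{eq:sections unit} is precisely the image under $u_\boxtimes:\bI_\boxtimes\to \bA$ of the canonical section of $\bI_\boxtimes$, which (by the description of $\bI_\boxtimes$ in Lemma \ref{lem:monoidal}) is nonzero only over $\emptyset$. For any section $\phi$ and any configuration $\underline x$, the only splits $\underline x=\underline x'\sqcup \underline x''$ contributing nontrivially to $\phi\bullet 1$ (resp.\ $1\bullet \phi$) are $\underline x'=\underline x$, $\underline x''=\emptyset$ (resp.\ the reverse); the unit axiom for $u_\boxtimes$ then gives $\phi\bullet 1 = \phi = 1\bullet \phi$. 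Commutativity of $\bullet$ when $\bA$ is commutative follows directly from $m_\boxtimes\circ\tau_\boxtimes=m_\boxtimes$ combined with the symmetry of the splitting $\underline x=\underline x'\sqcup \underline x''\leftrightarrow \underline x=\underline x''\sqcup \underline x'$.

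The one genuine subtlety — and hence the main obstacle — is checking that $\phi_1\boxtimes \phi_2$ is actually smooth as a section of $\bA\boxtimes \bA$. For this, I would work locally: on a chart $q_k(U_1\times\cdots\times U_k)$ of $\UConf_k(M)$ of the form \eqref{eq:chartconf}, over which $\bA$ trivializes on each subproduct $q(U_{i_1}\times\cdots\times U_{i_\ell})$, the trivialization of $\bA\boxtimes \bA$ supplied by Theorem \ref{thm:Cauchy vs external} (in particular by the isomorphism \eqref{eq:liftiso}) identifies the split $\underline x=\underline x'\sqcup \underline x''$ with the \emph{locally constant} choice of a subset $I\sqcup J=\{1,\dots,k\}$. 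In this trivialization the sum defining $\phi_1\boxtimes \phi_2$ is a finite sum of products of smooth functions in the coordinates pulled back from $U_1,\dots,U_k$, hence smooth. Composing with the smooth bundle map $m_\boxtimes$ then yields the desired smooth section $\phi_1\bullet \phi_2$.
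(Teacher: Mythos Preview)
Your proof is correct and follows essentially the same approach as the paper's. The only packaging difference is that the paper phrases the construction as a convolution product: sections are viewed as bundle maps $\bI_\otimes\to\bA$, the bundle $\bI_\otimes$ is equipped with the $\boxtimes$-coalgebra comultiplication $\Delta(1_{\underline x})=\sum_{\underline x=\underline x'\sqcup\underline x''}1_{\underline x'}\otimes 1_{\underline x''}$, and one sets $\phi_1\bullet\phi_2=m_\boxtimes\circ(\phi_1\boxtimes\phi_2)\circ\Delta$. This is exactly your section $\phi_1\boxtimes\phi_2$ of $\bA\boxtimes\bA$ composed with $m_\boxtimes$, but the convolution framing makes smoothness automatic (it is a composition of smooth bundle maps) and lets associativity and commutativity follow from coassociativity and cocommutativity of $\Delta$ together with the algebra axioms, without writing out the triple splits. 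Your explicit local smoothness check via the trivialization \eqref{eq:liftiso} is a welcome addition that the paper leaves implicit.
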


\begin{proof}
As pointed out above, sections of $\bA$ over $\UConf(M)$ can be equivalently described as bundle maps $\UConf(M)\times \K=\bI_\otimes \to \bA$ over $\UConf(M)$. From Theorem \ref{thm: VB(Conf(M)) is 2-monoidal}, we know that $\bI_\otimes$ is a $\boxtimes$-algebra. Here, we rather need that it is also naturally a $\boxtimes$-coalgebra (that is, a $\boxtimes$-comonoid in the sense of \cite{Aguiar-Mahajan-2010}, see Remark \ref{rem:Cauchy Sigma}) with comultiplication $\Delta:\bI_\otimes\to \bI_\otimes\boxtimes \bI_\otimes$ given by 
\begin{align*}
\Delta(1_{\underline x})=\sum_{\underline x=\underline x'\sqcup\underline x''}1_{\underline x'}\otimes 1_{\underline x''}. 
\end{align*}
Then, if $\bullet_\bA$ denotes the multiplication $m:\bA\boxtimes \bA\to \bA$ in the $\boxtimes$-algebra bundle $\bA$, two sections $\phi_1,\phi_2: \bI_\otimes \to \bA$ can be multiplied by means of the convolution product $\mathfrak{m}(\phi_1,\phi_2)=m\circ(\phi_1\boxtimes \phi_2)\circ\Delta$, i.e. as the composition
\begin{align*}
\bI_\otimes\overset{\Delta}\longrightarrow \bI_\otimes\boxtimes \bI_\otimes \overset{\phi_1\boxtimes \phi_2}\longrightarrow \bA\boxtimes \bA \overset{m}\longrightarrow \bA. 
\end{align*}
This gives a well-defined associative operation, since $\bullet_A$ is associative and $\Delta$ is coassociative. It is easy to check that $\mathfrak{m}$ gives the multiplication $\bullet$ of \eqref{eq:sections multiplication}. 
Since $\Delta$ is co-commutative, a commutative bundle multiplication $m$ induces a commutative multiplication $\mathfrak{m}$ on sections.

To define the unit section in $\Gamma(\UConf(M),\bA)$, instead of considering the Cauchy unit bundle $\bI_\boxtimes$ as a natural sub-bundle of $\bI_\otimes$, with inclusion $\nu$ as in Theorem \ref{thm: VB(Conf(M)) is 2-monoidal}, let us consider the natural (dual) projection $\nu^*:\bI_\otimes \to \bI_\boxtimes$ which is the identity on the zero-point configuration and the zero map on all other configurations. 
Then, the unit section $1$ is the section corresponding to the bundle morphism $u\circ \nu^*:\bI_\otimes \to \bI_\boxtimes \to \bA$, where $u$ is the unit of $\bA$. 
Explicitly, this gives the above section \eqref{eq:sections unit}.
\end{proof}

We note that in the $\otimes$-case, the section spaces are $C^\infty(\UConf(M))$-algebras, while in the $\boxtimes$-case we obtain $\mathbb K$-algebras. Note also that the above procedure works when we have two (compatible) algebraic structures at once:

\begin{lemma} \label{lem:sections Poisson}
If $\bP\to \UConf(M)$ is a Poisson $\boxtimes$-algebra over $\UConf(M)$ with bracket $\{\ ,\ \}_{\bP}$, then its space of smooth sections $\Gamma(\UConf(M),\bP)$ is a Poisson algebra, with Poisson bracket 
\begin{align} \label{eq:sections bracket}
 \{\phi_1,\phi_2\}(\underline{x}) &= \sum_{\underline{x}= \underline{x}'\sqcup \underline{x}''} \{\phi_1(\underline{x}'),\phi_2(\underline{x}'')\}_\bP 
\end{align}
for any sections $\phi_1$, $\phi_2$ and any configuration $\underline{x}$. 
\end{lemma}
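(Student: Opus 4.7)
The plan is to verify antisymmetry, the Jacobi identity and the Leibniz rule for the bracket \eqref{eq:sections bracket} by directly unpacking the convolution-style sums over splittings of a configuration $\underline{x}$, and invoking the corresponding fibrewise identities of the bundle bracket $\{\,,\,\}_{\bP}$ from Definition \ref{def:Poisson 2-algebra}. The conceptual framework for this is the same one used in Lemma \ref{lem:sections algebra}: a section $\phi\in \Gamma(\UConf(M),\bP)$ is equivalently a bundle map $\phi:\bI_\otimes\to \bP$, and the bracket of two sections is the convolution
\begin{align*}
\{\phi_1,\phi_2\} \;=\; \{\,,\,\}_\bP \circ (\phi_1\boxtimes \phi_2)\circ \Delta,
\end{align*}
where $\Delta:\bI_\otimes\to \bI_\otimes\boxtimes \bI_\otimes$ is the cocommutative Cauchy comultiplication introduced in the proof of Lemma \ref{lem:sections algebra}. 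The properties on sections then follow from those of $\{\,,\,\}_\bP$ by general nonsense about convolutions with a cocommutative coproduct.

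Concretely, for a configuration $\underline{x}\in \UConf(M)$, antisymmetry of $\{\,,\,\}$ is immediate by relabeling $\underline{x}'\leftrightarrow \underline{x}''$ in the sum over splittings $\underline{x}=\underline{x}'\sqcup \underline{x}''$ and using the fibrewise antisymmetry of $\{\,,\,\}_\bP$. For the Jacobi identity, I would expand
\begin{align*}
\{\phi_1,\{\phi_2,\phi_3\}\}(\underline{x}) \;=\; \sum_{\underline{x}=\underline{x}_1\sqcup\underline{x}_2\sqcup\underline{x}_3} \bigl\{\phi_1(\underline{x}_1),\{\phi_2(\underline{x}_2),\phi_3(\underline{x}_3)\}_\bP\bigr\}_\bP,
\end{align*}
so that summing the three cyclic permutations reduces, term by term over each split into three disjoint configurations, to the fibrewise Jacobi identity of $\{\,,\,\}_\bP$. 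The same strategy handles the Leibniz rule: expanding $\{\phi_1,\phi_2\bullet \phi_3\}(\underline{x})$ with \eqref{eq:sections multiplication} and \eqref{eq:sections bracket} gives a sum over three-fold splittings $\underline{x}=\underline{x}_1\sqcup \underline{x}_2\sqcup \underline{x}_3$ of $\{\phi_1(\underline{x}_1),\phi_2(\underline{x}_2)\bullet_\bP \phi_3(\underline{x}_3)\}_\bP$, on which the $m_\boxtimes$-Leibniz rule of $\bP$ applies termwise; resumming via \eqref{eq:sections multiplication} yields $\{\phi_1,\phi_2\}\bullet \phi_3+\phi_2\bullet \{\phi_1,\phi_3\}$.

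The only mildly subtle point — and the one I would state carefully — is the bookkeeping of the three-fold partition sums: one must observe that iterated bracket-and-multiplication expressions on sections both unfold, via coassociativity of $\Delta$, into sums indexed by \emph{ordered} decompositions $\underline{x}=\underline{x}_1\sqcup \underline{x}_2\sqcup \underline{x}_3$, and that once so unfolded the fibrewise Jacobi/Leibniz identities of Definition \ref{def:Poisson 2-algebra} can be applied summand by summand. Everything else is automatic from the cocommutativity and coassociativity of $\Delta$ and the properties of $\{\,,\,\}_\bP$.
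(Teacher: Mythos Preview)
Your proof is correct and follows essentially the same approach as the paper: both define the bracket on sections as the convolution $\{\,,\,\}_\bP\circ(\phi_1\boxtimes\phi_2)\circ\Delta$ with the cocommutative coproduct $\Delta$ on $\bI_\otimes$, then reduce antisymmetry, Jacobi, and the $m_\boxtimes$-Leibniz rule to the fibrewise identities via sums over (two- and three-fold) splittings of the configuration. The paper's write-up is nearly identical to yours, explicitly displaying the three-fold splitting formulas for $\{\phi_1,\{\phi_2,\phi_3\}\}$ and $\{\phi_1,\phi_2\bullet\phi_3\}$ and invoking cocommutativity and coassociativity of $\Delta$ just as you do.
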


\begin{proof}
By Lemma \ref{lem:sections algebra}, we already know that $\Gamma(\UConf(M),\bP)$ is a commutative algebra. Let us use the same trick as for proving Lemma \ref{lem:sections algebra}: the (non-associative) Poisson bracket on $\bP$ is a bundle map $\{\ ,\ \}_{\bP}:\bP\boxtimes \bP\to \bP$. Then take two sections $\phi_1,\phi_2$ of $\bP$, see them as two bundle maps $\bI_\otimes \to \bP$ and define their bracket as the bundle map $\{\phi_1,\phi_2\}=\{\ ,\ \}_{\bP}\circ(\phi_1\boxtimes \phi_2)\circ \Delta$, i.e. as the composition 
\begin{align*}
\bI_\otimes\overset{\Delta}\longrightarrow \bI_\otimes\boxtimes \bI_\otimes \overset{\phi_1\boxtimes \phi_2}\longrightarrow \bP\boxtimes \bP \overset{\{\ ,\ \}_\bP}\longrightarrow \bP. 
\end{align*}
It is easy to check that this gives the above formula \eqref{eq:sections bracket}.

The Poisson identities for $\{\ ,\ \}$ then follow from those of $\{\ ,\ \}_\bP$ and from two properties of the splitting of configurations: the fact that the sum over splittings $\underline{x}=\underline{x}'\sqcup \underline{x}''$ is the same as the sum over splittings $\underline{x}=\underline{x}''\sqcup \underline{x}'$, and the fact that the sum over repeated splittings is (co)associative, i.e. the sums over splittings $\underline{x}=(\underline{x}'_{(1)}\sqcup \underline{x}'_{(2)})\sqcup \underline{x}''$ and $\underline{x}=\underline{x}'\sqcup (\underline{x}''_{(1)}\sqcup \underline{x}''_{(2)})$ coincide, and can be denoted $\underline{x}=\underline{x}'\sqcup \underline{x}''\sqcup \underline{x}'''$. 
The antisymmetry is then clear. For the Jacobi and the Leibniz rule (with respect to $\bullet$), it suffices to write the explicit result for any three sections $\phi_1$, $\phi_2$, $\phi_3$ and any configuration $\underline{x}$: 
\begin{align*}
 \{\phi_1,\{\phi_2,\phi_3\}\}(\underline{x}) &= \sum_{\underline{x}=\underline{x}'\sqcup \underline{x}''\sqcup \underline{x}'''} \{\phi_1(\underline{x}'),\{\phi_2(\underline{x}''),\phi_3(\underline{x}''')\}_\bP \}_\bP \\ 
 \{\phi_1,\phi_2\bullet \phi_3\}(\underline{x}) & = \sum_{\underline{x}=\underline{x}'\sqcup \underline{x}''\sqcup \underline{x}'''} \{\phi_1(\underline{x}'),\phi_2(\underline{x}'')\bullet_\bP \phi_3(\underline{x}''') \}_\bP. 
\end{align*}
The Jacobi and the Leibniz rules then hold on sections because by assumption they hold on each fibre. 
\end{proof}

We now present an example of a Poisson algebra constructed with the above techniques, which plays a key role in field theory. 

\begin{example} 
Let $E\to M$ be a vector bundle underlying some field theory. Smooth functionals on $\Gamma(M,E)$ give classical off-shell observables, where smooth refers to functional derivatives \cite[p.11]{Gelfand-Fomin-1963}. The space of functionals $C^\infty(\Gamma(M,E))$ is endowed with the usual commutative pointwise multiplication and contains a dense subalgebra which is also closed under the Peierls Poisson bracket determined by a given Lagrangian density $\L:JE\to \Dens_M$ \cite{Peierls-1952} \cite{Rejzner-2016}. 
Here and in the sequel, the jet bundle $JE$ (of a given order) is necessary to account for terms involving the jet prolongation $j\varphi$ of fields, i.e. its partial derivatives. 

We are interested in recovering this Poisson algebra as the space of (distributional) sections of a suitable Poisson algebra bundle $\bP_\L(E)$. We start with the natural interpretation of sections of $(JE)^*\otimes \Dens_M$ with compact support as smooth functionals on $\Gamma(M,E)$ acting by integration, that is, the linear map 
\begin{align*}
 F:\Gamma_c(M,(JE)^*\otimes \Dens_M)\to C^{\infty}(\Gamma(M,E)),\ \phi \otimes \nu \mapsto F_{\phi\otimes \nu} 
\end{align*}
with smooth function $F_{\phi\otimes \nu}$ of the field $\varphi\in \Gamma(M,E)$ given by 
\begin{align*}
 F_{\phi\otimes \nu}(\varphi) &= \int_M \langle \phi(x),j\varphi(x)\rangle \nu(x).
\end{align*}
By considering the usual symmetric algebra $S_\R(\Gamma_c(M,(JE)^*\otimes \Dens_M))$ of real vector spaces (which enjoys the universal property of free commutative algebras), the above integration extends to an algebra map 
$$
S_\R(\Gamma_c(M,(JE)^*\otimes \Dens_M))\to C^\infty(\Gamma(M,E)). 
$$
However the symmetric algebra of sections is not a section space. The settings of Cauchy algebra bundles over $\UConf(M)$ allows us to see $i_*((JE)^*\otimes \Dens_M)$ as a sub-bundle of $\bS^\boxtimes((JE)^*\otimes \Dens_M)$ and therefore gives a natural linear map $\Gamma_c(M,(JE)^*\otimes \Dens_M)\to \Gamma(\UConf(M),\bS^\boxtimes((JE)^*\otimes \Dens_M))$. 
By Lemma \ref{lem:sections algebra}, the latter space is an algebra, therefore we also have an algebra map 
$$
S_\R(\Gamma_c(M,(JE)^*\otimes \Dens_M))\to \Gamma(\UConf(M),\bS^\boxtimes((JE)^*\otimes \Dens_M)). 
$$
The section space $\Gamma(\UConf(M),\bS^\boxtimes((JE)^*\otimes \Dens_M))$ is the first natural candidate to represent observables in field theory, but it is not closed under the Peierls bracket.

As shown in \cite{Peierls-1952}, the Peierls bracket is completely fixed by the causal propagator $\Delta^C_\L$ determined by the Lagrangian $\L$, which is a distribution on $M\times M$ with values in antisymmetric tensors inside $JE\exttens JE$ \cite[Eq. (4.4)]{Rejzner-2016}. 
In \cite{Frabetti-Kravchenko-Ryvkin-2024}, we show that $\Delta^C_\L$ determines a distribution on $\UConf_2(M)$ with values in the bundle of antisymmetric tensors $\Lambda^{c,\boxtimes 2}(JE) \subset JE \boxtimes JE$ (cf. Remark \ref{rem:Cauchy Sigma}), which is isomorphic to the bundle $\Hom(\Lambda^{\boxtimes 2}(JE)^*,\bI_{\otimes,2})$. 
Because of the standard bijection $\Gamma(\UConf_2(M),\Hom(\Lambda^{\boxtimes 2}(JE)^*,\bI_{\otimes,2})) \cong \Mor(\Lambda^{\boxtimes 2}(JE)^*,\bI_{\otimes,2})$, we see that a smooth regular version of the causal propagator is exactly the kernel of a Poisson bracket on the $\boxtimes$-algebra bundle $\bS^\boxtimes S^\otimes (JE)^* \otimes \Dens_{\UConf(M)}$. 
Notice that the Hadamard symmetric bundle $S^\otimes (JE)^*$ is necessary to get scalar values of the propagator supported at $2$-point configurations $\{x,y\}$. If we drop the Hadamard symmetric bundle, the only possible scalar values in the Cauchy symmetric bundle $\bS^\boxtimes (JE)^*$ would be supported at the vacuum. 

The Poisson $\boxtimes$-algebra bundle $\bS^\boxtimes S^\otimes (JE)^* \otimes \Dens_{\UConf(M)}$ is then a good candidate to represent polynomial observables, and we denote it by $\bP_\L(E)$. However, one needs to extend the whole Poisson structure to distributional sections of $\bP_\L(E)$ for Peierls bracket to be  well-defined. 

Moreover, not all (distributional) sections of $\bP_\L(E)$ yield well-defined functions on the space of fields $\Gamma(M,E)$, which carry no restrictions on their support. In particular, requiring compact supports on the sections or distributions of $\bP_\L(E)$ does not solve the problem, since compactly supported sections are not closed under $\boxtimes$-multiplication, cf. Remark \ref{rem:noncompact}. 
The appropriate space of distributional sections of $\bP_\L(E)$, together with the explicit integration map to $C^\infty(\Gamma(M,E))$, will be described in detail in the second part \cite{Frabetti-Kravchenko-Ryvkin-2024} of this work. 
\end{example}

%%%%%%%%%%%%%%%%%%%%%%%%%%%%%%%%%%%%%%%%

\section{Conclusion and outlook}
\label{sec:5}

In this article we have settled the algebraic / geometric basis suitable to describe a Poisson algebra bundle $\bP_\L(E)$ whose (distributional) sections represent multilocal observables of classical relativistic fields $\varphi:M\to E$ ruled by a Lagrangian $\L$. 

The key to achieve this is to consider vector bundles over the (unordered) configuration space $\UConf(M)$, where points in $M$ can be switched, together with a symmetric $2$-monoidal structure given by the usual (Hadamard) tensor product $\otimes$ and a new (Cauchy) tensor product $\boxtimes$ which provides a symmetrized version of the usual external tensor product of vector bundles on $M$ (cf. Theorem \ref{thm:Cauchy vs external}). 

The algebra structure of the bundle $\bP_\L(E)$ is that of a free algebra with respect to the Cauchy tensor product, and reflects the fact that multilocal observables are polynomial functions on some generators. It turns out that $\bP_L(E)=\bS^\boxtimes\big(S^\otimes(JE)^*\otimes \Dens_M\big)$, where the bundle $S^\otimes(JE)^*\otimes \Dens_M$ represents local observables. 

The Poisson structure of the bundle $\bP_L(E)$ is induced by a bilinear and antisymmetric kernel $k:(JE)^*\boxtimes (JE)^*\to \K$ designed on the causal propagator laying at the core of Peierls bracket in relativistic field theory. The generators on which this kernel applies, namely the bundle $(JE)^*$, are not those of the algebra structure: the extension of the Poisson bracket from its kernel on $(JE)^*$ to the bundle $S^\otimes(JE)^*\otimes \Dens_M$ called for the new notion of a Poisson $2$-algebra bundle, where a new adapted $\otimes$-Leibniz rule is required in addition to the standard $\boxtimes$-Leibniz one. 

The bundle setup to describe multilocal observables is achieved. The goal of the companion paper \cite{Frabetti-Kravchenko-Ryvkin-2024} is to describe the suitable space of distributional sections of $\bP_\L(E)$ which admits a (well-defined) linear map 
$$
F:\D'_{int}(\UConf(M),\bP_\L(E))\to C^\infty(\Gamma(M,E)).
$$ 
On the dense subset of regular distributions $\phi \otimes \nu$ with values in $\bP_\L(E)$, the observable $F_{\phi\otimes \nu}$ assigns to a field $\varphi:M\to E$ the value 
\begin{align*}
 F_{\phi\otimes \nu}(\varphi) = \int_{\UConf(M)} \langle \phi(\underline{x}), e(j\varphi)(\underline{x}) \rangle \nu(\underline{x}), 
\end{align*}
where $e(j\varphi)$ is an \emph{ad-hoc} exponential section of the sub-bundle $\bSigma^\boxtimes \hat{\Sigma}^\otimes(JE)$ of symmetric Cauchy and Hadamard tensors (cf. Remarks \ref{rem:Hadamard Sigma} and \ref{rem:Cauchy Sigma}), induced by the jet prolongation of $\varphi$. 
This is only possible because $\Dens_{\UConf(M)} \cong \bS^\boxtimes(\Dens_M)$ (cf. Theorem \ref{thm:Dens_Conf(M)=S(Dens_M)}), and therefore 
$$
\bP_\L(E) = \bS^\boxtimes\big(S^\otimes (JE)^*\otimes \Dens_M\big) \cong \bS^\boxtimes S^\otimes (JE)^* \otimes \Dens_{\UConf(M)} \cong \big(\bSigma^\boxtimes \hat{\Sigma}^\otimes(JE)\big)^\vee. 
$$

%%%%%%%%%%%%%%%%%%%%%%%%%%%%%%%%%%%%%%%%%%%%%%%%%%%%%%%%%%%%%%
\addcontentsline{toc}{section}{References}
\bibliographystyle{plain} 
\bibliography{literature}

\end{document}